\def\thxPACS{Funded by the Deutsche Forschungsgemeinschaft (DFG, German
	Research Foundation), project ``Parameterized Algorithmics in Computational Sustainability (PACS)''---FL~1247/1-1, 522475669.}

\def\HUaffil{Humboldt-Universität zu Berlin,
	Department of Computer Science, Algorithm Engineering Group, Germany}
\def\TForcid{0000-0003-2203-4386}

\def\thxMPMS{Funded by the Deutsche Forschungsgemeinschaft (DFG, German Research Foundation), project COMSOC-MPMS (grant agreement No.\ 465371386).}
\def\TUCaffil{Technische Universität Clausthal,
	Department of Computer Science, Foundations of Algorithms and Artificial Intelligence, Germany}
\def\EDorcid{0000-0003-0431-2733}
\def\RBorcid{0000-0002-6303-6276}

\documentclass[a4paper,UKenglish,cleveref, autoref, thm-restate]{lipics-v2021}

\pdfoutput=1 %
\hideLIPIcs  %

\newcommand{\mytitle}{Sustainable and Envy-free Multi-Period Allocation}

\title{\mytitle} %

\author{Robert Bredereck}{\TUCaffil}{robert.bredereck@tu-clausthal.de}{https://orcid.org/\RBorcid}{}

\author{Eva Deltl}{\TUCaffil}{eva.deltl@tu-clausthal.de}{https://orcid.org/\EDorcid}{\thxMPMS}

\author{Till Fluschnik}{\HUaffil}{till.fluschnik@hu-berlin.de}{https://orcid.org/\TForcid}{\thxPACS}%

\authorrunning{R.~Bredereck, E.~Deltl and T.~Fluschnik} %

\Copyright{Robert Bredereck, Eva Deltl and Till Fluschnik} %

\ccsdesc[500]{Theory of computation~Parameterized complexity and exact algorithms}
\ccsdesc[500]{Theory of computation~Algorithmic game theory and mechanism design}

\category{} %

\relatedversion{} %

\nolinenumbers %

\usepackage{mathrsfs}
\usepackage{enumitem}
\usepackage{amsthm}
\usepackage[super]{nth}
\usepackage[toc,page]{appendix}
\usepackage{listings}
\usepackage{mathtools}
\usepackage{color}
\usepackage{colortbl}
\usepackage{tabularx}
\usepackage{amsmath}
\usepackage{csquotes}
\usepackage{multicol}
\usepackage{fancyvrb}
\usepackage{makecell}
\usepackage{dsfont}
\usepackage{xargs}
\newcommandx{\set}[2][1=]{\if\relax#1\relax\ensuremath{[#2]}\else\ensuremath{[#1,#2]}\fi}
\newcommand{\xset}[3]{\ensuremath{\{#1_{#2},\dots,#1_{#3}\}}}
\newcommand{\xtpl}[3]{\ensuremath{(#1_{#2},\dots,#1_{#3})}}

\usepackage{mdframed}
\usepackage[toc,page]{appendix}
\usepackage{mathdots}
\usepackage{xspace}
\usepackage{empheq}
\usepackage{arydshln}
\usepackage{nicefrac}
\usepackage{dsfont}
\usepackage{paralist}
\usepackage{booktabs}

\usepackage[ruled, vlined, linesnumbered]{algorithm2e}

\usepackage{array}
\newcolumntype{L}[1]{>{\raggedright\let\newline\\\arraybackslash\hspace{0pt}}m{#1}}
\newcolumntype{C}[1]{>{\centering\let\newline\\\arraybackslash\hspace{0pt}}m{#1}}
\newcolumntype{R}[1]{>{\raggedleft\let\newline\\\arraybackslash\hspace{0pt}}m{#1}}

\usepackage{algpseudocode}

\usepackage{tikz}
\usetikzlibrary{automata, positioning, calc, shapes, arrows, fit, patterns.meta}
\usetikzlibrary{decorations.pathreplacing}

\DeclareMathOperator{\poly}{poly}

\makeatletter
\def\mk@font#1#2{%
	\expandafter\def\csname mk#1\endcsname##1{%
		\expandafter\gdef\csname #1##1\endcsname{\ensuremath{#2{##1}}}%
	}%
}
\mk@font{cal}{\mathcal} %
\mk@font{bf}{\mathbf} %
\mk@font{rm}{\mathrm} %
\@tfor\L:=A,B,C,D,E,F,G,H,I,J,K,L,M,N,O,P,Q,R,S,T,U,V,W,X,Y,Z\do{%
	\expandafter\mkcal\L %
	\expandafter\mkbf\L %
	\expandafter\mkrm\L %
}
\@tfor\L:=a,b,c,d,e,f,g,h,i,j,k,l,m,n,o,p,q,r,s,t,u,v,w,x,y,z\do{%
	\expandafter\mkbf\L %
	\expandafter\mkrm\L %
}
\makeatother

\newcommand{\N}{\mathbb{N}}
\newcommand{\Q}{\mathbb{Q}}
\newcommand{\Z}{\mathbb{Z}}
\newcommand{\Qnn}{\Q_{\geq 0}}
\newcommand{\Qpos}{\Q_{+}}

\newcommand{\Nzero}{\N_0}

\newcommand{\cocl}[1]{\textrm{#1}}
\newcommand{\classP}{\cocl{P}}
\newcommand{\NP}{\cocl{NP}}
\newcommand{\W}[1]{\cocl{W}[#1]}
\newcommand{\XP}{\cocl{XP}}
\newcommand{\FPT}{\cocl{FPT}}
\newcommand{\coNP}{\cocl{coNP}}
\newcommand{\cpoly}{\cocl{poly}}
\newcommand{\NPincoNPslashpoly}{\ensuremath{\NP\subseteq\coNP/\cpoly}}

\newcommand{\UnlessPK}{Unless \NPincoNPslashpoly}

\newcommand{\prob}[1]{\textnormal{\textsc{#1}}}
\newcommand{\wpb}{when parameterized by}

\newcommand{\false}{\bot}
\newcommand{\true}{\top}
\newcommand{\yes}{\texttt{yes}}
\newcommand{\no}{\texttt{no}}
\newcommand{\RD}{$(\Rightarrow)\quad$}
\newcommand{\LD}{$(\Leftarrow)\quad$}

\newcommand{\cif}{\text{if~}}

\newcommand{\etal}{et al.}

\newcommand{\eps}{\varepsilon}
\newcommand{\ceq}{\coloneqq}
\newcommand{\Indic}{\ensuremath{\mathds{1}}}

\newcommand{\fish}{\textsc{Sustainable Envy-Free Multi-Period Allocation}}

\newcommandx{\xassignment}[1][1=]{\if\relax#1\relax period assignment\else#1-period assignment\fi}

\newcommandx{\decprob}[6][3=Input,5=Question]{\vspace{0.125em}
	\begin{samepage}
		\begingroup
		\label{DEACTIVATEDprob:#2}{
			{\noindent \textsc{#1}}}
		\nopagebreak[4]\nopagebreak[4]
		\par\noindent\hangindent=\parindent\textbf{#3}:  #4\nopagebreak[4]
		\par\noindent\hangindent=\parindent\textbf{#5}:  #6
		\par\medskip\endgroup
	\end{samepage}
}

\usepackage{thmtools}

\crefname{theorem}{Theorem}{Theorems}
\Crefname{theorem}{Thm.}{Thms.}
\newtheorem{rrule}{Reduction Rule}
\crefname{observation}{Observation}{Observations}

\declaretheorem[style=definition,name=Construction,qed=$\diamond$]{construction}
\declaretheorem[style=definition,name=Construction,sibling=construction]{construction*}
\crefname{construction}{Construction}{Constructions}
\crefname{construction*}{Construction}{Constructions}

\crefname{algocf}{Algorithm}{Algorithms}

\colorlet{colorFPT}{green!50}
\colorlet{colorPNPH}{red!50!white}
\colorlet{colorXP}{yellow!90!black}
\colorlet{colorWH}{orange!50!white}

\newcommand{\tikzpreamble}{%
    \tikzstyle{param}=[
        minimum width=5*\xr em,
        minimum height=1.75*\yr em,
        rounded corners,
        draw
    ]

    \tikzstyle{colPNPH}=[fill=colorPNPH]
    \tikzstyle{colXP}=[fill=colorXP]
    \tikzstyle{colWH}=[fill=colorWH]
    \tikzstyle{colFPT}=[fill=colorFPT]

    \tikzstyle{colPNPHXPtr}=[
        preaction={colPNPH},
        pattern={Lines[
            angle=45,
            distance=6pt,
            line width=2pt
        ]},
        pattern color=colorXP
    ]

    \tikzstyle{colXPstar}=[
        preaction={colXP},
        pattern={Lines[
            angle=45,
            distance=6pt,
            line width=3pt
        ]},
        pattern color=white
    ]

    \tikzstyle{colXPWH}=[
        preaction={colXP},
        pattern={Lines[
            angle=90,
            distance=6pt,
            line width=3pt
        ]},
        pattern color=colorWH
    ]

    \tikzstyle{colFPTPK}=[
        preaction={colFPT},
        pattern={Lines[
            angle=90,
            distance=6pt,
            line width=1.5pt
        ]},
        pattern color=green!20!white
    ]

    \tikzstyle{colFPTNOPK}=[
        preaction={colFPT},
        pattern={Lines[
            angle=0,
            distance=6pt,
            line width=1.5pt
        ]},
        pattern color=green!60!black
    ]

    \tikzstyle{colNOPK}=[
        preaction={fill=white},
        pattern={Lines[
            angle=0,
            distance=6pt,
            line width=3pt
        ]},
        pattern color=green!60!black
    ]
}

\newcommand{\appsymb}{$\bigstar$}
\newcommand{\appref}[1]{{\hyperref[proof:#1]{\appsymb}}}

\usepackage{etoolbox}

\newcommand{\appendixproof}[2]{%
	\gappto{\appendixProofText}
		{
			\subsection{Proof of \cref{#1}}\label{proof:#1}
			#2
		}
}

\keywords{computational sustainability, fair allocation, envy-freeness, renewable resources, multi-period allocation, parameterized complexity}

\newcommand{\BibTeX}{\rm B\kern-.05em{\sc i\kern-.025em b}\kern-.08em\TeX}

\renewcommand{\r}{r}
\newcommand{\rSet}{R}
\newcommand{\iq}{s}
\newcommand{\lb}{\ell}
\newcommand{\ub}{u}
\newcommand{\rf}{\rho}
\newcommand{\q}{q}
\newcommand{\val}{v}

\newcommand{\nonwasteful}{approval-restricted}
\newcommand{\wasteful}{approval-unrestricted}
\newcommand{\NonWasteful}{Approval-Restricted}

\EventEditors{John Q. Open and Joan R. Access}
\EventNoEds{2}
\EventLongTitle{42nd Conference on Very Important Topics (CVIT 2016)}
\EventShortTitle{CVIT 2016}
\EventAcronym{CVIT}
\EventYear{2016}
\EventDate{December 24--27, 2016}
\EventLocation{Little Whinging, United Kingdom}
\EventLogo{}
\SeriesVolume{42}
\ArticleNo{23}

\begin{document}

\maketitle

\begin{abstract}
	Allocating renewable resources fairly requires accounting for how present decisions affect future availability. We introduce \emph{Sustainable and Envy-Free Multi-Period Allocation}, in which agents are assigned over multiple periods to renewable resources whose quantities evolve over time according to renewal functions. An allocation must respect viability and capacity constraints while being envy-free over the whole time horizon. We study the computational and parameterized complexity of finding such allocations, with a focus on the three dimensions of the problem: the numbers of agents, resources, and periods.
	We show that finding such an allocation is NP-hard already for unary encoding and any single one of them being fixed to a constant.
	Whether two small dimensions suffice for tractability depends on the pair,
	the class of renewal functions,
	and the numerical encoding.
	For few resources and periods,
	binary-encoded instances remain hard in general,
	whereas unary encoding yields XP-time algorithms.
	For few resources and agents,
	binary-encoded instances also remain hard in general
	but become XP-time solvable when all renewal functions are translations.
	Few agents and periods yield fixed-parameter tractability,
	but no polynomial kernel under standard complexity assumptions.
\end{abstract}

\section{Introduction}

Fair division classically assumes a fixed pool of goods:
the items are given in advance,
and an allocation only decides who receives what~\cite{AmanatidisEtAl2023FairDivisionSurvey}.
This remains true in recent models of multi-period fair division~\cite{IgarashiEtAl2024RepeatedFairAllocation,ElkindEtAl2025TemporalFairDivision,CooksonEbadianShah2025TemporalFairDivision,CaragiannisNarang2024RepeatedMatching},
where the goods of each round are exogenous.
Many resources,
however,
are \emph{renewable}:
they form a stock that regenerates over time,
so that what is available tomorrow depends on how much is used today.
When such resources are allocated over multiple periods,
the decisions of different periods are coupled through the dynamics of the stocks,
and the supply of later periods is shaped by the allocation itself.
We ask how this single feature---renewal---changes the algorithmics of fair allocation.%

To isolate this question,
we introduce a deliberately minimal model,
\fish{},
with $m$~resources,
$n$~agents,
and $\tau$~periods.
Each resource has a starting quantity,
a lower viability bound and an upper capacity bound,
and a renewal function;
we focus on affine renewal functions~$x\mapsto\alpha x+\beta$.
In every period,
each agent may be assigned to one resource or remain unassigned.
Each agent has a weight,
capturing heterogeneous resource usage,
for example due to differences
in demand or harvesting capacity.
An assignment to a resource reduces its current quantity according to the weights of the assigned agents,
after which the resource evolves according to its renewal function.
An allocation is \emph{sustainable} if
it never extracts more than the currently available quantity (\emph{no over-extraction}),
never lets a quantity exceed its upper bound (\emph{upper capacity}),
does not harvest a resource whose quantity is below its lower bound (\emph{viability protection}),
and leaves every resource viable at the end (\emph{terminal viability}).
At the same time,
we require fairness across periods:
agents approve subsets of the resources,
each agent evaluates the sequence of resources assigned to her,
and an allocation is \emph{envy-free} if no agent would prefer the sequence of assignments received by another agent.
The resulting computational problem asks whether there exists a multi-period assignment
that is simultaneously sustainable and envy-free
(see \cref{sec:prelims:model} for the formal definition).

For instance,
in the fisheries interpretation,
an agent represents a fisher and an assignment gives the fisher access to a particular fish stock in a given period;
the fisher's weight reflects her harvesting capacity.
Harvesting reduces the stock,
which subsequently renews before the next period.
A stock that falls below its viability threshold is temporarily closed to harvesting,
while the upper bound captures capacity constraints such as overpopulation that may destabilize the ecosystem.
Fairness, in contrast, is evaluated over the complete sequence of access decisions rather than separately in every period.
The same framework applies equally to managed stocks such as fish farms
and to other renewable resources such as grazing land or managed wildlife populations.

\subparagraph{A minimal model of fair allocation with renewable resources.}
Our guiding question is a technical one:
which component of the model is responsible for which algorithmic effect?
To this end,
we deliberately keep the model simple,
aiming to isolate the algorithmic interaction between fair allocation,
sustainability constraints,
and renewal dynamics.
On the side of fair allocation,
we keep the simplest setting:
agents merely approve or disapprove resources,
and fairness is envy-freeness without any relaxation;
hence,
none of our hardness results relies on rich preferences.
On the side of renewal,
the dynamics are kept equally simple.
For a single period,
renewal acts only once,
through the bounds that the next quantity must respect,
so that the problem is a one-shot fair allocation problem with bounds on the harvest from each resource;
over multiple periods,
already translations~$x\mapsto x+\beta$,
arguably the simplest non-trivial renewal functions,
suffice for hardness. %
Finally,
the upper capacity bounds are essential:
since harvesting never increases future quantities,
without them the empty allocation---which is trivially envy-free---would be sustainable
whenever any allocation is.
With them,
sustainability may enforce harvesting,
envy-freeness propagates enforced assignments to further agents,
and whether an assignment is feasible depends,
through the renewal functions,
on the entire allocation history.
We therefore regard \fish{} as the basic problem of fair allocation with renewable resources,
rather than as a model of a particular application.
Within this model,
we vary the agent weights,
the restriction to approved resources,
the number of periods,
the encoding of the numbers,
and the class of renewal functions,
together with the three dimensions---agents, resources, and periods---and their combinations.

\subsection{Our Contributions}
We provide a parameterized complexity analysis of our problem regarding
the number~$n$ of agents,
$m$ of resources,
and $\tau$ of periods,
as well as their combinations.
Thereby,
we distinguish between unary and binary encoding,
as well as different classes of affine renewal functions,
with a focus on translations.
\Cref{fig:Overview} summarizes our results.
\begin{figure}
	\centering
	\begin{tikzpicture}
		\def\xr{1.5}
		\def\yr{1}
		\tikzpreamble{}

		\newcommand{\resultnode}[7]{%
			\node (#1) at (#2*\xr ,#3)
				[very thick, #4, draw,
				rounded corners=2pt,
				minimum width=1.8*\xr cm,
				minimum height=1.15cm,
				inner sep=3pt] {};

			\node at (#1.center)
				[#5, draw,
				rounded corners=1pt,
				minimum width=1.3*\xr cm,
				minimum height=0.7cm,
				align=center,
				inner sep=2pt]
				{{\boldmath \(#6\)}\\[-1mm]
				{\scriptsize \Cref{#7}}};
		}

		\newcommand{\splitnode}[9]{%
			\node (#1) at (#2*\xr ,#3)
			[draw, very thick,
			rounded corners=2pt,
			minimum width=1.8*\xr cm,
			minimum height=1.15cm,
			inner sep=0pt] {};

			\begin{scope}
				\clip[rounded corners=2pt] (#1.south west) rectangle (#1.north east);
				\fill[#4] (#1.west) rectangle (#1.north east);
				\fill[#5] (#1.south west) rectangle (#1.east);
			\end{scope}

			\draw[black, very thick, rounded corners=2pt]
				(#1.south west) rectangle (#1.north east);

			\node (x#1) at (#1.center)
				[draw=none,
				rounded corners=1pt,
				minimum width=1.3*\xr cm,
				minimum height=0.7cm,
				inner sep=2pt] {};

			\begin{scope}
				\clip[rounded corners=1pt] (x#1.south west) rectangle (x#1.north east);
				\fill[#6] (x#1.west) rectangle (x#1.north east);
				\fill[#7] (x#1.south west) rectangle (x#1.east);
			\end{scope}

			\draw[black, rounded corners=1pt] (x#1.south west) rectangle (x#1.north east);

			\node at (#1.center)
				[align=center, inner sep=0pt]
				{{\boldmath #8}\\[-1mm]
				{\scriptsize \Cref{#9}}};
		}

		\resultnode{n}{0}{0.5}{colPNPH}{colPNPH}{n}{thm:n:bin:pnph,thm:n:unary:pnph}
		\resultnode{m}{-2.25}{0.5}{colPNPH}{colPNPH}{m}{thm:mtau:weak,thm:whardmtau}
		\resultnode{tau}{2.25}{0.5}{colPNPH}{colPNPH}{\tau}{thm:mtau:weak,thm:whardmtau}

		\splitnode{nm}{-2.25}{2}{colPNPHXPtr}{white}{colPNPHXPtr}{colXPstar}{$n+m$}{thm:nm:translations,thm:nm:pnph,thm:nm:integralslope}
		\resultnode{ntau}{2.25}{2}{colNOPK}{colFPT}{n+\tau}{thm:ntau:nopk,thm:ntau-fpt}
		\splitnode{mtau}{0}{2}{colPNPH}{colWH}{colPNPH}{colXP}{$m+\tau$}{thm:mtau:weak,thm:whardmtau,thm:mtau:unary:XP}

		\draw[->, thick] (n) -- (nm);
		\draw[->, thick] (m) -- (nm);
		\draw[->, thick] (m) -- (mtau);
		\draw[->, thick] (tau) -- (mtau);
		\draw[->, thick] (n) -- (ntau);
		\draw[->, thick] (tau) -- (ntau);

		\node at (3.4*\xr,2.4*\yr+0.05*\yr)[colPNPH,inner sep=4pt,label=0:{para-\NP-hard}]{};
		\node at (3.4*\xr ,2.*\yr+0.05*\yr)[colPNPHXPtr,inner sep=4pt,label=0:{para-\NP-hard, $\XP_{\text{tr}}$}]{};
		\node at (3.4*\xr ,1.6*\yr+0.05*\yr)[colWH,inner sep=4pt,label=0:{\W{1}-hard}]{};
		\node at (3.4*\xr ,1.2*\yr+0.05*\yr)[colXP,inner sep=4pt,label=0:{XP}]{};
		\node at (3.4*\xr ,0.8*\yr+0.05*\yr)[colXPstar,inner sep=4pt,label=0:{$\XP_{\alpha\in\N}$}]{};
	  \node at (3.4*\xr ,0.4*\yr+0.05*\yr)[colFPT,inner sep=4pt,label=0:{FPT}]{};
	  \node at (3.4*\xr ,0*\yr+0.05*\yr)[colNOPK,inner sep=4pt,label=0:{No PK}]{};
	\end{tikzpicture}
	\caption{Overview of our parameterized complexity results.
	For each parameter,
	the inner and outer boxes correspond to the best complexity upper and lower bound
	we established
	(white means that the classification is unknown).
	For nodes split horizontally,
	the upper color corresponds to binary
	encoding and the lower color to unary encoding.
	$\XP_{\text{tr}}$ means \XP{} when all renewal functions are translations,
	and $\XP_{\alpha\in\N}$ means \XP{} when all renewal functions have integral slope.}
	\label{fig:Overview}
\end{figure}
Our results give an almost complete classification and show that the problem is remarkably intractable.
Even for a single period,
the problem is \NP-hard with agents of only two different weights (\cref{thm:t1:hardness}).
This result is tight since with unit weights,
the problem becomes polynomial-time solvable (\cref{thm:poly-t1w1}),
as long as agents are assigned only to resources they approve.
Over several periods,
however,
unit weights no longer help (\cref{thm:unaryweight}).
Among all single dimensions and their pairs,
only for~$n+\tau$ we establish~fixed-parameter tractability
(\cref{thm:ntau-fpt}),
which we complement with an incompressibility result~(\cref{thm:ntau:nopk}).

We particularly highlight the following results.
First, the problem is \NP-hard already for three agents,
even with unary encoding and all renewal functions being translations (\cref{thm:n:unary:pnph}).
Second,
under binary encoding, hardness holds even for one agent
(\cref{thm:n:bin:pnph}),
so sustainability alone is already computationally hard.
The problem remains \NP-hard for five agents and three resources
(\cref{thm:nm:pnph}),
and for one period and two resources or one resource and two periods
(\cref{thm:mtau:weak}).
We identify two different complexity boundaries for~$m+\tau$ and~$n+m$:
\begin{compactitem}
 \item For~$m+\tau$,
the binary encoding is crucial for the hardness,
as with unary encoding the problem is in \XP{} (\cref{thm:mtau:unary:XP})
and hence becomes polynomial-time solvable whenever~$m+\tau$ is constant.
In other words,
to encode a hard problem into ours with few resources and periods,
we need numbers that are large relative to the instance size.
Moreover,
the dependence of the polynomial degree on~$m+\tau$ is unavoidable
under standard assumptions,
as the problem is \W{1}-hard (\cref{thm:whardmtau}).
 \item For~$n+m$,
the hardness breaks as soon as all renewal functions are translations:
then,
the problem becomes polynomial-time solvable whenever~$n+m$ is constant,
even under binary encoding (\cref{thm:nm:translations});
under unary encoding,
this extends to all renewal functions with integral slope (\cref{thm:nm:integralslope}).
The crucial property is that,
for a translation,
the quantity at any period depends only on the cumulative harvest up to this period,
whereas for slope~$\alpha\neq1$ it also depends on \emph{when} the harvests took place.
Our hardness proof exploits exactly this by encoding the digits of large numbers in the timing of harvests.
Thus, to encode a hard problem into ours with few agents and resources,
there must exist at least one resource whose quantity depends on its period-by-period harvesting history.
In contrast,
when considering the number of agents alone,
the problem remains \NP-hard already for three agents,
even under unary encoding and with all renewal functions being translations
(\cref{thm:n:unary:pnph}). There, the construction uses their order-independence to time all events of a resource solely by the amount harvested from it.
\end{compactitem}
The only cases left open concern~$n+m$ under unary encoding:
whether the problem is \FPT{} or \W{1}-hard,
and whether the \XP-algorithm extends
to renewal functions with rational~slopes. For results marked~\appsymb,
we give proof ideas in the main text and full proofs in the appendix.

\subsection{Related Work}

\subparagraph{Motivational background.}
Computational sustainability~\cite{GomesEtAl2019ComputationalSustainability,Gomes2010FrontiersComputationalSustainability,LaessigKerstingMorik2016ComputationalSustainability} is a branch of computer and information science devoted to problems arising from sustainability sciences.
Allocating renewable resources in a fair way is one of those problems,
studied already in 1979 by
Sobel~\cite{Sobel1979FairAllocations},
nowadays linked to bioeconomics~\cite{Clark2010MathematicalBioeconomics}.
Fisheries are a recurring concrete application and motivation for fair allocation of renewable resources.
Such tasks are studied through stochastic processes~\cite{KormanEtAl2020ReplenishingResources},
including Markov decision processes~\cite{ErmonEtAl2010PlayingGames,ErmonEtAl2011RiskSensitive}.

Allee effects~\cite{CourchampBerecGascoigne2008AlleeEffects} describe criticality for species at small population size.
Hence,
viability becomes crucial for conservation programs.
In part,
such criticality depends on the population's renewal function,
often modeled such that the current stock determines the population size of the subsequent period.
Such functions are often difficult to determine~\cite{SubbeyEtAl2014StockRecruitment},
and are mostly more complex than affine functions.
Yet,
under several assumptions,
conservation programs may handle even a priori unknown renewal functions~\cite{Clarke1984UnknownRenewalFunction}.
Still,
the rapid pace of human harvesting
poses a severe threat to fish populations and species~\cite{DulvySadovyReynolds2003}. Beyond biological constraints, the management of renewable resources also
depends on human behavior.
Threshold effects can shape cooperation in collective-resource dilemmas
\cite{cooperation}, while altruistic behavior has also been studied in the
context of environmental conservation~\cite{altruism}.

\subparagraph{Related models.}
As part of bioeconomics,
fairness is crucial when it comes to controlling and managing harvesting of natural resources by human entities.
Among many,
envy-freeness is one established concept of fairness~\cite{Varian1974EquityEnvyEfficiency}.
As envy-freeness is often difficult to achieve,
relaxations of it are studied,
such as proportionality and envy-freeness up to one item;
Amanatidis~\etal~\cite{AmanatidisEtAl2023FairDivisionSurvey} survey such relaxations in the context of fair division of indivisible items.
Since we restrict to binary agent valuations,
related to our work is the stream on envy-free matchings in graphs
\cite{AIGNERHOREV2022164}.
Envy-free matching problems with lower quotas~\cite{Yokoi2017EnvyFreeLowerQuotas,krishnaa2022}
model the requirement that some objects be assigned a minimum number of agents;
this resembles our capacity constraints. However, these models consider justified envy, which takes preferences on both sides into account.
In our model, resources have no preferences over agents,
and envy depends only on the agents' approvals.
Notably, finding a maximum-cardinality matching with lower and upper quotas,
the latter resembling our no-over-extraction constraint,
is \NP-hard when objects may remain unused and lower quotas
apply only to used objects~\cite{ArulselvanEtAl2018LowerQuotas}.
Boehmer and Heeger~\cite{BoehmerH22} study stable house allocations
with lower and upper quotas and one-sided preferences.
They show that this problem is \FPT{} in the number of houses.

There is a vast literature on single-period matching and fair allocation,
and in the recent years,
these problems are also studied over multiple periods.
Igarashi~\etal~\cite{IgarashiEtAl2024RepeatedFairAllocation} studied multi-period fair allocation of indivisible items.
Among other results,
they show that an envy-free allocation always exists if the number of periods is a multiple of the number of agents,
but may fail to exist otherwise.
Elkind~\etal~\cite{ElkindEtAl2025TemporalFairDivision} study multi-period fair division of indivisible items
and introduce the notion of \emph{temporal} envy-freeness up to one item.
Cookson~\etal~\cite{CooksonEbadianShah2025TemporalFairDivision} study multi-period fair division
with two different fairness angles:
one fairness criterion per period,
and one over all periods.
Caragiannis and Narang~\cite{CaragiannisNarang2024RepeatedMatching} study multi-period one-to-one matchings
with envy-freeness up to one item.
While we study static binary valuations,
their valuations may change over time.
They prove \NP-hardness results.

To the best of our knowledge,
the computational and parameterized complexity of fair multi-period allocation
of renewable resources, as initiated by Sobel~\cite{Sobel1979FairAllocations},
or a closely related model has not previously been studied.

\subparagraph{Complexity comparisons.}
Bliem~\etal~\cite{BliemBN16} study the problem of allocating resources to agents
in a Pareto-efficient and envy-free way.
They study different preference profiles and show that for additive utilities when parameterized by the number of agents,
their problem is para-\NP-hard under binary encoding but \XP{} under unary encoding.
Our para-\NP-hardness for~$n$ holds even under unary encoding.
Elkind~\etal~\cite{ElkindEtAl2025TemporalFairDivision}
show that deciding whether a temporal envy-free allocation up to one item exists
is \NP-hard already for three agents,
whereas such an allocation always exists and can be computed in polynomial time for two agents or two periods.
In contrast,
our problem is \NP-hard already for one period.
Moreover,
they show that when additionally demanding Pareto-optimality,
the problem becomes \NP-hard for two agents.
Their hardness proofs require binary encoding,
whereas our problem is \NP-hard for three agents already under unary encoding (\cref{thm:n:unary:pnph}).

Bredereck~\etal~\cite{BredereckFK22} study a temporal voting problem,
called multistage voting,
where consecutive committees should either agree or disagree on common candidates.
They prove that already for two voters,
their problem is \NP-hard.
However,
when combining any two of the number of agents, of candidates, and of time steps,
the problem becomes fixed-parameter tractable.
Our problem,
in contrast,
remains para-\NP-hard for~$n+m$ and~$m+\tau$ under binary encoding,
and is \W{1}-hard for~$m+\tau$ under unary encoding.

For temporal graphs, the number of vertices is most naturally compared to~$n+m$ in our model
and
often yields \FPT{} or \XP{} classifications.
We discuss two relevant exceptions~\cite{FroeseKZ24,ErlebachMSW24}.
Froese~\etal~\cite{FroeseKZ24} study the problem of covering
all temporal edges with at most~$k$ size-$\ell$ intervals for each vertex.
They show that the problem is \W{1}-hard yet \XP{} when parameterized by the number of vertices.
For our problem under unary encoding,
we did not find a parameterized lower or upper bound for~$n+m$.
Erlebach~\etal~\cite{ErlebachMSW24} study the cops and robber game on edge-periodic graphs,
which are temporal graphs that allow for a compact representation due to periodicity even for large lifetimes.
They prove that deciding whether a cop-winning strategy exists is \W{1}-hard when parameterized by the number of vertices.
For their proof,
they exploit that the periodic representation allows lifetimes exponential in the input size.
Our problem is even para-\NP-hard with respect to~$n+m$ under binary encoding.

\section{Preliminaries}\label{sec:prelims:model}

We denote by~$\N$ and~$\Nzero$ the natural numbers excluding and including zero,
respectively.
We denote by~$\Q$ the rational numbers,
and by~$\Qnn$ all nonnegative and by~$\Qpos$ all positive rational numbers.
For~$m,n \in \N$ with $m\leq n$,
we denote by~$\set[m]{n}$ the set~$\{m,\ldots,n\}$ and $\set{n}$ for short when~$m=1$.

\smallskip\noindent
\textbf{Our Model and Problem.}
We denote by $\rSet=\xset{\r}{1}{m}$ the set of resources.
A resource~$\r_i=(\iq_i,\lb_i,\ub_i,\rf_i)$ consists of a starting quantity $\iq_i \in \Qnn$,
a lower viability bound $\lb_i \in \Qnn$,
an upper capacity bound $\ub_i \in \Qnn$,
and a renewal function $\rf_i \colon \Q \to \Q$.
In this work,
we focus on affine renewal functions,
i.e.,
$\rf_i(x)=\alpha_i\cdot x+\beta_i$,
with~$\alpha_i\in\Qpos$ and~$\beta_i\in\Q$.
Affine functions capture,
in a simplified way,
two key features of renewal functions:
a quantity dependence through~$\alpha$
(e.g., birthrate),
and an external influence~$\beta$
(e.g., migration).
We say that an affine function $\alpha\cdot x+ \beta$ has integral slope if and only if~$\alpha\in\N$.
The special case of $x+\beta$,
i.e., with $\alpha=1$,
is called a \emph{translation}.
Translations naturally model resources with fixed quantity-independent change.
An example is a water reservoir with approximately fixed inflow,
where the lower bound represents a minimum operational water level and
the upper bound the reservoir's capacity.

We denote by~$A = \xset{a}{1}{n}$ the set of \emph{agents},
where each agent is associated with a weight $w_a \in \N$
and
a valuation function $v_a \colon \rSet \to \Nzero$ extended by $v_a(\emptyset)\ceq 0$.
In this work,
we focus on \emph{binary} valuations,
i.e.,
$\val_a(\r) \in \{0,1\}$,
and say that $a$ \emph{approves} resource $\r$ if $v_a(\r) = 1$.
Thus,
we can represent the agents' approvals by an \emph{approval} graph~$G=(X_\rSet\uplus X_A,E)$
which is
a bipartite graph with parts~$X_\rSet=\{x_\r\mid \r\in \rSet\}$ and~$X_A=\{x_a\mid a\in A\}$
such that it contains the edge~$\{x_a,x_\r\}$ if and only if~$a$ approves~$\r$.

For a number $\tau\in\N$ of periods,
we define a \xassignment[$\tau$] $\pi = \xtpl{\pi}{1}{\tau}$ with assignments~$\pi_t\colon A \to \rSet\cup\{\emptyset\}$ for all $t\in \set{\tau}$.
An assignment is called \emph{\nonwasteful}
if $\pi_t(a)=\r_j$ implies $v_a(\r_j)=1$
for all agents, resources, and periods.
In the \wasteful{} setting, assignments to unapproved resources
are also allowed.
Unless stated otherwise, assignments are \nonwasteful.
Given a \xassignment[$\tau$] $\pi$,
we define the harvest of resource~$\r_i$ at period~$t$ as
$h_i^\pi(t)\coloneqq \sum_{a\in \pi_t^{-1}(\r_i)} w_a$.
We define the quantity~$\q^{\pi}_{i} \colon \set{\tau+1} \to \Q$
of each resource $\r_i$ recursively by
\[
\q^{\pi}_{i}(t) =
\begin{cases}
\iq_i & \text{if } t = 1, \\
\rf_i\left( \q^{\pi}_{i}(t - 1) - h_i^\pi(t-1) \right) & \text{if } t > 1.
\end{cases}
\]

We say that a \xassignment[$\tau$] $\pi = (\pi_1,\dots,\pi_\tau)$ is \emph{sustainable}
if the following constraints are satisfied for every resource~$\r_i\in \rSet$:
\smallskip
\begin{compactenum}[(I)]
		\item \textbf{No over-extraction:} For every period~$t\in\set{\tau}$, we have $h_i^\pi(t)\le \q_i^\pi(t)$.\label{model:noe}
        \item \textbf{Upper capacity:}  For every period~$t \in \set{\tau+1}$, we have $\q_i^{\pi}(t)\leq \ub_i$.\label{model:uc}
        \item \textbf{Viability protection:} For every period~$t\in\set{\tau}$,
        we have $\q_i^\pi(t) < \lb_i\implies \pi_t^{-1}(\r_i) = \emptyset$.\label{model:vp}
        \item \textbf{Terminal viability:} We have $\q_i^{\pi}(\tau+1)\geq \lb_i$.\label{model:tv}
\end{compactenum}
\smallskip
We allow a resource quantity to drop to zero.
We interpret this as temporary exhaustion of the quantity available within the managed system,
rather than irreversible depletion of the underlying resource.
Thus,
a resource may recover from zero.
Before applying any of our algorithms,
we reject any instance with~$\iq_j>\ub_j$ for some resource~$\r_j$, since this already violates the capacity constraint in period~$1$.
Given an \xassignment[$\tau$]~$\pi$,
we say that a resource~$\r_i$ is \emph{urgent} in period~$t\in\set{\tau}$ if its quantity would exceed its upper bound in period~$t+1$ if not harvested in period~$t$,
i.e.,
$\rf_i(q_i^\pi(t))>\ub_i$.
We call a resource~$\r_i\in \rSet$ \emph{uncritical} if for every \xassignment[$\tau$]~$\pi$
that assigns no agent to~$\r_i$,
it holds that
$0\leq \q_{i}^{\pi}(t) \leq \ub_i$ for all $t\in\set{\tau+1}$
and~$\lb_i\leq \q_{i}^{\pi}(\tau+1)$;
otherwise, we call the resource \emph{critical}.

A sustainable \xassignment[$\tau$] is \emph{envy-free} if
for all pairs of agents~$a, a' \in A$ we have that~$\sum_{t=1}^\tau v_a(\pi_t(a)) \geq \sum_{t=1}^\tau v_{a}(\pi_t(a'))$.
We introduce the following decision problem:

\smallskip
\decprob{\fish}{find}{
	A set $\rSet$ of resources, a set $A$ of agents, and a number $\tau$ of periods.
}{Is there a \xassignment[$\tau$] $\pi$ of $A$ to $\rSet$ that is sustainable and envy-free?}

As a convention throughout the paper,
the number $\tau$ of periods is always encoded in unary.
For affine renewal functions, this implies that \fish{} is in \NP{},
since a~$\tau$-period assignment has polynomial size
and all constraints can be verified in polynomial time.
In particular,
exact rational quantities under affine renewal functions have polynomial binary-encoding length for a unary horizon.
Thus,
we refer to unary- and binary-encoded instances
when all agent- and resource-specific numerical parameters are encoded in unary
and binary,
respectively.
We assume rational numbers to be represented as~$p/q$ with~$p\in \Z$, $q\in\N$.
Unless stated otherwise,
these numerical parameters are encoded in binary.

\smallskip\noindent
\textbf{Parameterized Complexity.} We adopt standard notation from parameterized complexity and algorithm design~\cite{CFKL15,Nie06,FG04}.
A \emph{parameterized problem} is a subset $L\subseteq\Sigma^*\times\N$, where $\Sigma$ is a finite alphabet. The integer $\kappa$ is called the \emph{parameter} of an instance $(x,\kappa)$. The problem $L$ is \emph{fixed-parameter tractable} (FPT) if there exists an algorithm that decides whether $(x,\kappa)\in L$ in time $f(\kappa)\cdot |x|^{O(1)}$, where $f$ is a computable function depending only on $\kappa$. The class XP consists of parameterized problems that can be solved in time $|x|^{f(\kappa)}$ for some computable function $f$. \W{1}-hard problems are believed not to be in FPT. A problem is \emph{para-NP-hard} with respect to a parameter $\kappa$ if it remains NP-hard when $\kappa$ is fixed to a constant.
A \emph{kernelization} for~$L$ is an algorithm transforming an instance $(x,\kappa)$ in polynomial time into an instance $(x',\kappa')$ such that $(x,\kappa)\in L \iff (x',\kappa')\in L$ and $|x'|+\kappa'\leq f(\kappa)$ for some computable function~$f$ of~$\kappa$.
A decidable problem is \FPT{} if and only if it admits a kernelization.
If $f$ is polynomial,
we say that the problem admits a \emph{polynomial kernel}.

\section{Single-Period Allocation: Weights and Approval Restrictions Matter}

We first study the single-period case with~$\tau=1$.
This isolates the interaction between sustainability and envy-freeness from renewals over multiple periods.
First,
we show that our problem is polynomial-time solvable for unit-weight agents.
Then,
we show that
agents of different weights or allowing \wasteful{} assignments
make the problem~\NP-hard.

\begin{theorem}[\appref{thm:poly-t1w1}]
	\label{thm:poly-t1w1}
	\fish{} is polynomial-time solvable when every agent has weight one and $\tau=1$.
\end{theorem}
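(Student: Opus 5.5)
We use that for $\tau=1$, unit weights and approval-restricted assignments, envy-freeness becomes a purely combinatorial condition. This turns the problem into deciding whether a suitable set of resources can be ``opened''. An agent's value is $1$ if she is assigned and $0$ otherwise. Agent~$a$ envies~$a'$ exactly if $a$ is unassigned and $a'$ is assigned to a resource that $a$ approves. So $\pi$ is envy-free iff no unassigned agent approves a used resource.

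\textbf{Step 1: harvest intervals.} Since $\alpha_i>0$, $\rf_i$ is increasing, and constraints (I)--(IV) for a single resource with integer harvest $h$ translate into simple bounds. These are $h\le \iq_i$; $\rf_i(\iq_i-h)\le \ub_i$, a lower bound on $h$; $\rf_i(\iq_i-h)\ge \lb_i$, an upper bound on $h$; and $h=0$ whenever $\iq_i<\lb_i$. This yields an integer interval $[L_i,H_i]$ of admissible harvests, computable in polynomial time. If $0\notin[L_i,H_i]$, the resource is urgent ($L_i\ge 1$), or it is infeasible and we reject.

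\textbf{Step 2: reformulation.} I claim a solution exists iff there is a set $C\subseteq\rSet$ with the following properties. First, $C$ contains all urgent resources. Second, every agent in $N(C)$, the set of agents approving some resource of $C$, can be assigned to an approved resource in~$C$. Third, each $\r_i\in C$ receives a harvest in $[L_i,H_i]$, and agents outside $N(C)$ stay unassigned.

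For the forward direction, take $C$ as the set of used resources. For the backward direction, resources of~$C$ with zero harvest are simply unused, and every unassigned agent lies outside $N(C)$, so she approves no used resource. For fixed~$C$, the feasibility test is a flow with lower and upper bounds. The network has source $\to$ agents of $N(C)$ with capacity exactly $1$, agents $\to$ approved resources in~$C$, and resources $\to$ sink with bounds $[L_i,H_i]$. This is solvable in polynomial time.

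\textbf{Step 3: finding $C$ by elimination.} This is the main obstacle, since there are exponentially many candidate sets. I would mimic the alternating-path technique for envy-free matchings~\cite{AIGNERHOREV2022164}. Start with $C$ as the set of all resources admitting a nonzero harvest or being uncritical, and compute a flow in which the resource lower bounds are enforced and agent saturation is maximised. If every agent of $N(C)$ is saturated, accept. Otherwise, let $a$ be an unsaturated agent. Remove from $C$ every resource reachable from~$a$ by an alternating path in the residual network. If an urgent resource is removed, reject. Then repeat.

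The key lemma is that no feasible $C'\subseteq C$ contains a removed resource. This would be shown by the usual Hall-violator argument: the removed resources together with the agents reachable from~$a$ form a set whose approvers exceed the remaining residual capacity. Any $C'$ containing one of them would contain, through approvals, all of them, and would therefore need to saturate too many agents. Each round removes at least one resource, so the procedure terminates after at most $m$ flow computations, giving polynomial time.
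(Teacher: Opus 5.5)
Your reformulation in Step~2 is correct, and the elimination scheme in Step~3 does decide the problem. However, the argument you sketch for the key lemma relies on a false claim: a set $C'$ that contains one resource reachable from $a$ need not contain the others ``through approvals''. Approval only forces the \emph{approvers} of resources in $C'$ to be assigned somewhere in $C'$; it forces no further resource into $C'$. For instance, take agents $a,x_1,x_2$ and resources $y_1,y_2$ with $H_{y_1}=H_{y_2}=1$, where $a$ approves $y_1$, $x_1$ approves $y_1,y_2$, and $x_2$ approves $y_2$. The maximum flow $x_1\to y_1$, $x_2\to y_2$ leaves $a$ unsaturated, and both $y_1$ and $y_2$ are reachable from $a$. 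The set $C'=\{y_2\}$ does not force $y_1$ in. It fails only because $y_2$ cannot absorb both $x_1$ and $x_2$. Hence the single global inequality $|X_Z|>\sum_{r\in Y_Z}H_r$, for the reachable agents $X_Z$ and reachable resources $Y_Z$, does not suffice. You need a violation for every nonempty subset $Y'\subseteq Y_Z$.

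This is the surplus property behind the envy-free matching decomposition, and it follows from an alternating search tree rooted at $a$. First, every resource of $C$ approved by an agent of $X_Z$ lies in $Y_Z$. Second, by maximality every $r\in Y_Z$ is full, so exactly $\sum_{r\in Y'}H_r$ agents of $X_Z$ are sent into $Y'$. Third, consider a resource of $Y'$ of minimum depth. Its tree parent is either $a$ itself or is sent to a resource of smaller depth, hence to a resource outside $Y'$. Thus $|N(Y')\cap X_Z|\ge 1+\sum_{r\in Y'}H_r$. Now suppose a feasible $C'\subseteq C$ had $Y'=C'\cap Y_Z\neq\emptyset$. Then all these agents would have to be assigned, and by the first fact only into $Y'$, which exceeds its capacity. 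With this lemma, and rejecting when the lower bounds alone cannot be met, your invariant ``every feasible set lies in $C$'' holds and the algorithm is correct.

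Compared with the paper, the routes differ in how they handle capacities and saturation. The paper clones each resource into $\overline{m}_j$ copies and applies the decomposition of Aigner-Horev and Segal-Halevi once as a black box. It then needs their extension lemma to justify requiring all of $X_L$ to be matched, before solving a single circulation. Your closed-set formulation avoids both the cloning and the extension step. The price is that you must prove the subset-wise surplus property yourself, and that is precisely the missing piece.
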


\begin{proof}[Proof idea of \cref{thm:poly-t1w1}]
Since all weights are one, sustainability gives a lower and
an upper bound on the number of agents assigned to each resource.
We compute these intervals and reject if any is empty.
We then replace each resource by a number of copies equal to its upper bound on the number of assigned agents, capped at~$|A|$, and connect agents to copies of approved resources.
Assignments now correspond to matchings.

The envy-free matching decomposition of Aigner-Horev and Segal-Halevi~\cite{AIGNERHOREV2022164}
identifies a subgraph containing every envy-free matching.
The retained subgraph admits a matching assigning all its agents,
and every such matching is envy-free in the original graph.
Note that any envy-free matching can be extended to such a matching
by only adding edges: take a matching assigning all retained agents and consider
its edges incident to currently unassigned agents.
By envy-freeness,
their approved resources are also unassigned,
so these edges can be added.
Adding edges preserves the lower bounds on the number of agents assigned to each resource, while the number of copies enforces the corresponding upper bounds.
Thus, we may require every remaining agent to be matched.

All copies of a resource have the same neighborhood, so the decomposition keeps either all of them or none.
If an excluded resource requires an assignment, we reject.
Otherwise, we use a flow network with lower and upper bounds to
determine whether all remaining agents can be assigned to approved resources while respecting the feasible intervals.
Such an assignment is sustainable and envy-free,
and the above argument ensures that no solution is lost.
All steps take polynomial time.
\end{proof}

\appendixproof{thm:poly-t1w1}{
\begin{proof} Given an instance $I'=(\rSet,A,1)$ with $w_a=1$ for all $a\in A$, observe that a sustainable envy-free assignment corresponds exactly to an envy-free matching respecting all lower and upper bounds. Indeed, since $\tau=1$ and valuations are binary, under a \nonwasteful{} assignment an agent~$a$ envies another agent~$a'$ if and only if $a$ is unassigned and $a'$ is assigned to a resource approved by~$a$. This is precisely the envy-free matching condition in the corresponding graph~\cite{AIGNERHOREV2022164}.

	Our algorithm proceeds in three steps.

	We first compute, for every resource $\r_j$, the feasible range for the
	number of assigned agents. For an assignment $\pi$ to be sustainable, it must hold that $\lb_j \le \rf_j(\iq_j-h^\pi_j(1)) \le \ub_j$.
	Moreover, $0\le h^\pi_j(1)\le \min\{|A|,\lfloor\iq_j\rfloor\}$. If $\iq_j<\lb_j$, then $h^\pi_j(1)=0$ (viability in period~1). As $\rf_j$ is affine, the set of integers $h^\pi_j(1)$ satisfying these inequalities is an interval. We can
	therefore compute in polynomial time its smallest and largest elements, denoted by $\underline{m}_j$ and $\overline{m}_j$. If this interval is
	empty for some $\r_j$, the instance is infeasible.

	We then construct an uncapacitated bipartite graph $G=(X\cup Y,E)$. The left side corresponds to the set of agents, that is, $X=A$. The right side is constructed as follows: For every resource $\r_j$, we create $\overline{m}_j$ copies of $\r_j$. If $v_a(\r_j)=1$, we add an edge between $a$ and every copy of $\r_j$.

	Let $(X_S,X_L,Y_S,Y_L)$ denote the EFM decomposition of $G$, as defined by~\cite{AIGNERHOREV2022164}. It can be computed in polynomial time by first computing a maximum matching and then considering alternating reachability from unmatched vertices in $X$. We use the following properties of the decomposition:

	\smallskip\noindent
	\begin{compactenum}
		\item every envy-free matching is contained in $G[X_L,Y_L]$;
		\item $G[X_L,Y_L]$ admits a matching saturating $X_L$;
		\item every matching in $G[X_L,Y_L]$ that saturates $X_L$ is envy-free.
	\end{compactenum}

	\smallskip
	For every resource $\r_j$, let $C_j$ denote its set of $\overline{m}_j$ copies. Since all copies of $\r_j$ have identical neighborhoods, they all belong to the same part of the EFM decomposition. Indeed, suppose that some copy of $\r_j$ belongs to $Y_S$. Then it has a neighbor in $X_S$. Since every copy of $\r_j$ has the same neighborhood and there are no edges between $X_S$ and $Y_L$, all copies of $\r_j$ must belong to $Y_S$. Hence,	$C_j\subseteq Y_S$
	or $C_j\subseteq Y_L$.
	Consequently, if $\underline{m}_j>0$ and $C_j\subseteq Y_S$ for some $\r_j$, then no feasible envy-free matching exists, since every envy-free matching is contained in $G[X_L,Y_L]$ and therefore cannot use any clone of $\r_j$.

	Assume now that $C_j\subseteq Y_L$ for every resource $\r_j$ with $\underline{m}_j>0$. We construct a feasible-circulation instance on the agents
	in $X_L$ and the corresponding original resources, with arcs
	\begin{align}
		s &\to a && [1,1]
		&& \forall a\in X_L, \\
		a &\to \r_j && [0,1]
		&& \forall a\in X_L \text{ with } v_a(\r_j)=1
		\text{ and } C_j\subseteq Y_L, \\
		\r_j &\to t && [\underline{m}_j,\overline{m}_j]
		&& \forall \r_j \text{ with } C_j\subseteq Y_L, \\
		t &\to s && [|X_L|,|X_L|]. &&
	\end{align}

	By flow conservation, the flow on the arc $\r_j\to t$ equals the number
	of agents assigned to $\r_j$. Hence, the bounds on this arc enforce $\underline{m}_j\le h^\pi_j(1) \le \overline{m}_j$.

	If the circulation is feasible, integrality yields an integral flow and
	thus an assignment that matches every agent in $X_L$ to an approved resource whose copies lie in $Y_L$. Equivalently, this induces a matching in
	$G[X_L,Y_L]$ that saturates $X_L$. By the properties of the EFM
	decomposition, this matching is envy-free, and the bounds on the arcs
	$\r_j\to t$ ensure that all lower and upper bounds are satisfied.

	Conversely, suppose that a sustainable envy-free assignment exists. Since $\tau=1$, this corresponds to an envy-free matching that respects all lower and upper bounds. By the EFM decomposition, it is contained in $G[X_L,Y_L]$. Moreover, every envy-free matching can be extended, by only adding edges, to an envy-free matching saturating $X_L$~\cite[Lemma~6.1]{AIGNERHOREV2022164}.
	Since this extension only adds assignments, all lower bounds remain satisfied, while the upper bounds remain satisfied because each resource $\r_j$ has only $\overline{m}_j$ copies. Hence, there exists an envy-free matching saturating $X_L$ respecting the bounds, which induces a feasible circulation in the network above.
\end{proof}}

Next we show that the restriction to
unit weights and the assumption of \nonwasteful{} assignments
are essential here:
dropping either yields \NP-hardness.

\begin{theorem}[\appref{thm:t1:hardness}]
\label{thm:t1:hardness}
\fish{} is \NP-hard
even if
$\tau=1$,
each agent approves at most two resources,
and
\begin{inparaenum}[(i)]
    \item weights are in~$\{1,3\}$;
    \item all agents have weight one and assignments may be \wasteful.
\end{inparaenum}
\end{theorem}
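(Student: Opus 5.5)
We reduce from \prob{Exact Cover by 3-Sets} (X3C): given a ground set~$X$ with $|X|=3q$ and a family~$\mathcal{C}$ of 3-element subsets, decide whether some $q$ sets of~$\mathcal{C}$ partition~$X$. The basic tool is a resource with an \emph{exact harvest target}. For $\tau=1$, take $\rf(x)=x$, $\iq=T$ and $\lb=\ub=0$. Then no over-extraction gives $h\le T$, and upper capacity forces $T-h\le 0$, so exactly weight~$T$ must be harvested. Terminal viability ($T-h\ge 0$) and viability protection ($T\ge 0$) hold automatically. A second tool for envy: if every agent is assigned to an approved resource, then every agent has value~$1\ge v_a(\pi_1(a'))$ for all~$a'$, so the assignment is envy-free. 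Hence it suffices to choose targets whose sum equals the total agent weight. Then every sustainable assignment assigns all agents, and in the \nonwasteful{} setting it is automatically envy-free.

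\emph{Part (i).} We build the following resources and agents.
\begin{compactitem}
\item For each set $S\in\mathcal{C}$, a resource~$r_S$ with target~$3$.
\item For each element $x\in X$, a resource~$r_x$ with target~$1$.
\item A garbage resource~$g$ with target $3(|\mathcal{C}|-q)$.
\item For each $S$ and each $x\in S$, a weight-one agent $a_{S,x}$ approving $r_S$ and~$r_x$.
\item For each $S$, a weight-three agent $b_S$ approving $r_S$ and~$g$.
\end{compactitem}
Each agent approves two resources. The total weight is $3|\mathcal{C}|+3|\mathcal{C}|=6|\mathcal{C}|$, and the targets sum to $3|\mathcal{C}|+3q+3(|\mathcal{C}|-q)=6|\mathcal{C}|$, so every agent must be assigned. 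At~$r_S$, exactly weight~$3$ is harvested. Since only $b_S$ (weight~$3$) and the three $a_{S,x}$ (weight~$1$ each) approve~$r_S$, either $b_S$ alone goes to~$r_S$, or all three $a_{S,x}$ go there and $b_S$ goes to~$g$.

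Call $S$ \emph{chosen} in the first case: then its three element agents flow to the~$r_x$. The targets~$1$ at the~$r_x$ say that every element is covered by exactly one chosen set. The target at~$g$ says that exactly $|\mathcal{C}|-q$ sets are unchosen. So solutions correspond exactly to exact covers. The converse direction builds the assignment directly from an exact cover, and envy-freeness follows from the observation above.

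\emph{Part (ii).} Here all weights are~$1$ and \wasteful{} assignments are allowed. The plan is to replace~$b_S$ by three unit agents $b_{S,1},b_{S,2},b_{S,3}$, each approving $r_S$ and~$g$, and to keep the rest of the construction. The trouble is that the counting argument no longer forces the all-or-nothing behaviour at~$r_S$: for instance, two $a$-agents and one $b$-agent could both go to~$r_S$. I expect this to be the main obstacle. The first thing I would try is to exploit envy together with wastefulness.

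\begin{compactitem}
\item A $b_{S,i}$ that is not on an approved resource has value~$0$. It therefore envies anyone holding $r_S$ or~$g$, so all~$b_{S,i}$ must effectively sit on approved resources.
\item Every mixed configuration at~$r_S$ should leave some agent either unassigned or on an unapproved resource, while a resource it approves is held by someone else, which is envy.
\item Wastefulness should serve only as a place to park dummy agents that approve nothing. Such agents never envy and are never envied, so they are free filler that lets us make the targets tight.
\end{compactitem}
If this does not directly rule out mixing, I would fall back on splitting~$r_S$ into three unit-target resources $r_{S,1},r_{S,2},r_{S,3}$, arranged so that each~$b_{S,i}$ is tied to~$a_{S,x_i}$. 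I would then argue by a parity or propagation argument that envy-freeness makes the choices for $S$ uniform across its three elements. The remaining correctness argument follows part~(i).
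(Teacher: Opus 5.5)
\textbf{Part~(ii): a genuine gap.} You give no construction for part~(ii), and the directions you sketch cannot succeed inside your tight-target framework.

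\begin{itemize}
\item If every resource has a positive exact target, every resource is occupied. With $\tau=1$, any agent who approves some resource but ends with value~$0$ then envies its occupant. So envy-freeness forces every agent with a nonempty approval set onto an approved resource, and agents approving nothing never envy.
\item With unit weights, what remains is a bipartite flow problem with lower and upper bounds per resource. It is polynomial-time solvable, just as in \cref{thm:poly-t1w1}. Hence instances of this shape cannot carry a hardness proof, whether you add dummy fillers, split~$r_S$ into unit-target resources, or argue by parity.
\item Concretely, with unit agents $b_{S,1},b_{S,2},b_{S,3}$, you can place $b_{S,1},a_{S,x_1},a_{S,x_2}$ on~$r_S$ and keep every agent on an approved resource. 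So your second bullet fails, and the modified instance is a \yes-instance as soon as every element lies in some set.
\item Dummies can also be envied: a dummy on~$r$ is envied by every value-$0$ agent who approves~$r$.
\end{itemize}

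The missing idea is to let wastefulness carry the combinatorics. Genuine agents must be pushed onto a critical resource they do \emph{not} approve. Each of them then has value~$0$, so envy-freeness requires \emph{all} of her approved resources to stay empty. This conjunctive condition needs resources that may legitimately remain empty, which tight targets exclude.

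The paper reduces from \prob{Clique}. Each vertex agent approves her own uncritical vertex resource and a critical~$r_0$ that must receive exactly~$k$ agents, so at most~$k$ vertex resources stay empty. A resource~$r^\star$ approved by nobody must receive exactly~$\binom{k}{2}$ agents. Only edge agents whose two endpoint resources are both empty can sit there without envy, which forces exactly~$k$ empty vertex resources spanning a clique.

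\textbf{Part~(i): correct after one fix, by a different route.} Your target gadget is infeasible from the start. Upper capacity is imposed also at $t=1$, i.e., $\iq\le\ub$, but you set $\iq=T>0=\ub$. Use $\iq=\ub=T$, $\lb=0$, $\rf(x)=x+T$ instead, which is the paper's gadget for~$r_0$. Then no over-extraction gives $h\le T$ and $2T-h\le T$ gives $h\ge T$, so $h=T$.

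With this fix your X3C reduction goes through. The total target equals the total weight, so every agent sits on an approved resource. Each~$r_S$ is all-or-nothing since $3+1>3$, and the unit targets at the~$r_x$ enforce an exact cover.

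The paper instead reduces from \prob{Independent Set} on cubic graphs. It uses a single critical resource to divert exactly $|V|-k$ weight-$3$ vertex agents, and the envy of weight-$1$ edge agents to enforce independence. Your reduction never uses envy-freeness, since it holds automatically. It therefore shows that already deciding whether a sustainable assignment exists is \NP-hard for $\tau=1$ and weights in~$\{1,3\}$, at the price of many critical resources. Exactly this reliance on sustainability alone is what breaks down under unit weights in part~(ii).
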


For~(i), we give a polynomial-time many-one reduction from
\prob{Independent Set}, which remains \NP-complete
on cubic graphs~\cite{DBLP:books/fm/GareyJ79,MOHAR2001102}.
Given a cubic graph~$G=(V,E)$ and an integer~$k$,
the problem asks whether $G$ has~$k$ pairwise nonadjacent vertices.

\begin{proof}[Proof idea of \cref{thm:t1:hardness} (i)]
For each vertex, construct an agent of weight~$3$
and a resource with initial quantity~$3$.
For each edge, construct an agent of weight~$1$
approving the resources corresponding to its endpoints.
Add one critical resource,
approved by all vertex agents,
to which exactly~$|V|-k$ vertex agents must be assigned for sustainability.
To avoid envy,
the remaining~$k$ vertex agents
must be assigned to their corresponding resources.
These resources cannot receive any further agents.
An edge between two such vertices would hence have
both its approved resources fully harvested,
causing unavoidable envy.
Thus,
these~$k$ vertices form an independent set.
\end{proof}

For~(ii),
we give a polynomial-time many-one reduction from
\prob{Clique}~\cite{DBLP:books/fm/GareyJ79}, where
given a graph~$G=(V,E)$ and an integer~$k$,
the problem asks whether there is a set of at least~$k$
pairwise adjacent vertices.
Recall that we allow \wasteful{} assignments here.

\begin{proof}[Proof idea of \cref{thm:t1:hardness} (ii)]
Construct one agent and one resource for each vertex,
and one agent for each edge.
All agents have weight one.
Each edge agent approves the resources corresponding to its endpoints.
Each vertex agent approves its corresponding resource and one additional critical resource.
The critical resource must receive exactly~$k$ agents
for sustainability.
Since every vertex agent approves this resource,
every vertex agent must receive an approved resource to avoid envy. Since at most~$k$ vertex agents can be assigned to the critical resource,
all others must be assigned to their corresponding resources.
Hence, at most~$k$ vertex resources remain unassigned.
A final critical resource, which no agent approves, must receive exactly~$\binom{k}{2}$ agents.
These can only be edge agents who are non-envious,
i.e.,
whose two approved vertex resources are unassigned.
There are enough such agents only if exactly~$k$ vertex resources are unassigned and every pair of their corresponding vertices is an edge.
Hence, these vertices form a clique.
\end{proof}

\appendixproof{thm:t1:hardness}
{
\begin{construction}\label{constr:onelevel:twoapprovals}
Let~$I=(G=(V,E),k)$ be an instance of
\prob{Independent Set} on cubic graphs,
and let~$N=|V|$. We may assume that~$1\leq k<N$.
We construct an instance~$I'\coloneqq(\rSet,A,\tau)$
with~$\tau=1$ of \fish{} as follows.

Let~$A=A_V\cup A_E$, where
$A_V=\{a_v\mid v\in V\}$
and
$A_E=\{a_e\mid e\in E\}$.
Every agent from~$A_V$ has weight~$3$,
and every agent from~$A_E$ has weight~$1$.

We have the resources $\rSet=\{\r_v\mid v\in V\}\cup\{\r_0\}$.
For every~$v\in V$, resource~$\r_v$ has
$\iq_v=\ub_v=3$,
$\lb_v=0$,
$\rf_v(x)=x$.
Resource~$\r_0$ has
$\iq_0=\ub_0=3(N-k)$,
$\lb_0=0$,
$\rf_0(x)=x+3(N-k)$.

Every vertex agent~$a_v$ approves exactly~$\r_0$
and~$\r_v$.
Every edge agent~$a_e$ with~$e=\{u,v\}$
approves exactly~$\r_u$ and~$\r_v$.
\end{construction}

\begin{proof}[Proof of \cref{thm:t1:hardness}(i)]
The construction takes polynomial time,
and all numbers are bounded by~$3N$.
Every agent approves exactly two resources.
Moreover, $\r_0$ is the only critical resource:
without harvesting, its quantity exceeds its capacity
after renewal, whereas every resource~$\r_v$
retains quantity~$3$.
We prove that~$I$ is a \yes-instance
if and only if~$I'$ is a \yes-instance.

\RD{}
Let~$S\subseteq V$ be an independent set of size~$k$.
For every~$v\in S$, assign~$a_v$ to~$\r_v$.
For every~$v\in V\setminus S$, assign~$a_v$ to~$\r_0$.
For every edge~$e=\{u,v\}$, assign~$a_e$
to a resource corresponding to an endpoint outside~$S$.
Such an endpoint exists because~$S$ is independent.

Resource~$\r_0$ receives exactly~$N-k$ vertex agents,
so its total harvest is~$3(N-k)$.
Its quantity after renewal is therefore
$\rf_0\bigl(3(N-k)-3(N-k)\bigr)
=3(N-k)=\ub_0$.

For every~$v\in S$, resource~$\r_v$ receives only~$a_v$,
so its total harvest is~$3$ and its final quantity is~$0$.
For every~$v\in V\setminus S$,
resource~$\r_v$ receives only agents corresponding
to edges incident to~$v$.
Since~$G$ is cubic, there are at most three such agents.
Thus, its total harvest is at most~$3$,
and its final quantity lies between~$0$ and~$3$.
Hence, the assignment is sustainable.
Moreover, every agent is assigned to an approved resource,
so the assignment is envy-free.

\LD{}
Let~$\pi$ be a sustainable envy-free assignment.
No over-extraction implies
$h_0^\pi(1)\leq 3(N-k)$.
The quantity of~$\r_0$ after renewal is
$\rf_0\bigl(3(N-k)-h_0^\pi(1)\bigr)
=6(N-k)-h_0^\pi(1)$.
The capacity bound therefore implies
$6(N-k)-h_0^\pi(1)\leq 3(N-k)$,
and hence~$h_0^\pi(1)=3(N-k)$.
Since only vertex agents approve~$\r_0$
and each has weight~$3$,
exactly~$N-k$ vertex agents are assigned to~$\r_0$.

Since~$N-k>0$, resource~$\r_0$ receives at least one agent.
Every vertex agent approves~$\r_0$,
so every vertex agent must be assigned to an approved resource
to avoid envy.
Consequently, each of the remaining~$k$ vertex agents
is assigned to its corresponding resource.
Let
$S\coloneqq\{v\in V\mid \pi_1(a_v)=\r_v\}$.
Then~$|S|=k$.

We claim that~$S$ is independent.
Suppose there is an edge~$e=\{u,v\}$ with~$u,v\in S$.
Resources~$\r_u$ and~$\r_v$ each receive their corresponding
vertex agent of weight~$3$.
Since both resources have initial quantity~$3$,
neither can receive another agent.
Thus, edge agent~$a_e$ cannot be assigned
to either of her approved resources.
However, she values the assignment of~$a_u$ at one,
contradicting envy-freeness.
Therefore, $S$ is an independent set of size~$k$.
\end{proof}

\begin{proof}[Proof of \cref{thm:t1:hardness}(ii)]

Let~$I=(G=(V,E),k)$ be an instance of \prob{Clique},
and let~$N=|V|$.
We may assume that~$2\leq k\leq N$,
since the remaining cases can be decided in polynomial time.

\begin{construction}
We construct an instance~$I'=(\rSet,A,\tau)$ of \fish{}
with~$\tau=1$.

The set of agents is~$A=A_V\cup A_E$, where
$A_V=\{a_v\mid v\in V\}$ and
$A_E=\{b_e\mid e\in E\}$.
Thus, there is one \emph{vertex agent} for every vertex
and one \emph{edge agent} for every edge.
Every agent has weight one.

For every vertex~$v\in V$, introduce a resource
$\r_v=(N,0,N,\rf_v)$, where~$\rf_v(x)=x$.
We additionally introduce two special resources
$\r_0$ and~$\r^\star$, defined by
$\r_0=(k,0,k,\rf_0)$,
with
$\rf_0(x)=x+k$,
and
$\r^\star=
\left(\binom{k}{2},0,\binom{k}{2},\rf_\star\right)$,
with
$\rf_\star(x)=x+\binom{k}{2}$.

Every vertex agent~$a_v$ approves~$\r_0$ and~$\r_v$.
Every edge agent~$b_{\{u,v\}}$ approves~$\r_u$ and~$\r_v$.
This completes the construction.
\end{construction}

The construction takes polynomial time,
and all numerical parameters are polynomially bounded.
Every agent approves exactly two resources and only~$\r_0$ and~$\r^\star$ are critical.

We now prove that~$I$ is a yes-instance of
\prob{Clique} if and only if~$I'$ admits
a sustainable envy-free assignment.

\smallskip
\RD{}
Suppose that~$G$ has a clique~$K$ of size~$k$.
Assign~$a_v$ to~$\r_0$ for every~$v\in K$,
and assign every remaining vertex agent
to its corresponding resource.
For every edge with both endpoints in~$K$,
assign its agent to~$\r^\star$.
Assign every remaining edge agent to a resource
corresponding to an endpoint outside~$K$.
Such an endpoint exists because the edge
does not have both endpoints in~$K$.

Resource~$\r_0$ receives exactly~$k$ agents.
Since~$K$ is a clique, exactly~$\binom{k}{2}$
agents are assigned to~$\r^\star$.
Their quantities after renewal are therefore
$\q_0^\pi(2)=\rf_0(0)=k$ and
$\q_\star^\pi(2)=\rf_\star(0)=\binom{k}{2}$.
Every resource~$\r_v$ with~$v\notin K$
is assigned its vertex agent and at most~$N-1$ edge agents,
while all resources corresponding to vertices in~$K$
remain unassigned.
Hence, no resource is over-extracted,
all lower and upper bounds are respected,
and the assignment is sustainable.

Every agent assigned to a resource other than~$\r^\star$
receives an approved resource.
An edge agent assigned to~$\r^\star$ has zero value,
but both of its approved resources are unassigned.
Such an agent therefore values every other agent's
assignment at zero.
Thus, the assignment is envy-free.

\smallskip
\LD{}
Suppose that~$\pi$ is a sustainable envy-free assignment.
No over-extraction implies~$h_0^\pi(1)\leq k$.
Moreover,
$\q_0^\pi(2)
=\rf_0(k-h_0^\pi(1))
=2k-h_0^\pi(1)$.
The capacity bound requires~$\q_0^\pi(2)\leq k$,
and hence~$h_0^\pi(1)=k$.
Analogously, sustainability forces
$h_\star^\pi(1)=\binom{k}{2}$.

Since~$k>0$, at least one agent is assigned to~$\r_0$.
Every vertex agent approves~$\r_0$,
so every vertex agent must receive an approved resource
to avoid envy.
At most~$k$ vertex agents can be assigned to~$\r_0$.
Consequently, at least~$N-k$ vertex agents are assigned
to their corresponding resources.

Let~$U$ be the set of vertices whose resources receive no agent.
Then~$|U|\leq k$.
Exactly~$\binom{k}{2}$ agents are assigned to~$\r^\star$.
These must be edge agents,
since every vertex agent receives an approved resource.
Moreover, both endpoints of each such edge must belong to~$U$.
Otherwise, its corresponding agent would envy an agent assigned
to an endpoint resource.

There are at most~$\binom{|U|}{2}\leq\binom{k}{2}$
edge agents with both endpoints in~$U$.
Since~$\r^\star$ receives exactly~$\binom{k}{2}$ agents
and~$k\geq2$, we must have~$|U|=k$,
and every pair of vertices in~$U$ must be an edge.
Thus,~$U$ is a clique of size~$k$.
\end{proof}
}

Both hardness reductions use agents approving two resources. In contrast, for \nonwasteful{} assignments with $\tau=1$, \fish{} is polynomial-time solvable if every agent approves exactly one resource. Indeed, every critical resource must be assigned at least one agent, and envy-freeness then forces all agents approving that resource to be assigned to it. All uncritical resources can remain unused, so it suffices to check whether the resulting assignment is sustainable.

\section{Few Resource Usages: Unary Versus Binary Encoding}
\begin{figure}[h!]
 \centering
 \begin{tikzpicture}
  \tikzpreamble{}
  \def\xr{1.75}
  \def\yr{0.5}

  \clip (-1.275*\xr,-1.6*\yr) rectangle (4.7875*\xr,3.2*\yr);

  \foreach \x in {0,1,2,3}{
		\foreach \y in {0,1,2,3}{
			\node (x\x\y) at (\x*\xr,\y*\yr)[inner sep=0pt]{};
		}
	}
	\node at ($(x00)!0.5!(x10)$) [anchor=north,yshift=-1*\yr pt]{$1$};
	\node at ($(x10)!0.5!(x20)$) [anchor=north,yshift=-1*\yr pt]{$2$};
	\node at ($(x20)!0.5!(x30)$) [anchor=north,yshift=-1*\yr pt]{unbd.};
	\node at ($(x00)!0.5!(x01)$) [anchor=east,xshift=-1*\yr pt]{$1$};
	\node at ($(x01)!0.5!(x02)$) [anchor=east,xshift=-1*\yr pt]{$2$};
	\node at ($(x02)!0.5!(x03)$) [anchor=east,xshift=-1*\yr pt]{unbd.};

	\draw[decorate,decoration={brace,amplitude=5pt},color=gray]
  (3*\xr,-0.8*\yr) -- (0*\xr,-0.8*\yr)
  node[midway,below=8*\yr pt,color=black] {$\tau$};

  \draw[decorate,decoration={brace,amplitude=5pt},color=gray]
  (-0.7*\xr,-0*\yr) -- (-0.7*\xr,3*\yr)
  node[midway,left=4*\xr pt,color=black] {$m$};

	\newcommand{\fullUT}[3]{%
		\pgfmathtruncatemacro{\im}{#1-1}%
		\pgfmathtruncatemacro{\jm}{#2-1}%
		\draw[gray]
			(x\im\jm.center) --
			(x#1#2.center) --
			(x\im#2.center) --
			cycle;
		\fill[#3]
			(x\im\jm.center) --
			(x#1#2.center) --
			(x\im#2.center) --
			cycle;
	}

	\newcommand{\fullLT}[3]{%
		\pgfmathtruncatemacro{\im}{#1-1}%
		\pgfmathtruncatemacro{\jm}{#2-1}%
		\draw[gray]
			(x\im\jm.center) --
			(x#1#2.center) --
			(x#1\jm.center) --
			cycle;
		\fill[#3]
			(x\im\jm.center) --
			(x#1#2.center) --
			(x#1\jm.center) --
			cycle;
	}

	\fullUT{1}{1}{colFPT}
	\fullLT{1}{1}{colFPT}

	\foreach\x/\y in {3/1,1/3,3/2,2/3,3/3}{
		\fullUT{\x}{\y}{colPNPH}
		\fullLT{\x}{\y}{colXPWH}
	}
	\foreach\x/\y in {2/1,1/2,2/2}{
		\fullUT{\x}{\y}{colPNPH}
		\fullLT{\x}{\y}{colFPT}
	}

	\draw[color=gray] (3.275*\xr,2*\yr) rectangle (4.375*\xr,3.2*\yr);
	\draw (3.275*\xr,2*\yr) -- node[midway,above=-2pt,sloped,font=\footnotesize]{binary\phantom{xxxxx}}node[midway,below,sloped,font=\footnotesize]{\phantom{xxxx}unary}(4.375*\xr,3.2*\yr);

	\node at (3.275*\xr,1.5*\yr)[colFPT,draw,label=0:{\classP}]{};
	\node at (3.275*\xr,0.75*\yr)[colXPWH,draw,label=0:{\XP, \W{1}-hard}]{};
	\node at (3.275*\xr,0.0*\yr)[colPNPH,draw,label=0:{(para-)\NP-hard}]{};

 \end{tikzpicture}
 \caption{Overview of our results for~$m+\tau$, with values 1, 2, and unbounded for each dimension.
If a dimension is unbounded, entries give the corresponding parameterized complexity.}
 \label{fig:mtau}
\end{figure}
\noindent
We next investigate the upper bound~$m\cdot\tau$ on the number of resource usages,
i.e., the number of pairs~$(\r_i,t)$ such that~$\pi_t^{-1}(\r_i)\neq \emptyset$.
How many resources and periods are needed for hardness?
Our results,
summarized in~\cref{fig:mtau},
give quite different answers to this question depending on the encoding.
For a single period and a single resource,
the problem is trivially polynomial-time solvable:
if the resource is not urgent,
then no agent must be assigned;
otherwise,
to ensure envy-freeness,
all agents approving the single resource must be assigned.
In contrast, we show that
already two resources and one period
or one resource and two periods
suffice for \NP-hardness.
For unary encoding, however,
the problem is polynomial-time solvable for every constant number of periods and resources.

\begin{theorem}[\appref{thm:mtau:weak}]
  \label{thm:mtau:weak}
	\fish{} is \NP-hard,
	even with only
	\begin{inparaenum}[(i)]
		\item one period and two resources;
		\item one resource and two periods.
	\end{inparaenum}
\end{theorem}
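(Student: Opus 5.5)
Both parts will be shown by reductions from a weakly \NP-hard number problem, since with $m$ and $\tau$ constant the hardness must come from the binary encoding of weights and quantities. I would use \prob{Subset Sum} (or \prob{Partition}). Given positive integers $x_1,\dots,x_N$ and a target $T$ with $X=\sum_i x_i$, I create one agent $a_i$ of weight $x_i$ for every number. Sustainability will force a critical resource to be harvested by a total weight of \emph{exactly} $T$. Envy-freeness then has to be reconciled with the fact that not every agent ends up on that resource.

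For~(i), with one period and two resources $\r_1,\r_2$, every agent approves both resources. The resource $\r_1$ gets $\iq_1=\ub_1=T$, $\lb_1=0$ and $\rf_1(x)=x+T$. As in the proof of \cref{thm:t1:hardness}, no over-extraction together with upper capacity forces $h_1^\pi(1)=T$. The resource $\r_2$ gets $\iq_2=\ub_2=X-T$, $\lb_2=0$ and $\rf_2(x)=x+(X-T)$, which forces $h_2^\pi(1)=X-T$. The two harvests sum to $X$, so every agent is assigned to an approved resource and envy-freeness holds trivially. Correctness is then immediate. A subset of weights summing to $T$ gives the assignment to $\r_1$, and its complement is assigned to $\r_2$. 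Conversely, the agents on $\r_1$ form a subset summing to $T$.

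For~(ii), with one resource and two periods, every agent approves the single resource $\r$. Envy-freeness over two periods means that all agents receive the same number of assignments to $\r$, namely $0$, $1$ or $2$. I aim to force exactly one assignment per agent while splitting the agents into period~$1$ and period~$2$ groups with harvests $T$ and $X-T$. A translation alone cannot separate the periods, because by translation invariance only the cumulative harvest matters. I therefore plan to use the slope, or period-specific tightness of the capacity. Take $\rf(x)=x+\beta$ with $\iq=\ub=T$ and $\beta=T$. Upper capacity at period~$2$ forces $h(1)=T$, so $\q(2)=T$. For period~$3$, upper capacity gives $2T-h(2)\le\ub$, so $h(2)\ge T$, and no over-extraction gives $h(2)\le T$. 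This forces equal splits, so instead I reduce from a \prob{Partition} variant, which is natural here. Setting $T=X/2$ forces $h(1)=h(2)=X/2$ with total harvest $X$.

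The main obstacle is excluding the alternatives that envy-freeness allows. "Everyone twice" would need total harvest $2X>X$ and is impossible. "Everyone zero" violates the forced positive harvest. With total harvest exactly $X$, every agent must be harvested exactly once, because equal counts $c$ satisfy $cX=X$. Hence the period-$1$ agents form a partition half. The converse direction is routine: assign one partition half in each period. I would double-check that the terminal viability bound $\lb=0$ and the quantity $\q(3)=T\le\ub$ hold.
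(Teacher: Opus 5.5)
Your proof is correct and follows essentially the paper's approach: agents of weight $x_i$ approving every resource, with upper capacity and no over-extraction pinning the harvests. Your part~(ii) is exactly the paper's construction, up to $\lb=0$ versus $\lb=1$. Note two small points there. First, $h(1)=X/2$ already yields the partition, so the envy-freeness counting argument is unnecessary. Second, your own construction is a translation, which contradicts your aside that translations cannot separate periods; the per-period capacity constraint is what separates them.

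The only real difference is in part~(i). You pin both harvests exactly via $\rf_j(x)=x+\iq_j$, so the reverse direction needs no envy-freeness and \prob{Subset Sum} suffices. The paper instead uses $\rf_j(z)=z+1$, which only forces each resource to be used. It then needs envy-freeness to assign all agents, after which no over-extraction caps each resource at~$T/2$.
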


For both statements,
we give a reduction from the
\NP-hard \prob{Partition} problem,
where, given a multiset~$X=\{x_1,\dots,x_N\}$ of numbers with~$T=\sum_{i=1}^N x_i$, the question is whether there is a subset~$J\subseteq \set{N}$ such that~$\sum_{i\in J} x_i=T/2$.
We can assume that all numbers are positive and even (multiply all by two),
including~$T$.

\begin{proof}[Proof idea of \cref{thm:mtau:weak}]
 For each number~$x_i$ in the input instance,
 construct an agent~$a_i$ of weight~$x_i$.
 For (i),
 construct two resources,
 each of which must be harvested by an amount of exactly~$T/2$.
 For (ii),
 construct a resource that must be harvested by an amount of exactly~$T/2$ in each of the two periods.
 As all agents approve all resources,
 envy-freeness
 and the total-harvest equality ensure a 2-partition with each part of sum~$T/2$
 in either case.
\end{proof}

\appendixproof{thm:mtau:weak}
{
\begin{construction}
\label{constr:1tau} Let~$I=(X)$ be an instance of \prob{Partition}, where $X=\{x_1,\ldots,x_N\}$ and $T=\sum_{x_i \in X} x_i$.
Construct an instance~$I'=(\rSet,A,\tau)$ as follows.

(i) Let~$\tau=1$ and $\rSet=\{\r_1,\r_2\}$.
Let~$A=\{a_1,\dots,a_N\}$, where~$a_i$ has weight~$x_i$.
All agents approve both~$\r_1$ and~$\r_2$.
For both resources~$\r_j$,
we set~$\iq_j=\ub_j=T/2$, $\lb_j=1$, and $\rf_j(z)=z+1$.

(ii) Let~$\tau=2$ and~$\rSet=\{\r\}$.
	Let~$A=\{a_1,\dots,a_N\}$ where~$a_i$ has weight~$x_i$.
	All agents approve~$\r$.
	Resource~$\r$ has~$\iq=\ub=T/2$,
	$\lb=1$ and~$\rf(z)=z+T/2$.
\end{construction}

\begin{proof}[Proof of \cref{thm:mtau:weak}]
(i) \RD{} Let~$J\subseteq \set{N}$ be a solution for~$I$. Assign~$a_i$ to~$\r_1$ in period~$1$ for all~$i\in J$, and assign~$a_i$ to~$\r_2$ in period~$1$ for all~$i\in \set{N}\setminus J$. The total weight assigned to each resource is~$T/2$. Hence, for both resources~$\r_j$, the quantity after the assignments is~$T/2-T/2=0$, and after renewal it is~$\rf_j(0)=1=\lb_j$. Thus, the assignment is sustainable. Moreover, all agents are assigned to an approved resource, and hence no agent envies another. Therefore, this is a solution to~$I'$.

	\LD{} Let~$\pi=(\pi_1)$ be an envy-free sustainable assignment.
	For~$j\in\{1,2\}$, let~$A_j$ be the set of agents assigned to~$\r_j$ in period~$1$. First note that~$A_j$ is nonempty for both~$j\in\{1,2\}$. Indeed, if no agent is assigned to~$\r_j$, then the quantity remains~$\iq_j=\ub_j=T/2$, and after renewal it becomes~$\rf_j(\ub_j)=\ub_j+1>\ub_j$, contradicting the capacity constraint. Since both resources receive at least one assigned agent and all agents approve both resources, envy-freeness implies that every agent is assigned. Thus, $A_1\cup A_2=A$. Let $J=\{i\in\set{N}\mid a_i\in A_1\}$. We claim that~$J$ is a solution for~$I$. Suppose first that~$\sum_{i\in J}x_i>T/2$. Then, the total weight assigned to~$\r_1$ exceeds~$\iq_1=T/2$, contradicting no over-extraction. The same argument applied to~$\r_2$ shows that the agents assigned to~$\r_2$ have total weight at most~$T/2$. Since every agent is assigned and the total weight of all agents is~$T$, it follows that $\sum_{i\in J}x_i=T/2$. Therefore, $J$ is a solution for~$I$.

(ii)
\RD{}
	 Let~$J\subseteq \set{N}$ be a solution for~$I$. Assign~$a_i$ to~$\r$ in period~$1$ for all~$i\in J$, and assign~$a_i$ to~$\r$ in period~$2$ for all~$i\in \set{N}\setminus J$. The total weight assigned to the resource in each period is~$T/2$.
	 Hence, after the assignments in each period, the quantity is~$T/2-T/2=0$,
	 and after renewal it is~$\rf(0)=T/2=\ub$. Thus, the assignment is sustainable. Moreover, every agent is assigned to the approved resource, and hence no agent envies another. Therefore, this is a solution to~$I'$.

	\LD{}
	Let~$\pi=(\pi_1,\pi_2)$ be an envy-free sustainable assignment, and let~$A_1$ be the set of agents assigned to~$\r$ in period~$1$. Let $J=\{i\in\set{N}\mid a_i\in A_1\}$. We claim that~$J$ is a solution for~$I$. Suppose first that~$\sum_{i\in J}x_i<T/2$. Then, after renewal following period~$1$,
	the quantity is $T/2-\sum_{i\in J}x_i+T/2>T/2=\ub$,
	contradicting the capacity constraint.
	On the other hand,
	if~$\sum_{i\in J}x_i>T/2$,
	then
	$T/2-\sum_{i\in J}x_i<0$,
	contradicting no over-extraction.
	Hence, $\sum_{i\in J}x_i=T/2$, and therefore~$J$ is a solution for~$I$.
\end{proof}
}

\begin{theorem}[\appref{thm:whardmtau}]
	\label{thm:whardmtau}
	For unary encoding,
	\fish{} is \W{1}-hard
	\begin{inparaenum}[(i)]
		\item \wpb{}~$m$,
		even if~$\tau=1$;
		\item \wpb{}~$\tau$,
		even if~$m=1$.
	\end{inparaenum}
	In particular,
	it is \NP-hard even if~$m=1$ or~$\tau=1$.
\end{theorem}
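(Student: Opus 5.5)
Both parts follow the template of \cref{thm:mtau:weak}, but we reduce from \prob{Unary Bin Packing} parameterized by the number~$k$ of bins, which is \W{1}-hard (Jansen et al.). We use the variant where bins are filled exactly: given item sizes $x_1,\dots,x_N$ in unary, $k$ bins of capacity~$B$, and $\sum_i x_i=kB$, decide whether the items can be partitioned into $k$ groups each of total size exactly~$B$. Since $\sum_i x_i = kB$, a packing respecting the capacity is automatically exact, so this is the same problem. All numbers below are polynomial in the input, so the reduction works under unary encoding. Setting $k$ as the parameter gives \W{1}-hardness. Because \prob{Unary Bin Packing} is \NP-hard when $k$ is part of the input, the same reductions also yield the ``in particular'' claim for $m=1$ or $\tau=1$.

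For (i), with $\tau=1$, we create one agent $a_i$ of weight~$x_i$ per item and $m=k$ resources, each approved by every agent. Each resource gets $\iq_j=\ub_j=B$, $\lb_j=1$, and $\rf_j(z)=z+1$, exactly as in \cref{constr:1tau}(i). If resource $\r_j$ receives no agent, its quantity after renewal is $B+1>\ub_j$. Hence capacity forces every resource to be harvested. Envy-freeness, together with the fact that all agents approve every resource, then forces every agent to be assigned. No over-extraction bounds each harvest by~$B$. Since the total weight is~$kB$, each resource receives weight exactly~$B$, which is a valid bin packing. Conversely, a packing, with bin~$j$ assigned to~$\r_j$, is sustainable: each final quantity is $\rf_j(0)=1=\lb_j$. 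It is also envy-free, since every agent receives an approved resource.

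For (ii), with $m=1$, we set $\tau=k$ and take one resource $\r$ approved by all agents. It gets $\iq=\ub=B$, $\lb=1$ (or $0$), and $\rf(z)=z+B$, as in \cref{constr:1tau}(ii). The argument is period by period. Suppose that in some period $t$ the harvest is $h_t<B$ while $\q(t)=B$. Then $\q(t+1)=2B-h_t>\ub$, so every period's harvest must be at least $B$. No over-extraction bounds it by $\q(t)=B$, so every harvest is exactly~$B$ and the quantity returns to~$B$. By induction, every period harvests exactly~$B$. The total weight over all periods must equal~$kB$, but agents could in principle be assigned in several periods. This is the main obstacle: in the multi-period setting an agent may be reused, and envy is measured by counts.

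To resolve it, we use envy-freeness in counts. Each agent values another agent's sequence by the number of periods in which that agent is assigned. So all agents must receive the same count~$c$, and $c\ge1$ because harvests are positive. The total harvest is then $c\cdot kB=kB$, which forces $c=1$. Hence the periods partition the agents into $k$ groups of weight~$B$ each, which is a bin packing. The converse direction is immediate. Terminal viability holds because the final quantity is $B\ge\lb$.
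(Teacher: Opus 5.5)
Your proposal is correct and takes essentially the same approach as the paper: both reduce from \prob{Exact Unary Bin Packing}, using one agent per item with $k$ fully approved resources in a single period for (i), and one resource harvested exactly $B$ per period over $k$ periods for (ii). The differences are cosmetic and equally valid. In (i), the paper makes only one resource critical and bounds the per-resource harvest via terminal viability instead of no over-extraction. In (ii), it argues ``every agent is assigned at least once, and the total harvest is $kB$'' rather than using your equal-count argument.
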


Each of the two reductions is from \prob{Exact Unary Bin Packing},
where,
given a multiset~$X=\{x_1,\dots,x_N\}$ of integers,
a bin size~$B\in \N$,
and a number~$k\in\N$ of bins,
all encoded in unary
and
such that~$\sum_{x_i\in X} x_i=k\cdot B$,
the question is whether there is a $k$-partition $(X_1,\dots,X_k)$
of~$X$ such that~$\sum_{x_i\in X_j} x_i = B$ for every~$j\in\set{k}$.
\prob{Exact Unary Bin Packing} is \NP-hard and \W{1}-hard \wpb{}~$k$~\cite{JansenKMS13}.

\begin{proof}[Proof idea of \cref{thm:whardmtau}]
 For each number~$x_i$ in the input instance,
 construct an agent~$a_i$ of weight~$x_i$.
 Now,
 for (i),
 construct $k$ resources,
 each of which must be harvested by an amount of exactly~$B$.
 For (ii),
 construct one resource that must be harvested by an amount of exactly~$B$ in each of the $k$ periods.
 Since all agents approve all resources,
 envy-freeness ensures an actual $k$-partition with each part summing to~$B$ in either case.
\end{proof}

\appendixproof{thm:whardmtau}
{
\begin{construction}
  \label{constr:whardmtau}
	(i)
	For each~$i\in\set{k}$,
	a resource~$\r_i$ represents a bin
	with~$\iq_i=B$ and~$\lb_i=1$.
	For each number~$x_i\in X$,
	there is an agent~$a_i$ with weight~$x_i$.
	Each agent approves each resource.
	Exactly one resource has~$\ub_i=\iq_i$, the others are uncritical
	with~$\ub_j=\iq_j+1$.
	Each renewal function is~$\rf(x)=x+1$.

	(ii)
	There is one resource~$\r_1$ with~$\iq_1=B$,
	$\lb_1=1$,
	$\ub_1=B$,
	and~$\rf_1(x)=x+B$.
	Moreover,
	there are~$N$ agents,
	where agent~$a_i$ has weight~$x_i$.
	All agents approve~$\r_1$.
	We have~$\tau=k$ periods.
\end{construction}

\begin{proof}[Proof of \cref{thm:whardmtau}]
 Let~$I'$ be the instance of \fish{}
 obtained from an instance~$I$ of \prob{Exact Unary Bin Packing}
 using \cref{constr:whardmtau}(i) or (ii).
 We claim that~$I$ is a \yes-instance if and only if~$I'$ is a \yes-instance.

 (i):
 A partition~$(X_1,\dots,X_k)$ induces an assignment~$\pi=(\pi_1)$ and vice versa through the following:
 $x_i\in X_j$ if and only if~$\pi_1(a_i)=\r_j$.
 For the reverse direction,
 note that there is at least one agent assigned,
 and due to envy-freeness,
 this implies all agents are assigned.
Moreover,
 terminal viability gives~$\rf_j(B-h_j^\pi(1))=B-h_j^\pi(1)+1\geq\lb_j=1$,
 that is,
 $h_j^\pi(1)\leq B$ for every~$j\in\set{k}$.
 Since the total weight of all agents is~$k\cdot B$,
 it follows that~$h_j^\pi(1)=B$ for every~$j\in\set{k}$.
 Now we have~$\sum_{x_i\in X_j} x_i = h_j^\pi(1)$ for every~$j\in\set{k}$,
 and hence,
 one side equals~$B$ if and only if the other does.

 (ii):
 A partition~$(X_1,\dots,X_k)$ induces an assignment~$\pi$ and vice versa through the following:
 $x_i\in X_j$ if and only if~$\pi_j(a_i)=\r_1$.
 For the reverse direction,
 note that due to the upper capacity and no over-extraction,
 $\r_1$ must be harvested by a total weight of exactly~$B$ in every period,
 leading to total harvest of exactly~$k\cdot B$.
 By envy-freeness,
 this implies every agent is assigned at least once,
and thus~$k\cdot B=\sum_{j=1}^k h_1^\pi(j)\geq \sum_{i=1}^N x_i = k\cdot B$.
 Since all~$x_i$ are positive,
 every agent is assigned exactly once.
 Hence,
 $X_j\ceq\{x_i\mid \pi_j(a_i)=\r_1\}$ for~$j\in\set{k}$ defines a $k$-partition of~$X$
 with~$\sum_{x_i\in X_j}x_i=h_1^\pi(j)=B$.
\end{proof}
}
Complementing this hardness result, the problem is in \XP{} for unary weights.

\begin{theorem}[\appref{thm:mtau:unary:XP}]
 \label{thm:mtau:unary:XP}
 For unary weights,
 \fish{} is in \XP{} \wpb{}~$m+\tau$.
\end{theorem}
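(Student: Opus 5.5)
The plan is a dynamic program over agents whose state records the full harvest matrix. With unary weights this matrix has only polynomially many values per entry. Let $W\ceq\sum_{a\in A}w_a$, which is polynomial in the input size under unary encoding.

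For any $\tau$-period assignment $\pi$, every harvest $h_i^\pi(t)$ lies in $\set[0]{W}$. Hence the matrix $H^\pi=(h_i^\pi(t))_{i\in\set{m},t\in\set{\tau}}$ takes at most $(W+1)^{m\tau}$ values. Moreover, sustainability of $\pi$ depends only on $H^\pi$: each quantity $\q_i^\pi(t)$ is determined recursively by the $h_i^\pi(t')$ with $t'<t$, and constraints (I)--(IV) can then be checked directly. The exact rationals have polynomial length because $\tau$ is unary. It therefore suffices to decide which harvest matrices are realizable by envy-free assignments.

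\textbf{Step 1: handle envy-freeness through patterns.} Call a sequence $p\in(\rSet\cup\{\emptyset\})^\tau$ a \emph{pattern}; there are $(m+1)^\tau$ of them. Each agent $a$ receives the pattern $p_a=(\pi_1(a),\dots,\pi_\tau(a))$, and her value for a pattern is $v_a(p)\ceq\sum_t v_a(p_t)$.

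Envy-freeness says exactly that $v_a(p_a)\geq v_a(p)$ for every pattern $p$ in the set $P\ceq\{p_{a'}\mid a'\in A\}$ of \emph{used} patterns. We guess $P$, with at most $2^{(m+1)^\tau}$ choices, a function of $m+\tau$ only.

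For a fixed $P$, agent $a$ may take pattern $p\in P$ if and only if $v_a(p)=\max_{p'\in P}v_a(p')$. In the \nonwasteful{} setting we additionally require that every non-empty entry of $p$ is approved by $a$. Denote this set of allowed patterns by $P_a\subseteq P$. Envy-freeness is then equivalent to two conditions: every agent gets a pattern from her own $P_a$, and every pattern of $P$ is used by at least one agent. The second condition is needed because we guessed $P$ as exactly the set of used patterns.

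\textbf{Step 2: dynamic program.} Process the agents $a_1,\dots,a_n$ in order. A state after $j$ agents is a pair $(H,U)$, where $H\in\set[0]{W}^{m\times\tau}$ is the accumulated harvest matrix and $U\subseteq P$ is the set of patterns used so far. The transition for agent $a_{j+1}$ picks some $p\in P_{a_{j+1}}$. It adds $w_{a_{j+1}}$ to entry $(i,t)$ whenever $p_t=\r_i$, and adds $p$ to $U$.

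At the end we accept if some reachable state $(H,P)$ has $H$ sustainable, i.e., satisfying (I)--(IV) under the quantity recursion. Correctness follows directly from Step 1, since every pair of an assignment and its used-pattern set corresponds to a DP path and vice versa.

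The number of states is $(W+1)^{m\tau}\cdot 2^{(m+1)^\tau}$. Each transition costs $(m+1)^\tau\cdot\poly$. Summed over all guesses of $P$, the running time is $|I|^{O(m\tau)}\cdot f(m+\tau)$, which is \XP{} (in fact of the form $W^{m\tau}$ times a parameter-dependent factor).

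\textbf{Main obstacle.} The delicate point is encoding envy-freeness, a global condition between all pairs of agents, into a local per-agent DP choice. The key idea is guessing the set $P$ of used patterns and enforcing that each pattern in $P$ is actually used, via the component $U$. The remaining work is routine: checking that the sustainability checks involving rational arithmetic stay polynomial, and observing that the unary-weight assumption is used only to bound the harvest range by $W$.
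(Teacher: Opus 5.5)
Your proposal is correct and follows essentially the paper's proof. The paper likewise guesses the set of used sequences, restricts each agent to approval-restricted utility-maximizing sequences from that set, and runs a dynamic program over the agents that tracks the aggregate harvest matrix and the set of sequences used so far. The only difference is that the paper also guesses the final harvest matrix $H^*$ up front, checks its sustainability first, and prunes the dynamic program with $H\le H^*$. You instead check sustainability on the reachable final states, which saves a factor of $(W+1)^{m\tau}$ and changes nothing essential.
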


\begin{proof}[Proof idea of \cref{thm:mtau:unary:XP}]
 Consider a sustainable envy-free \xassignment[$\tau$]~$\pi$.
 On the one hand,
 $\pi$ induces a \emph{harvest matrix}
 $H=(h_{i,t})_{i\in\set{m},t\in\set{\tau}} \in \{0,\dots,W\}^{m\times \tau}$,
 where $W=\sum_{a\in A} w_a$ and $h_{i,t}\in\{0,\dots,W\}$ corresponds
 to the weight assigned to resource~$\r_i$ in period~$t$.
 On the other hand,
 $\pi$ induces for each agent~$a$
 a sequence $\sigma=(\sigma_1,\dots,\sigma_\tau)$
 with~$\sigma_t=\pi_t(a)\in(\rSet\cup\{\emptyset\})$.
 We check each of the at most~$(W+1)^{m\tau}$
 possible harvest matrices.
 Since the weights are encoded in unary,
 this is in~\XP{} with respect to~$m+\tau$.
For each guess~$H$,
we can compute the resource quantities of every period
and verify sustainability in polynomial time.
 It remains to find an assignment of sequences to agents
that induces~$H$.
 We check each of the at most~$2^{(m+1)^\tau}$ possible sets of sequences.
 For each such set~$S^*$ of sequences,
 we can now use dynamic programming,
 iterating over the agents,
 to check whether there is an assignment of sequences to agents resulting in~$H$.
 To ensure envy-freeness,
 we proceed as follows:
 First,
 we require that every schedule from the set is assigned to at least one agent.
 Second,
 we restrict each agent to \nonwasteful{} sequences that maximize her utility over all of~$S^*$.
 Combining these two ingredients ensures envy-freeness.
 Thus,
 correctness of our algorithm follows.
 Finally,
 the algorithm runs in~\XP{} time regarding~$m+\tau$,
 since iterating over all harvest matrices~$H$
 and sets~$S^*$ of sequences is in~\XP{} time regarding~$m+\tau$,
 and the dynamic program iterates over all agents and checks entries for harvest matrices bounded entrywise by~$H$
 and subsets of~$S^*$.
\end{proof}

\appendixproof{thm:mtau:unary:XP}
{%
Let~$W=\sum_{a\in A} w_a$.
Define a harvest matrix as~$H=(h_{i,t})_{i\in\set{m},t\in\set{\tau}}$ with~$h_{i,t}\in\{0,\dots,W\}$ corresponding to the weight assigned to resource~$\r_i$ in period~$t$.
There are~$(W+1)^{m\cdot\tau}$ such matrices.
For each candidate matrix~$H^*$, we start from the resources' initial quantities and proceed period by period, subtracting the harvest specified by~$H^*$
and applying the corresponding renewal function.
This allows us to check all sustainability constraints in polynomial time.
For each agent~$a$, write~$\sigma=(\sigma_1,\dots,\sigma_\tau)$
for her assignments over all periods,
where~$\sigma_t\in\rSet\cup\{\emptyset\}$.
Let~$\calS=(\rSet\cup\{\emptyset\})^\tau$
be the set of all~$(m+1)^\tau$ such sequences.
For each agent~$a$, let~$\calS_a$ be the set of
approval-restricted sequences, namely those~$\sigma\in\calS$
such that~$\sigma_t\in\rSet\implies\val_a(\sigma_t)=1$
for every~$t\in\set{\tau}$.
We guess a subset~$S^*\subseteq\calS$ from at most
$2^{(m+1)^\tau}$ possibilities and require every sequence
in~$S^*$ to be assigned to at least one agent.
To ensure envy-freeness, we restrict each agent to approval-restricted
sequences that maximize her utility over all of~$S^*$.
Define~$S_a^*$ to be the set of sequences~$\sigma\in S^*\cap\calS_a$
such that~$\sum_{t=1}^\tau \val_a(\sigma_t)\geq \sum_{t=1}^\tau \val_a(\sigma_t')$ for every sequence~$\sigma'=(\sigma_1',\dots,\sigma_\tau')\in S^*$. If~$S_a^*=\emptyset$ for any agent~$a$, we discard~$S^*$.
Moreover, a sequence~$\sigma$ for agent~$a_i$ induces a harvest
matrix~$H^{i,\sigma}$ with~$H^{i,\sigma}_{\ell,t}=w_{a_i}$
if~$\sigma_t=\r_\ell$, and zero otherwise.

We define a dynamic program that iterates over the agents~$a_1,\dots,a_n$,
tests all sequences in~$S_{a_i}^*$ for each agent~$a_i$,
and updates the aggregate harvest matrix and the set of used sequences.
So we have~$D[i,H,S]\in\{\true,\false\}$, where~$i$ is the number of processed agents,
$H$ is their aggregate harvest matrix, and~$S\subseteq S^*$ is the set of sequences used.
Initially,
we set $D[0,\mathbf{0},\emptyset]=\true$.
For the transition step,
we set~$D[i,H,S]=\true$ if there is~$D[i-1,H',S']=\true$ and
a sequence~$\sigma\in S^*_{a_i}$ such that~$H=H'+H^{i,\sigma}$
with~$H\leq H^*$,
defined componentwise as
$H_{j,t}\leq H_{j,t}^*$ for all~$j,t$,
and~$S=S'\cup\{\sigma\}$.
Such an update requires checking at most~$|S_{a_i}^*|$
possible sequences for agent~$a_i$ and all~$m\cdot\tau$ entries of the resulting harvest matrix for each sequence.
Finally, if~$D[n,H^*,S^*]=\true$, we return $\yes$ for~$I$.
\cref{alg:xp-affine-fish} provides full details.

\begin{algorithm}[t]
	\caption{Our algorithm in the proof of \cref{thm:mtau:unary:XP}.
	Input is an instance~$I$ with an $n$-sized agent set~$A$, $m$-sized resource set~$\rSet$, and~$\tau$ periods.}
	\label{alg:xp-affine-fish}

  Set $W \gets \sum_{a\in A} w_a$ and
	$\calS \gets (\rSet\cup\{\emptyset\})^\tau$\;

	\ForEach{$H^* \in \{0,\ldots,W\}^{m\tau}$}{
			compute the resource quantities over all periods induced by~$H^*$\;
			\If{$H^*$ is sustainable}{
				\ForEach{$S^* \subseteq \calS$}{
						\ForEach{$a\in A$}{
								$S_a^* \gets
								\calS_a\cap \arg\max_{\sigma\in S^*} \sum_{t=1}^\tau \val_a(\sigma_t)$\;
						}
						\If{$\forall a\in A: S_a^*\neq \emptyset$}{
							initialize $D[0,(0)_{m\times\tau},\emptyset]\gets\true$ and
							all other DP entries to $\false$\;

							\For{$i\gets 1$ \KwTo $n$}{
									\ForEach{$(H',S')$ with $D[i-1,H',S']=\true$ \textbf{and each} $\sigma\in S_{a_i}^*$}{
													Set $H \gets H' + H^{i,\sigma}$ and
													$S \gets S'\cup\{\sigma\}$\;
													\If{$H \leq H^*$}
													{
															$D[i,H,S] \gets \true$\;
													}
									}
							}

							\If{$D[n,H^*,S^*]=\true$}{
									\Return{\yes}\;
							}
					 }
				}
			}
	}
	\Return{\no}\;
\end{algorithm}

\begin{lemma}
 \label{lem:mtau:unary:XP}
 $D[i,H,S]=\true$ if and only if
 there exists an assignment of sequences to agents $a_1,\dots,a_i$ such that
(i) every $a_j$ receives a sequence in $S_{a_j}^*$,
(ii) their aggregate harvest matrix equals $H\leq H^*$, and
(iii) $S$ is exactly the set of sequences used by these $i$ agents.
\end{lemma}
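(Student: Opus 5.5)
We prove \cref{lem:mtau:unary:XP} by induction on~$i$, showing both directions at once for every pair~$(H,S)$.

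\textbf{Base case $i=0$.} The only entry set to~$\true$ is $D[0,\mathbf{0},\emptyset]$. This agrees with the statement, because the empty assignment of sequences to zero agents has aggregate harvest matrix~$\mathbf{0}\le H^*$ and uses the empty set of sequences. Conditions~(i) and~(iii) hold vacuously, and no other pair~$(H,S)$ is realizable with zero agents.

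\textbf{Step $i-1\to i$, direction ($\Rightarrow$).} Suppose $D[i,H,S]=\true$. By the transition rule, some entry $D[i-1,H',S']=\true$ and some $\sigma\in S^*_{a_i}$ satisfy $H=H'+H^{i,\sigma}\le H^*$ and $S=S'\cup\{\sigma\}$.
\begin{itemize}
\item The induction hypothesis gives an assignment of agents $a_1,\dots,a_{i-1}$ with sequences from their sets~$S^*_{a_j}$, aggregate matrix~$H'$, and used set exactly~$S'$.
\item We extend it by giving $\sigma$ to~$a_i$. Condition~(i) holds since $\sigma\in S^*_{a_i}$.
\item The aggregate matrix becomes $H'+H^{i,\sigma}=H$, because $H^{i,\sigma}$ adds exactly $w_{a_i}$ at entry $(\ell,t)$ whenever $\sigma_t=\r_\ell$, and adds nothing when $\sigma_t=\emptyset$. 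Together with $H\le H^*$ this gives~(ii).
\item The used set becomes $S'\cup\{\sigma\}=S$, giving~(iii).
\end{itemize}

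\textbf{Step $i-1\to i$, direction ($\Leftarrow$).} Suppose an assignment of agents $a_1,\dots,a_i$ satisfies (i)--(iii) for $(H,S)$. Let $\sigma$ be the sequence of~$a_i$, and restrict the assignment to $a_1,\dots,a_{i-1}$.
\begin{itemize}
\item The restriction has aggregate matrix $H'=H-H^{i,\sigma}$ and uses some set~$S'$ with $S'\cup\{\sigma\}=S$.
\item Harvest matrices have nonnegative entries, so $H'\le H\le H^*$. This monotonicity is the only point needing care: it is what lets the induction hypothesis apply to the prefix.
\item By the induction hypothesis, $D[i-1,H',S']=\true$.
\item The algorithm iterates over all true entries at level $i-1$ and over all $\sigma\in S^*_{a_i}$. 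It therefore considers this pair and sets $D[i,H,S]=\true$, since $H\le H^*$.
\end{itemize}

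The main thing to be careful about is that~$S'$ need not equal $S\setminus\{\sigma\}$, since $\sigma$ may also have been used by earlier agents. This is harmless, because the transition uses the union $S'\cup\{\sigma\}$ and not a disjoint union.
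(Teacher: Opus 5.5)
Your proof is correct and follows the paper's argument: induction on~$i$, extending the prefix assignment by~$\sigma$ in the forward direction, and in the backward direction restricting to $a_1,\dots,a_{i-1}$ with $H'=H-H^{i,\sigma}$. Your explicit remarks that $H'\le H\le H^*$ holds by nonnegativity, and that $S'$ need not equal $S\setminus\{\sigma\}$, spell out details the paper leaves implicit when it invokes the induction hypothesis.
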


\begin{proof}
 We prove the statement via induction on~$i$.
 For~$i=0$,
 the statement is trivially correct.
 Now let~$i>0$.

 \RD{}
 Let~$D[i,H,S]$ be set to true from~$D[i-1,H',S']$,
 which hence is true itself,
 and let~$\sigma\in S_{a_i}^*$ be the corresponding sequence.
 By induction,
 we have that
 there exists an assignment of sequences to agents $a_1,\dots,a_{i-1}$ such that
 (i) every $a_j$ receives a sequence in $S_{a_j}^*$,
 (ii) their aggregate harvest matrix equals $H'\leq H^*$, and
 (iii) $S'$ is exactly the set of sequences used by these $i-1$ agents.
 With~$H=H'+H^{i,\sigma}$ and~$S=S'\cup\{\sigma\}$,
 we have that
 there exists an assignment of sequences to agents $a_1,\dots,a_i$ such that
 (i) every $a_j$ receives a sequence in $S_{a_j}^*$,
 (ii) their aggregate harvest matrix equals $H\leq H^*$, and
 (iii) $S$ is exactly the set of sequences used by these $i$ agents.

 \LD{}
 Let~$\pi$ be an assignment of sequences to agents $a_1,\dots,a_i$ such that
 (i) every $a_j$ receives a sequence in $S_{a_j}^*$,
 (ii) their aggregate harvest matrix equals $H\leq H^*$, and
 (iii) $S$ is exactly the set of sequences used by these $i$ agents.
 Let~$\sigma$ be~$a_i$'s sequence,
 and define~$H'=H-H^{i,\sigma}$ and~$S'$ to be the sequences used by~$a_1,\dots,a_{i-1}$.
 By induction,
 we then have $D[i-1,H',S']=\true$.
 Since the algorithm checks all possibilities exhaustively,
 it follows that $D[i,H,S]$ is set to true from $D[i-1,H',S']$.
\end{proof}

\begin{proof}[Proof of \cref{thm:mtau:unary:XP}]
 We prove that~\cref{alg:xp-affine-fish}
 is sound,
 complete,
 and runs in \XP{} time.

 \emph{Soundness}:
 Suppose that \cref{alg:xp-affine-fish} returns~$\yes$ on $D[n,H^*,S^*]=\true$.
 Due to~\cref{lem:mtau:unary:XP},
 there exists an assignment of sequences to agents $a_1,\dots,a_n$ such that
 (i) every $a_j$ receives a sequence in $S_{a_j}^*$,
 (ii) their aggregate harvest matrix equals $H^*$, and
 (iii) $S^*$ is exactly the set of sequences used by these $n$ agents.
Since only sustainable harvest matrices are passed to the dynamic program, $H^*$ is sustainable.
Since exactly the sequences in~$S^*$ are used
and each agent~$a_j$ receives a utility-maximizing sequence
from~$S_{a_j}^*$,
the assignment is envy-free.

 \emph{Completeness}:
 Let~$\pi$ be a solution,
 i.e.,
 a sustainable envy-free \xassignment[$\tau$].
The assignment~$\pi$ induces a harvest matrix~$H^\pi$
with~$h_{j,t}^\pi=h_j^\pi(t)$.
 In addition,
 it defines each agent~$a_i$'s sequence~$\sigma^{\pi,i}=(\sigma_1^{\pi,i},\dots,\sigma_\tau^{\pi,i})$ with
 $\sigma_t^{\pi,i}=\pi_t(a_i)$. Let~$S^\pi$ denote the set of all these sequences.
Since~\cref{alg:xp-affine-fish} iterates over all possible harvest matrices,
it will eventually consider~$H^\pi$.
 Since~$\pi$ is a solution,
 $H^\pi$ passes the subsequent sustainability check.
 Next,
 since all subsets of~$\calS$ are considered,
the algorithm eventually guesses~$S^*=S^\pi$.
For this guess, since~$\pi$ is envy-free,
$\sigma^{\pi,i}\in S_{a_i}^*$ for every agent~$a_i$.
 The correctness now follows from \cref{lem:mtau:unary:XP}.
 Thus,
 the algorithm returns \yes{} for~$D[n,H^\pi,S^\pi]$ being true.
 \emph{Running time}:
 Initializing and computing each DP-table takes
 $n\cdot (W+1)^{m\tau}\cdot (m+1)^{2\tau}\cdot 2^{(m+1)^\tau} \poly(|I|)$ time.
 Computing all~$S_a^*$
 takes up to~$n\cdot 2^{(m+1)^\tau}\cdot \tau\cdot \poly(|I|)$ time.
 Checking whether~$H^*$ is sustainable can be done in~$\poly(|I|)$ time.
 Iterating over~$2^{(m+1)^\tau}$ choices for~$S^*$
 and $(W+1)^{m\cdot \tau}$ for~$H^*$,
 we arrive at a final running time in
 $\poly(|I|)\cdot (W+1)^{2 m\cdot \tau} \cdot 2^{2(m+1)^\tau} \cdot (m+1)^{2\tau}$,
 which is \XP{} wrt.~$m+\tau$ since~$W\leq |I|$ by the unary encoding.
\end{proof}
}%

\section{The Agents: Unit Weights or Few Agents Suffice for Hardness}
We next consider restrictions on the agents.
We show \NP-hardness both for unit weights, for only one agent under binary encoding and for three agents under unary encoding.

\begin{theorem}[\appref{thm:unaryweight}]
	\label{thm:unaryweight}
	\fish{} is \NP-hard,
	even for unary encoding and every agent having weight one.
\end{theorem}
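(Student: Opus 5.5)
We would adapt the reduction from \prob{Independent Set} on cubic graphs used for \cref{thm:t1:hardness}(i) and replace each weight-$3$ vertex agent by a unit-weight agent that harvests over three periods. Envy-freeness compares utilities summed over all periods, so an agent assigned to an approved resource in each of $\tau=3$ periods reaches utility~$3$ and consumes $3$ units in total. This is the multi-period analogue of weight~$3$.

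\textbf{Construction.} Given a cubic graph~$G=(V,E)$ with $N=|V|$ and $1\le k<N$, set $\tau=3$. Every agent has weight one and all numbers are bounded by~$O(N)$, so the instance is unary-encoded.
\begin{compactitem}
\item For each~$v\in V$, create a vertex agent~$a_v$ approving~$\r_0$ and~$\r_v$.
\item For each $e=\{u,v\}\in E$, create an edge agent~$a_e$ approving~$\r_u$ and~$\r_v$.
\item Resource~$\r_v$ is uncritical, with $\iq_v=\ub_v=3$, $\lb_v=0$ and $\rf_v(x)=x$. Its total harvest over the three periods is therefore at most~$3$.
\item Resource~$\r_0$ has $\iq_0=\ub_0=N-k$, $\lb_0=0$ and $\rf_0(x)=x+(N-k)$. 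As in the proof of \cref{thm:mtau:weak}(ii), upper capacity and no over-extraction force a harvest of exactly $N-k$ in every period.
\end{compactitem}

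\textbf{Forward direction.} Let $S$ be an independent set of size~$k$.
\begin{compactitem}
\item Assign $a_v$ to~$\r_0$ in all three periods if $v\notin S$, and to~$\r_v$ in all three periods if $v\in S$.
\item Give each edge agent a resource of an endpoint outside~$S$. Such an endpoint exists because $S$ is independent, and that resource is free of vertex agents.
\item Spread the edge agents of each such $\r_w$, of which there are at most three, over the three periods as needed. We must check that every edge agent obtains utility at least as large as the value she assigns to any other agent. Since no agent's sequence can exceed utility~$3$, the simplest choice is to give every edge agent utility~$3$. This requires a capacity of up to~$9$ on each vertex resource used only by edge agents.
\end{compactitem}
So I would set $\ub_v=\iq_v=3$ only in spirit. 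I would make the vertex resource larger, say capacity~$9$, and calibrate it so that a vertex agent using it fully leaves no room for any edge agent: the vertex agent uses it in all three periods, and the quantities are tuned so that the residual is~$0$. Alternatively, I would keep utility~$3$ for everyone by letting edge agents also use three periods. The exact numbers are fixed at this point, with the guiding invariant that $\r_v$ can serve either its vertex agent in all periods or all incident edge agents in all periods, but not both.

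\textbf{Backward direction.}
\begin{compactitem}
\item Since $\r_0$ is harvested in every period, some vertex agent has a sequence of value at least~$1$ to all vertex agents. Counting shows more: the $3(N-k)$ slots of $\r_0$ are filled only by vertex agents, because assignments are approval-restricted.
\item Envy-freeness forces every vertex agent's utility to be at least the maximum number of $\r_0$-entries in any sequence. Combined with the capacity of the vertex resources, this should force exactly $N-k$ vertex agents to sit on $\r_0$ in all three periods and the other $k$ to use their own resource fully.
\item The set $S$ of the latter agents is then independent. An edge inside~$S$ would leave its edge agent with no available approved resource, while another agent's sequence contains~$\r_u$ three times, which creates envy.
\end{compactitem}

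\textbf{Main obstacle.} The hard step is this \emph{purity} argument. It must rule out vertex agents that mix $\r_0$ and $\r_v$ across periods, and edge agents that steal partial capacity. The counting has to use both the per-period exactness at~$\r_0$ and the fact that the utilities of all vertex agents must match the maximal $\r_0$-usage. I would first try to prove that some agent uses $\r_0$ in all three periods by pigeonhole, so that every vertex agent needs utility~$3$. Then every vertex agent is assigned in all periods, and the per-period count $N-k$ gives exactly $k$ agents on their own resources in each period. The capacities of the vertex resources would have to be tuned so that mixed sequences are impossible or harmless.
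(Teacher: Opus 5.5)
Your proposal has a genuine gap. You never fix a construction, and the one you write down fails in both directions.

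\textbf{Why the gadget cannot carry over.} The gadget of \cref{thm:t1:hardness}(i) relies on a vertex agent consuming three units for one unit of utility, while an edge agent consumes one. With unit weights, every agent consumes exactly one unit per unit of utility, and envy is counted over the whole horizon. Spreading $a_u$ over three periods therefore multiplies both her consumption and the value that the edge agents at~$u$ attach to her sequence by the same factor. The asymmetry disappears.

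\textbf{Forward direction.} With $\iq_v=\ub_v=3$, your forward assignment is not envy-free. An edge agent $a_{\{u,w\}}$ with $u\in S$ values $a_u$'s sequence at~$3$, and $\r_u$ is used up, so she must take three units from~$\r_w$ alone. Two such edge agents at one $w\notin S$ already exceed its capacity. Raising the capacity to~$9$ lets edge agents with both endpoints in~$S$ harvest~$\r_u$ next to~$a_u$, which destroys the blocking. Retuning the numbers does not help, because with unit weights $\r_u$ only sees how many harvests it receives per period, not whose they are.

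\textbf{Backward direction.} The purity step fails as well. Your pigeonhole argument needs $3(N-k)>2N$, and even then mixed sequences such as $(\r_0,\r_0,\r_v)$ remain possible. Concretely, take $G=K_4$ on vertices $1,\dots,4$ and $k=2$, which is a \no-instance.
\begin{compactitem}
\item Give $a_1,\dots,a_4$ the sequences $(\r_0,\r_0,\emptyset)$, $(\r_0,\emptyset,\r_0)$, $(\r_3,\r_0,\emptyset)$, $(\r_4,\emptyset,\r_0)$.
\item Let $a_{\{1,2\}},a_{\{1,3\}},a_{\{1,4\}}$ harvest~$\r_1$ in periods $1,2,3$, respectively.
\item Let $a_{\{2,3\}},a_{\{2,4\}}$ harvest~$\r_2$ in periods $1,2$, and let $a_{\{3,4\}}$ harvest~$\r_3$ in period~$2$.
\end{compactitem}
Exactly two agents use~$\r_0$ in each period, and every~$\r_v$ is harvested at most three times, so this is sustainable. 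Every vertex agent has utility~$2$ and values every other sequence at most~$2$. Every edge agent has utility~$1$ and values every other sequence at most~$1$. Hence your instance is a \yes-instance.

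\textbf{What is missing.} You need a way to make a resource usable exactly once, by exactly one agent, so that no consumption asymmetry between agents is needed at all. The paper obtains this from timing. It reduces from \prob{\NonWasteful{} Envy-Free Item Allocation} with items $x_1,\dots,x_M$, using one resource and one period per item and all weights one. Resource~$\r_j$ has $\ub_j=1$, $\lb_j=\frac{1}{M+1}$, $\iq_j=1-\frac{j-1}{M+1}$, and $\rf_j(z)=z+\frac{1}{M+1}$. Then $\r_j$ holds less than one unit before period~$j$. In period~$j$ it holds exactly one unit and is urgent. After a single unit harvest it stays below one unit until the end of the horizon. So each resource behaves as an indivisible item received by exactly one approving agent, and envy-freeness over the horizon coincides with envy-freeness of the induced item allocation.
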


We give a polynomial-time many-one reduction from \prob{\NonWasteful{} Envy-Free Item Allocation},
where,
given a set~$X=\{x_1,\dots,x_M\}$ of items,
a set~$A=\{a_1,\dots,a_N\}$ where each agent~$a_i$ has a binary valuation~$v_i:X\to \{0,1\}$,
one asks whether there is a partition~$(X_1,\dots,X_N)$ of~$X$
such that
for every~$i\in\set{N}$ we have~$v_i(x_j)=1$ for all~$x_j\in X_i$
and
for every distinct~$i,j\in\set{N}$,
$\sum_{x_p\in X_i} v_i(x_p) \geq \sum_{x_p\in X_j} v_i(x_p)$.
Hosseini~\etal~\cite{HosseiniSVWX20} give an alternative proof that the problem without the \nonwasteful{} requirement is \NP-hard~\cite{AzizGMW15}.
In their hardness reduction,
the forward direction constructs an allocation that assigns every item only to an agent who approves it,
while the reverse direction applies to arbitrary envy-free allocations.
Hence,
the same reduction shows that \prob{\NonWasteful{} Envy-Free Item Allocation} is \NP-hard.

\begin{proof}[Proof idea of \cref{thm:unaryweight}]
 For each item in the input instance, we introduce a resource and a period.
 Each input agent forms an agent in our target instance with weight one approving exactly the resources corresponding to her approved items.
 Now, each resource~$\r_j$ becomes available and urgent with a quantity of one exactly in period~$j$. Sustainability therefore requires exactly one agent to harvest it.
 After being harvested in period~$j$ by exactly one agent,
 its renewal rate is chosen sufficiently low such that it cannot be harvested again.
 Thus,
 any sustainable envy-free \xassignment[$M$]~$\pi$ induces an envy-free partition of resources,
 and hence of the items.
\end{proof}

\appendixproof{thm:unaryweight}
{%
\begin{construction}
	\label{constr:unaryweight}
		Let~$I=(X,A)$ be an instance of \prob{\NonWasteful{} Envy-Free Item Allocation}, where $X=\{x_1,\dots,x_M\}$ is the set of items and $A=\{a_1,\dots,a_N\}$ is the set of agents with binary and additive valuations.
		We construct an instance $I'=(\rSet,A',\tau)$ of \fish{} as follows. Let~$A'=A$, with~$w_a=1$ for all $a \in A'$ and~$\tau=M$.
		Let~$\widehat{M}\coloneqq 1/(M+1)$.
		For every $j\in\set{M}$,
		we add one resource $\r_j$ with
		\[
		\ub_j=1,\quad
		\lb_j=\widehat{M},\quad
		\iq_j=1-(j-1)\cdot\widehat{M},\quad
		\rf_j(z)=z+\widehat{M}.
		\]
		Thus, all resources have the same affine renewal function with
		$\alpha_j=1$ and $\beta_j=\widehat{M}$. Each agent~$a$ approves~$\r_j$ if and only if she approves~$x_j$ in~$I$.
	\end{construction}

\begin{proof}[Proof of \cref{thm:unaryweight}]
 Let~$I'$ be the instance of \fish{}
 obtained from an instance~$I=(X,A)$ of  \prob{\NonWasteful{} Envy-Free Item Allocation}
 using~\cref{constr:unaryweight}.
 Since all occurring numerical values~$a/b$ have $a,b\in O(M)$,
 they are polynomially bounded even under unary encoding.
 We prove that~$I$ is a \yes-instance if and only if~$I'$ is a \yes-instance.

 \RD{}
 Let~$\calX=(X_1,\dots,X_N)$ be a solution to~$I$.
 We claim that~$\pi=(\pi_1,\dots,\pi_M)$ with
 $\pi_j(a_i)=\r_j$ if and only if~$x_j\in X_i$
 is a solution to~$I'$.
 First,
 note that since~$\calX$ is an envy-free partition of~$X$,
 $\pi$ is envy-free and to each resource exactly one agent is assigned.
 It remains to show that~$\pi$ is sustainable.
For~$1\leq t\leq j$, we have
$\q_j^\pi(t)=1-(j-t)\widehat{M}\in[\lb_j,\ub_j]$.
In particular, $\q_j^\pi(j)=1$.
 Since agent~$a_i$ has weight one,
 there is no over-extraction.
 Since~$a_i$ harvests~$r_j$ in period~$j$,
 we have for period~$j+t$ that~$\q_j^{\pi}(j+t)=t\cdot \widehat{M}$.
 In particular,
 $\lb_j\leq \q_j^{\pi}(j+t)\leq \ub_j$
 for every~$t\in\set{\tau+1-j}$,
 where the second inequality follows from the definition of~$\widehat{M}=1/(M+1)$.

 \LD{}
 Let~$\pi=(\pi_1,\dots,\pi_M)$ be a solution to~$I'$.
 We first claim that for every~$r_j\in\rSet$
 there is exactly one~$t\in\set{M}$ and exactly one~$i\in\set{N}$
 such that~$\pi_t(a_i)=\r_j$.
 Note that
 if resource~$r_j$ is not harvested at all,
 we have~$\q_j^{\emptyset}(t)<1$ for all~$t\in \set{j-1}$,
 $\q_j^{\emptyset}(j)=1$,
 and $\q_j^{\emptyset}(t)>1=\ub_j$ for all~$t\in \set[j+1]{\tau+1}$.
 This implies that along~$\pi$,
 resource~$\r_j$ cannot be harvested before period~$j$ due to no-over-extraction
 (each agent has weight exactly one),
 and must be harvested at period~$j$ by exactly one agent~$a_i$ due to capacity and no-over-extraction.
 When~$\r_j$ is harvested in period~$j$ by exactly one agent~$a_i$,
 we have that $\q_j^{\pi}(j+t)=t\cdot \widehat{M}<1$,
 and hence,
 $\r_j$ is not harvested again due to no-over-extraction.
 Thus,
 the claim follows.
 We next claim that~$\calX=(X_1,\dots,X_N)$ with
 $X_i=\{x_j\mid \pi_j(a_i)=\r_j\}$ is a solution to~$I$.
 By the previous claim,
 we know that~$\calX$ is a partition.
 Since~$\pi$ is envy-free,
 so is~$\calX$.
 Hence,
 the claim follows.
\end{proof}
}%

\begin{theorem}[\appref{thm:n:bin:pnph}]
	\label{thm:n:bin:pnph}
	\fish{} is \NP-hard even for one agent.
\end{theorem}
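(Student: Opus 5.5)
Since there is only one agent, envy-freeness holds trivially: an agent can envy only another agent. The statement therefore says that \emph{sustainability alone} is \NP-hard under binary encoding. I plan a polynomial-time many-one reduction from \prob{Subset Sum}: given positive integers $x_1,\dots,x_N$ and a target $T$, decide whether some $J\subseteq\set{N}$ has $\sum_{i\in J}x_i=T$. The single agent gets weight one and approves all resources. The periods are encoded in unary, so $\tau$ must stay polynomial in $N$ and in the bit length of the numbers; the numbers themselves go only into the binary-encoded resource parameters. As announced in the introduction, the idea is to encode digits of large numbers in the \emph{timing} of harvests on a resource with slope $\alpha\neq1$.

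\textbf{Step 1: the accumulator.} I introduce an accumulator resource $\r_R$ with renewal $\rf_R(x)=2x$ (or a base $b\geq 2$ large enough to avoid carries). Unrolling the recursion gives
$\q_R^\pi(\tau+1)=2^\tau\iq_R-\sum_{t\colon h_R^\pi(t)=1}2^{\tau+1-t}$.
So the terminal quantity is the initial stock scaled by $2^\tau$, minus a binary number whose one-bits are exactly the harvest periods. Choosing $\lb_R$ and $\ub_R$ so that they pin $\q_R^\pi(\tau+1)$ to a single value forces the set of harvest periods of $\r_R$ to encode exactly the target $T$. The intermediate capacity and viability constraints on $\r_R$ must stay slack, which I achieve by making $\iq_R$ large and $\ub_R$ huge. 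The catch is that the last period, where the terminal pin acts, has to be checked separately against $\ub_R$.

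\textbf{Step 2: item blocks.} I split the horizon into $N$ consecutive blocks, one per item. Block $i$ consists of $L$ periods, where $L$ exceeds the bit length of $\sum_i x_i$, and uses a base separated enough that block contributions do not interact. Within block $i$, let $P_i$ be the set of positions corresponding to the one-bits of $x_i$, shifted so that harvesting $\r_R$ at all positions of $P_i$ contributes exactly $x_i$ times a common power of $2$. The intended solutions harvest $\r_R$ either at all of $P_i$ (item chosen) or at none of it. Then $\sum_{i\in J}x_i=T$ corresponds exactly to the pinned terminal value. The correctness argument reads off $J$ from the blocks in which $\r_R$ was harvested.

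\textbf{Step 3: the all-or-nothing gadget (main obstacle).} The hard part is preventing the agent from harvesting $\r_R$ on an arbitrary subset of positions, which would let it write $T$ directly. I would use two ingredients.
\begin{itemize}
\item A filler resource $\r_{F}$ (a translation with tight capacity) that is urgent in every period. It forces the single agent to harvest \emph{something} in every period, so every period is either a $\r_R$-harvest or a harvest of another resource.
\item For each block $i$, a gadget resource $\r_{D_i}$ with slope $\alpha=2$, restricted to block $i$ by viability thresholds, whose terminal quantity is pinned to one of two values. These two values are exactly those produced by harvesting it at all positions of $P_i$ or at none. Because timing is encoded injectively in binary, no other subset of $P_i$ yields an allowed value.
\end{itemize}
Sustainability constraints are conjunctions of intervals, not disjunctions, so realising this two-value pin is the main difficulty. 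My first attempt is the following. I pin the terminal quantity of $\r_{D_i}$ to a single value, namely the one obtained when \emph{every} position of block $i$ is taken by either $\r_R$ or $\r_{D_i}$, with $\r_{D_i}$ taking exactly the complement of one of the two admissible patterns. I then argue, using the injectivity of the binary encoding across disjoint position sets, that only the two intended complementary patterns are possible. If this fails, I fall back to doubling the block: I add mirror positions, and a translation resource checks that the counts on the positions of $P_i$ and on their mirrors agree.

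All numbers have polynomially many bits and $\tau=N\cdot L$, so the construction runs in polynomial time. Envy-freeness is vacuous, which completes the plan.
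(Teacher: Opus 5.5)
Your proposal has a genuine gap: the arithmetic that \textsc{Subset Sum} needs never happens. With a single agent of weight one, each resource is harvested by $0$ or $1$ in every period. So the terminal quantity of a slope-$2$ resource is $2^\tau s-\sum_{t\in H}2^{\tau+1-t}$, which is an injective function of its set $H$ of harvest periods: it is just a binary numeral. In Step~2 the blocks are disjoint in time, so block~$i$ contributes $x_i\cdot 2^{c_i}$ with an exponent $c_i$ fixed by the block's position. A \emph{common} power of two would require two items to use the same periods whenever their binary expansions share a bit, which one agent cannot do. Hence the contributions of chosen items never fall into the same digits (you even separate them deliberately). Pinning $q_R(\tau+1)$ to one value therefore prescribes a single harvest set $H$, which the reduction could only write down if it already knew $J$; the sum condition $\sum_{i\in J}x_i=T$ is not encoded at all.

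The same injectivity breaks Step~3. Pinning the terminal quantity of $r_{D_i}$ to a single value admits exactly one harvest pattern, never two, so your ``first attempt'' cannot create an all-or-nothing choice. The mirror-block fallback is not specified concretely enough to say whether it even works, and in any case it does not address the missing additive structure. Separately, a translation filler that is urgent in \emph{every} period forces the lone agent onto $r_F$ in every period, leaving no period for $r_R$ or $r_{D_i}$.

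The missing idea is that with one agent the hardness must come from \emph{timing} constraints, not from adding large numbers. Accordingly, the paper reduces from \textsc{Separated Position Matching}, whose numbers are $O(N)$ and can therefore be written directly as period indices under the unary horizon. There is one agent of weight $W=2^T$. Blocker resources with $s_t=W-t+1$, $u_t=W$, $\rho(x)=x+1$ are each urgent exactly at period~$t$; they force the agent in every period except $T+b$ ($b\in B$) and $2T+z$ ($z\in Z$). Each candidate resource has $\rho_i(x)=x/2+W$ and $s_i=3W/2$, so without harvesting its quantity is $2W-W/2^t$. Its upper bound $2W-2^{-d_i}$ then acts as a deadline: the first harvest must occur at some $T+b$ with $b\le d_i$, and the second at some $2T+z$ with $z\le d_i+b$. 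With $N$ candidates and $N$ free periods in each phase, pigeonhole forces a bijection in both phases, which is exactly a valid matching. Binary encoding is needed only for $W$ and the exponentially fine bounds $2^{-d_i}$, not to encode item sizes.
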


We reduce from the \prob{Separated Position Matching} problem,
where,
given a non-decreasing sequence $D = (d_1,\dots, d_N)$ of positive integers
and two sets $B = \{b_1,\dots, b_N\}$
and
$Z = \{z_1,\dots, z_N\}$,
each of $N$ distinct
positive integers,
such that $\max D= \max B < \min Z$
and all numbers are in~$O(N)$,
the question is whether there is a set~$\calM\subseteq \set{N}\times B\times Z$ of triples
so that every $i \in \set{N}$, $b\in B$,
$z\in Z$ occurs exactly once in $\calM$ and for every triple
$(i, b, z) \in \calM$ it holds that $b \leq d_i$ and $z \leq d_i + b$.
\NP-hardness of the problem follows from~Corollary~4 together with Lemma~13
of Kanellopoulos~\etal~\cite{KanellopoulosPK26arxiv}.

\begin{proof}[Proof idea of \cref{thm:n:bin:pnph}]
 Let~$T\ceq 1+\max\{\max Z,\max B+d_N\}$
and~$\tau=3T-1$.
We construct one agent~$a$ of weight~$W=2^T$.
 Blocker resources force~$a$ to harvest in all periods except
 $T+b$ for~$b\in B$ and~$2T+z$ for~$z\in Z$.
 These thus form two phases with~$N$ available periods each.
 For each index~$i$, we create a candidate resource~$\r_i$
 with renewal function~$\rf_i(x)=x/2+W$.
 Starting from~$3W/2$, its quantity approaches~$2W$.
 Its upper bound~$2W-2^{-d_i}$ forces its first harvest to occur by period~$T+d_i$.
 If this harvest occurs in period~$T+b$,
 the next harvest must occur by period~$2T+b+d_i$.
 Thus, harvesting~$\r_i$ in periods~$T+b$ and~$2T+z$ implies~$b\leq d_i$ and~$z\leq d_i+b$.
 Each of the~$N$ candidates requires a harvest in each phase.
 Since~$a$ can harvest only one resource per period,
 every~$b\in B$ and~$z\in Z$ is used exactly once.
 Hence, every sustainable assignment determines a matching.
\end{proof}

\appendixproof{thm:n:bin:pnph}{
\begin{construction}
 \label{constr:n:bin:pnph}
 Let~$I=(D,B,Z)$ be an instance of \prob{Separated Position Matching}
 with $D = \langle d_1,\dots, d_N\rangle$,
 $B = \{b_1,\dots, b_N\}$,
 and
 $Z = \{z_1,\dots, z_N\}$.
 We construct an instance~$I'=(\rSet,A,\tau)$ of \fish{} as follows.
 Let~$T\ceq 1+\max\{\max Z,\max B+d_N\}$ and the number of periods be~$\tau=3T-1$.
 We have one agent~$a$ with weight~$W\ceq 2^{T}$.
 Note that $\log(W)\in O(N)$.

 Let~$T_B\ceq \{T+b\mid b\in B\}$ and~$T_Z\ceq \{2T+z\mid z\in Z\}$.
 By the choice of~$T$,
 the two sets are disjoint and~$\max T_Z\leq \tau$.
 For every~$t\in\set{\tau}\setminus (T_B\cup T_Z)$,
 there is a blocker resource~$r_t$,
 with~$\iq_t=W-t+1$,
 $\lb_t=1$,
 $\ub_t=W$,
 and~$\rf(x)=x+1$.
 Since~$W\geq\tau$, each blocker must be harvested in its designated
period and cannot be harvested again within the horizon.
 Now, construct the set~$\rSet_D=\{\r_i\mid d_i\in D\}$,
where each resource~$\r_i$ has  \[ \iq_i=2W-\frac{W}{2} ,\, \lb_i=1 ,\, \ub_i=2W-\frac{1}{2^{d_i}} ,\, \rf_i(x)=\frac{1}{2}x+W . \]
 Note that, if~$\r_i$ has not been harvested before period~$t$, then~$\q_i(t)=2W-W/2^t$.
 All resources are approved by $a$.
\end{construction}

\begin{lemma}
 \label{lem:n:bin:pnph}
 Let~$I'$ be a \yes-instance.
 For every resource~$\r_i\in\rSet_D$,
 agent~$a$ is assigned to~$\r_i$ in period~$T+b$
 for some~$b\in B$ with~$b\leq d_i$,
 and in period~$2T+z$
 for some~$z\in Z$ with~$z\leq d_i+b$.
 Moreover, in every period from~$T_B\cup T_Z$,
 agent~$a$ is assigned to a resource from~$\rSet_D$.
\end{lemma}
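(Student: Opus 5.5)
I fix a sustainable \xassignment[$\tau$]~$\pi$ for the single agent~$a$. Since there is only one agent, envy-freeness is trivial, so only sustainability matters. The argument runs in four steps.

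\textbf{Blockers pin down the free periods.} Take a blocker~$\r_t$ with $t\in\set{\tau}\setminus(T_B\cup T_Z)$. If it is not harvested before period~$t$, its quantity in period~$s\le t$ is $W-t+s$, so $\q_t(t)=W$. Now suppose it is not harvested in period~$t$ either. Then $\q_t(t+1)=W+1>\ub_t$, which violates upper capacity. Could $a$ harvest it earlier, in some period $s<t$? No: its quantity there is $W-t+s<W$, so harvesting weight~$W$ would over-extract. Hence $a$ is assigned to~$\r_t$ in period~$t$. Since $a$ is assigned to at most one resource per period, this occupies $a$ in every period outside $T_B\cup T_Z$. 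Consequently, every harvest of a candidate~$\r_i$ happens in a period from $T_B\cup T_Z$. The same remark gives the ``moreover'' part once I show each of these $2N$ periods is actually used by a candidate.

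\textbf{Unharvested candidates and the first deadline.} For each candidate~$\r_i$, I compute the unharvested trajectory $\q_i(t)=2W-W/2^{t}$. This is the closed form noted in \cref{constr:n:bin:pnph}; I would verify the indexing against $\iq_i=3W/2$ and the fixed point~$2W$ of~$\rf_i$, and shift by one if needed. Upper capacity demands $\q_i(t)\le 2W-2^{-d_i}$, that is $W/2^{t}\ge 2^{-d_i}$, that is $t\le T+d_i$ (using $W=2^T$). So $\r_i$ must be harvested in some period $t\le T+d_i$. Over-extraction is fine, because $\q_i\ge 3W/2\ge W$ before the first harvest. Every free period $\le T+d_i\le T+\max B<2T$ lies in $T_B$. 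So the first harvest is in period $T+b$ with $b\in B$ and $b\le d_i$.

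\textbf{Post-harvest dynamics and the second deadline.} After a harvest of~$W$ at quantity $2W-W/2^{T+b}$, the remainder is $W-W/2^{T+b}$. Renewal gives $3W/2-W/2^{T+b+1}$, and the sequence again converges to~$2W$ with gap $W/2^{k}+W/2^{T+b+k}$ after $k$ periods. I then bound the gap from both sides. It is at least $W/2^{k}$, which forces the next harvest before $k$ exceeds roughly~$d_i$ beyond~$T$. It is at most $2W/2^k$, which keeps the deadline within one of the intended value. The intent is to extract the exact condition that the next harvest occurs by period $2T+b+d_i$. This index bookkeeping, where an off-by-one in the tiny $2^{-(T+b)}$ correction term matters, is the step I expect to be the main obstacle. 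I would handle it by writing the exact closed form for the quantity after one harvest at time~$t_1$ and comparing $2^{-d_i}$ with a sum of two powers of two whose exponents differ by at least~$T$.

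Moreover, between $T+b$ and $2T$ the candidate cannot be harvested in any other $T_B$ period, because its quantity stays below~$2W$ and, if too close to the first harvest, below~$W$. Also, $2T+b+d_i\le 2T+T-1=\tau$, so the second harvest really is required within the horizon. That second harvest must fall in $T_Z$ (periods after $T+\max B$ that are free lie in~$T_Z$), giving a period $2T+z$ with $z\le d_i+b$.

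\textbf{Counting.} Each of the $N$ candidates needs one harvest in $T_B$ and one in $T_Z$. Since $|T_B|=|T_Z|=N$ and $a$ harvests at most one resource per period, each period of $T_B\cup T_Z$ is used exactly once, by a candidate. This proves the lemma.
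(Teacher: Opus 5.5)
\textbf{Where the proposal breaks.} Your route is the paper's: blockers pin $a$ to every period outside $T_B\cup T_Z$; the unharvested closed form $q_i(t)=2W-W/2^t$ gives the first deadline $T+d_i$; the post-harvest closed form gives the second deadline $2T+b+d_i$; counting against $|T_B|=|T_Z|=N$ finishes. Your indexing of the closed form is correct. However, one step fails as justified. You claim that between $T+b$ and $2T$ the candidate cannot be harvested again in a $T_B$ period because its quantity is below $W$ when close to the first harvest. This is false. The harvest leaves $W-2^{-b}$, but renewal gives $q_i(T+b+1)=3W/2-2^{-(b+1)}>W$, and afterwards the quantity only increases toward $2W$. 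Since $\ell_i=1$ and harvesting never violates the upper bound, nothing local to $r_i$ forbids a second harvest in $T_B$.

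Without excluding this, your step~3 does not show that the next harvest after $T+b$ lies in $T_Z$. If $r_i$ were harvested twice in $T_B$, the relation $z\le d_i+b$ would not follow for the first~$b$. What excludes it is global, so you must do the counting on $T_B$ \emph{before} the second-deadline analysis, as the paper does. Step~2 gives each of the $N$ candidates at least one harvest in $T_B$. Since $|T_B|=N$ and $a$ harvests one resource per period, each candidate is harvested exactly once in $T_B$. Only then is the next harvest after $T+b$ forced into $T_Z$, and a second count over $T_Z$ gives the rest.

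\textbf{The deadline computation.} You deferred it, and your crude bounds have their roles reversed. The lower bound $W/2^k$ on the gap only shows that no violation occurs for $k\le T+d_i$. The upper bound $2W/2^k$ forces a violation only at $k=T+d_i+2$, which would yield $z\le d_i+b+1$. The exact value is needed. At $k=T+d_i+1$ the gap is
\[
2^{-(d_i+1)}+2^{-(T+b+d_i+1)}<2^{-d_i},
\]
because $T+b\ge 1$. Hence $q_i(2T+b+d_i+1)>u_i$, while $q_i(2T+b+d_i)\le u_i$. This is the one-line comparison you anticipated, and it is exactly what the paper carries out.
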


\begin{proof}
 Let~$\pi$ be a sustainable and envy-free~\xassignment[$\tau$].
 Due to the blocker structure, we know that~$a$ can only be assigned to a resource from~$\rSet_D$ in periods within~$T_B$ and~$T_Z$.
 Now suppose towards a contradiction that there is a resource~$\r_i$ such that~$a$ is not assigned to it in any of the periods from~$T_B$. Since~$\max D= \max B$,
 we have that~$\q_i^\pi(T+d_i+1)=2W-\frac{1}{2^{d_i+1}}>\ub_i$,
 a contradiction to sustainability.
 It follows that~$a$ is assigned to~$\r_i$ in period~$T+b\in T_B$ with~$b\leq d_i$,
 and~$a$ is assigned to each resource from~$\rSet_D$ in a period from~$T_B$.
 Since there are~$N$ periods in~$T_B$ and~$N$ resources in~$\rSet_D$,
 it follows that~$a$ is assigned exactly once to each resource from~$\rSet_D$ in a unique period from~$T_B$.
 Now suppose towards a contradiction that
 there is a resource~$\r_i$ such that~$a$ is not assigned to it in any of the periods from~$T_Z$.
 Let~$a$ be assigned to~$\r_i$ at period~$T+b\in T_B$.
 Then,
 $\q_i^\pi(T+b)=2W-\frac{1}{2^{b}}$
 and~$\q_i^\pi(T+b+1)=\frac{1}{2}(2W-\frac{1}{2^{b}}-W)+W = 2W-\frac{1}{2^{b+1}}-\frac{W}{2}$.
 Thus,
 after~$T+d_i$ further periods,
 we have $\q_i^\pi(T+b+T+d_i)= 2W-\frac{1}{2^{b+T+d_i}}-\frac{1}{2^{d_i}}\leq\ub_i$,
 but $\q_i^\pi(T+b+T+d_i+1)= 2W-\frac{1}{2^{b+T+d_i+1}}-\frac{1}{2^{d_i+1}}>2W-\frac{1}{2^{d_i+1}}-\frac{1}{2^{d_i+1}}=\ub_i$.
 Note that since~$b+d_i\leq T-1$,
 we have~$2T+b+d_i+1\leq \tau+1$.
 It follows that~$a$ is assigned to~$\r_i$ in period~$2T+z\in T_Z$ with~$z\leq d_i+b$,
 and~$a$ is assigned to all resources from~$\rSet_D$ in all periods from~$T_Z$.
\end{proof}

\begin{proof}[Proof of \cref{thm:n:bin:pnph}]
 Let~$I'$ be the instance of \fish{}
 obtained from an instance $I=(D,B,Z)$ of \prob{Separated Position Matching}
 with $D = \langle d_1,\dots, d_N\rangle$,
 $B = \{b_1,\dots, b_N\}$,
 and
 $Z = \{z_1,\dots, z_N\}$,
 using \cref{constr:n:bin:pnph}.
 Since all input numbers are in~$O(N)$,
we have~$T,\tau\in O(N)$.
The construction introduces~$O(N)$ resources,
and every numerical parameter has a binary encoding of size~$O(N)$.
Thus, the construction runs in polynomial time.
 We prove that~$I$ is a \yes-instance if and only if~$I'$ is a \yes-instance.

 \RD{}
 Let~$M$ be a solution to $I$.
 For each~$t\in \set{\tau}\setminus (T_B\cup T_Z)$,
 assign~$a$ to the blocker resource~$r_t$.
 For each~$(i,b,z)\in M$,
 assign~$a$ to~$\r_i$ in periods~$T+b$ and~$2T+z$.
 This is possible since each~$b\in B$ and~$z\in Z$ appears in exactly one triple of~$M$.
 It remains to show that $\pi$ is sustainable.
 Since~$b\leq d_i$,
 we have~$\q_i^\pi(T+b)=2W-\frac{1}{2^{b}}\leq \ub_i$,
 and $\q_i^\pi(T+b+1)=2W-\frac{1}{2^{b+1}}-\frac{W}{2}<\ub_i$.
 Since~$z\leq d_i+b$ and~$z>b$,
 we have~$z=b+y$ with~$y\leq d_i$,
 and $\q_i^\pi(2T+b+y)= 2W-\frac{1}{2^{b+T+y}}-\frac{1}{2^{y}}\leq\ub_i$,
 and $\q_i^\pi(2T+b+y+t)= 2W-\frac{W}{2^t}-\frac{1}{2^{T+b+y+t}}-\frac{1}{2^{y+t}}$.
 By the choice of~$\tau$,
 we consider~$t \leq T-b-y\leq T$.
 In particular, we have
$\q_i^\pi(\tau+1)\leq 2W-\frac{W}{2^T}
=2W-1<\ub_i$.
 Since all resource quantities are at least one in every period
and at~$\tau+1$, viability holds.
Since there is only one agent, the assignment is trivially envy-free.

 \LD{}
 Let~$\pi$ be a sustainable and envy-free \xassignment[$\tau$]. By construction every blocker resource~$r_t$ has to be harvested in period~$t$ for every~$t\in\set{\tau}\setminus (T_B\cup T_Z)$. By~\cref{lem:n:bin:pnph},
let~$T+b_i'\in T_B$ be the first harvesting period of~$\r_i$,
and let~$2T+z_i'\in T_Z$ be its second harvesting period.
We claim that~$\calM\ceq\{(i,b_i',z_i')\mid i\in\set{N}\}$
is a solution to~$I$.
 Due to \cref{lem:n:bin:pnph},
 we know that
 $B=\{b_i'\mid i\in\set{N}\}$ and $Z=\{z_i'\mid i\in\set{N}\}$,
 and hence every number is covered exactly once.
 Also due to \cref{lem:n:bin:pnph},
 we know that~$b_i'\leq d_i$ and~$z_i'\leq d_i+b_i'$.
 Hence,
 $M$ is a solution to~$I$.
\end{proof}}

With three agents, hardness persists even under unary encoding
and when all renewal functions are translations.

\newcommand{\as}{a_{\rm s}}
\newcommand{\at}{a_{\rm t}}
\newcommand{\av}{a_{\rm v}}
\newcommand{\rd}{\overline{d}}
\newcommand{\rb}{\overline{b}}
\newcommand{\rz}{\overline{z}}
\newcommand{\mainLine}[1]{%
	\draw[->,ultra thick] (-0.1,0) -- (7.1,0);

	\foreach \x in {1,...,6}{
		\node (x\x) at (\x,0)[inner sep=0pt]{};
		\draw (\x,0.08) -- (\x,-0.08);
	}

	\node (x0) at (0.25,0)[inner sep=0pt]{};
	\draw (0.25,0.08) -- (0.25,-0.08);
	\node (x7) at (6.75,0)[inner sep=0pt]{};
	\draw (6.75,0.08) -- (6.75,-0.08);

	\foreach \x/\l in {%
		0/{selector},
		1/{cleanup},
		2/{target~$z$},
		3/{cleanup},
		4/{verify $i$},
		5/{cleanup},
		6/{verify $b_j$}
		}{%
		\node[above,align=center,anchor=south,yshift=-1 pt] at (x\x) {\scriptsize\strut\l};
	}

	\ifnum#1>0
		\foreach \x/\l in {%
			0/{$S_i$},
			1/{$c_e$},
			2/{$\theta_e,\ T_z$},
			3/{$c_e^{(1)}$},
			4/{$\mathbb D_i$},
			5/{$c_e^{(2)}$},
			6/{$\mathbb B_j$},
			7/{$\tau$}
			}{%
			\node[above,align=center,anchor=south,yshift=7 pt] at (x\x) {\scriptsize\strut\l};
		}
	\else%
		\node[above,align=center,anchor=south,yshift=-1 pt] at (x7) {\scriptsize\strut$\tau$};
	\fi%

	\node[below] at (x0) {\small~$\approx 3L$};
	\node[below] at (x1) {\small~$\approx 4L$};
	\node[below] at (x2) {\small~$\approx 6L$};
	\node[below] at (x3) {\small~$\approx 8L$};
	\node[below] at (x4) {\small\ \ $\approx 10L$};
	\node[below] at (x5) {\small\ \ $\approx 12L$};
	\node[below] at (x6) {\small\ \ $\approx 14L$};
	\node[below] at (x7) {\small$15L$};
}
\newcommand{\unselLine}{%
	\node (uns) at (1,-0.75)[inner sep=0pt]{};
	\draw[very thick,->,red]
		(x1|-uns) -- ($(x5|-uns)-(0.1,0)$);

	\node[left,align=right,red,xshift=-1.75mm] at (uns)
		{\scriptsize unselected\\[-2mm]\scriptsize candidate};

	\foreach \x in {1,3,5}
		\draw[densely dotted,thick,red]
			(x\x) -- (x\x|-uns);

	\node[vernode,draw=red,text=red] (uv1) at (x1|-uns) {$Q$};
	\node[vernode,draw=red,text=red] (uv2) at (x3|-uns) {$Q$};
	\node[vernode,draw=red,text=red] (uv3) at (x5|-uns) {$Q$};

	\node[below=0.3mm of uv1,align=center,font=\scriptsize,red]
		{$q_e=Q$};

	\node[below=0.3mm of uv2,align=center,font=\scriptsize,red]
		{$q_e=\ub_e$};

	\node[below=0.3mm of uv3,align=center,font=\scriptsize,red]
		{$Q\le q_e\le\ub_e$};
}
\newcommand{\selLine}{%
	\node (sel) at (6,-1.5)[inner sep=0pt]{};
	\draw[very thick,->,green!50!black]
		(x0|-sel) -- ($(x6|-sel)-(0.1,0)$);

	\node[right,align=left,green!50!black,xshift=+2mm] at (sel)
		{\scriptsize selected\\[-2mm]\scriptsize candidate};

	\foreach \x in {0,2,4,6}
		\draw[densely dotted,thick,green!50!black]
			(x\x) -- (x\x|-sel);

	\node[selnode] (ss) at (x0|-sel) {$3Q/4$};

	\node[selnode] (st) at (x2|-sel) {$3Q/4$};

	\node[vernode,draw=green!50!black,text=green!50!black]
		(sd) at (x4|-sel) {$Q$};

	\node[vernode,draw=green!50!black,text=green!50!black]
		(sb) at (x6|-sel) {$Q$};

	\node[novnode] (sc) at (x1|-sel) {};
	\node[novnode] (sl1) at (x3|-sel) {};
	\node[novnode] (sl2) at (x5|-sel) {};

	\node[below=0.3mm of sc,align=center,font=\scriptsize,
				green!50!black]
		{$q_e=Q/4$};

	\node[below=0.3mm of st,align=center,font=\scriptsize,
				green!50!black]
		{$q_e\ge 3Q/4$};

	\node[below=0.3mm of sl1,align=center,font=\scriptsize,
				green!50!black]
		{$q_e<Q$};

	\node[below=0.3mm of sd,align=center,font=\scriptsize,
				green!50!black]
		{$Q\le q_e\le\ub_e$};

	\node[below=0.3mm of sl2,align=center,font=\scriptsize,
				green!50!black]
		{$q_e<Q$};

	\node[below=0.3mm of sb,align=center,font=\scriptsize,
				green!50!black]
		{$Q\le q_e\le\ub_e$};
}
\begin{theorem}[\appref{thm:n:unary:pnph}]
	\label{thm:n:unary:pnph}
	\fish{} is \NP-hard even for three agents,
	unary encoding,
	and all renewal functions being translations.
\end{theorem}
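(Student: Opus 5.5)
We reduce from a strongly \NP-hard numerical matching problem whose numbers are polynomially bounded, so that unary encoding is harmless. We take \prob{Numerical 3-Dimensional Matching}: given multisets of integers $\{d_1,\dots,d_N\}$, $\{b_1,\dots,b_N\}$, $\{z_1,\dots,z_N\}$ with $\sum_i d_i+\sum_j b_j+\sum_k z_k=N\cdot Z$, decide whether they can be partitioned into $N$ triples $(d_i,b_j,z_k)$ with $d_i+b_j=Z-z_k$. The starting point is the observation stated for translations in the introduction. For $\rf(x)=x+\beta$ we have $\q^\pi(t)=\iq+(t-1)\beta-\sum_{t'<t}h^\pi(t')$. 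So the quantity of a resource depends only on how much has been harvested from it, not on when. Hence the period in which a resource first becomes urgent, or first reaches its viability bound $\lb$, is an affine function of its cumulative harvest. Numbers will therefore be encoded by \emph{delays}, namely the distance in periods between two events of one resource. We choose $L$ polynomial in $N+Z$, set $\tau\approx 15L$, and use the three agents with distinct roles: a selector $\as$, a target agent $\at$, and a verifier $\av$.

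The construction proceeds in phases along the horizon. In each phase, blocker resources in the style of \cref{thm:n:bin:pnph} force a particular agent to harvest in all but designated periods. The blockers are translations with $\ub=\iq$ and $\beta=1$, approved only by the intended agent, and urgent exactly once. The phases are:
\begin{compactitem}
\item a selector phase, around $3L$, in which $\as$ harvests candidate resources $S_i$;
\item a cleanup phase;
\item a target phase, around $6L$, in which $\at$ must hit a resource $T_z$ at a period determined by $z$;
\item a further cleanup phase;
\item a verification phase $\mathbb D_i$ for the $d$-numbers;
\item a last cleanup phase;
\item a verification phase $\mathbb B_j$ for the $b$-numbers, ending at $\tau$.
\end{compactitem}
Each candidate resource $e$, representing a potential pairing, starts at $Q$ with upper bound slightly above $Q$. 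Harvesting it by $3Q/4$ in the selector phase pushes its later urgency time back by a fixed offset. An unselected candidate instead must be repeatedly harvested by cleanup harvests at the three cleanup phases, which keep it between $Q$ and $\ub_e$. A selected candidate drops to $Q/4$ and climbs back to $Q$ exactly at a period that encodes $d_i+b_j$ through its initial offset $\theta_e$. The verifier $\av$ must then harvest it at the designated periods in $\mathbb D_i$ and $\mathbb B_j$. Harvesting it earlier violates viability protection or no over-extraction, and harvesting it later violates the upper capacity. Envy-freeness is used only to balance the three agents' totals: we choose approvals so that each agent's utility equals the number of periods she is busy, and we fix these counts via blockers, so that any deviation from one use per number creates envy. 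Alternatively we give the agents disjoint approval sets so that envy-freeness is automatic, and rely purely on sustainability; we would first try the latter for simplicity.

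For the forward direction we will exhibit the schedule and verify all quantities, which is routine because of the closed form above. For the backward direction we argue in order. First, the blockers pin down every agent's activity outside the designated periods. Second, each candidate that is not cleaned up must be harvested by $\as$. Third, counting the available selector periods shows that exactly $N$ candidates are selected and each number is used once. Fourth, the urgency timing of the selected candidates forces the equations $d_i+b_j+z_k=Z$. The main obstacle is the third and fourth steps combined: the time-independence under translations must be made to rule out mixed strategies, such as a candidate being partially harvested by a cleanup agent and partially by the selector, or being harvested twice in small pieces. For this we choose agent weights, for example $3Q/4$, $Q/4$ and a small verifier weight, together with gaps between $Q$, $3Q/4$ and $Q/4$, so that the only feasible cumulative harvests of any resource at each phase boundary are the two intended values. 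All numbers must remain polynomial in $N+Z$, which fixes $Q$ and $L$ as suitable polynomials.
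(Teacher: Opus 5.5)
\textbf{Gap 1: nothing forces the selector to act.} The blockers only fix what $\as$ does outside the designated periods; in the periods $S_k$ she may simply stay idle. Sustainability constraints are per resource, and every candidate has a feasible alternative, namely your unselected cleanup trajectory by $\av$. So under your preferred option (disjoint approvals, pure sustainability) the schedule in which $\as$ selects nothing and every candidate is cleaned up is sustainable for every input, and the reduction maps everything to yes-instances. Your third step, ``counting the available selector periods shows that exactly $N$ candidates are selected'', only yields \emph{at most} $N$. Disjoint approvals are also incompatible with your own design, in which the same candidate is harvested first by $\as$ and later by $\av$. Your fallback remark about envy balancing the totals points the right way, but it needs a concrete agent who is busy in every period on a resource that $\as$ approves but cannot harvest. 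None of $\as,\at,\av$ plays that role. In the paper, the third agent is exactly such a trigger: a weight-$1$ agent $a_0$ with a resource $r_0$ ($s_0=u_0=2$, $\ell_0=1$, $\rho_0(x)=x+1$) that she must harvest in every period. Since $\as$ approves $r_0$ but is too heavy to harvest it, envy-freeness forces $\as$ to be assigned in every period. Hence she harvests a candidate in each of the $N$ selection periods and each of the $N$ target periods.

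\textbf{Gap 2: the third coordinate never reaches the candidates.} Your target agent $\at$ hits a resource $T_z$, but nothing links this harvest to any candidate resource. Thus $z$ never affects a candidate's quantity, and your fourth step cannot enforce $d_i+b_j+z_k=Z$. In the paper, $T_z$ is a \emph{period}, and $\as$ harvests a previously selected candidate $r_e$ a second time in period $T_z$. This is possible only if $T_z\ge\theta_e$, which holds exactly when $z\le d_i+b_j$. Unselected candidates stay below $3Q/4$ throughout the target phase, and each selected candidate can be harvested at most once there. Therefore the $N$ forced target harvests yield a bijection between the selected pairs and $Z$. Because renewal is a translation, the later urgency periods $\mathbb D_i=\theta_e+2\Gamma_e$ and $\mathbb B_j=\theta_e+4\Gamma_e$ depend only on $i$ and on $j$, respectively, and not on the chosen $z$. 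This requires choosing $\beta_e$ with $Q/\beta_e\in[2\Gamma_e+\frac14,\,2\Gamma_e+\frac25)$.

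\textbf{Weights.} Your suggested small verifier weight cannot perform the cleanup harvests of order $Q$ that keep unselected candidates below $\ub_e$. Moreover, a light agent approving candidates creates exactly the piecemeal harvests you identify as the main obstacle. The paper avoids this by letting only $\as$ (weight $3Q/4$) and $\av$ (weight $Q$) approve candidates, while the light agent $a_0$ approves only $r_0$.
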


We again reduce from the \prob{Separated Position Matching} problem~\cite{KanellopoulosPK26arxiv}.

\begin{proof}[Proof idea of \cref{thm:n:unary:pnph}]
 We have three agents:
 a selection agent~$\as$ of weight~$3Q/4$ for some large number~$Q$,
 a trigger agent~$a_0$ that enforces with a private resource~$\r_0$
 that~$\as$ has to harvest in each period,
 and a verifier agent~$\av$ with weight~$Q$.
 Agent $\as$ is supposed to first select~$N$ feasible tuples~$e=(i,j)$ such that~$b_j\leq d_i$,
 and then to select a target~$z$ for each selected tuple~$(i,j)$
 such that~$z\leq d_i+b_j$.
 Agent~$\av$ is used to verify that for every two distinct tuples~$(i,j)$ and~$(i',j')$ selected by~$\as$ we have that~$i\neq i'$ and~$j\neq j'$. The two validation periods for a selected candidate~$(i,j)$
depend only on~$i$ and~$j$, respectively.
Thus, two distinct tuples sharing an index would require~$\av$ to harvest two resources in the same period.
 This already captures the requirement of a solution to the input problem.
 Now,
 we have a candidate resource~$\r_e$ for each feasible tuple~$e=(i,j)$.
 A resource is either selected twice by~$\as$,
 and then twice by~$\av$ as described above,
 or it is not selected by~$\as$ and then follows a trajectory for unselected candidates,
 in which~$\av$ harvests it three times
 (see \cref{fig:n:unary:pnph} for an overview).
	\begin{figure}
	\centering
		\begin{tikzpicture}[
				x=1.75cm,y=1cm,>=latex',
				selnode/.style={
					rectangle,draw=green!50!black,very thick,
					fill=white,
					inner sep=1.75pt,font=\tiny
				},
				tarnode/.style={
					diamond,draw=green!50!black,very thick,
					fill=white,
					inner sep=0.5pt,font=\tiny
				},
				vernode/.style={
					circle,draw,very thick,
					fill=white,
					inner sep=1pt,font=\tiny
				},
				novnode/.style={
					star,draw=green!50!black,very thick,
					fill=green!50!black,
					minimum size=1mm,inner sep=0pt
				}
			]

			\mainLine{0}
			\unselLine{0}
			\selLine{0}
		\end{tikzpicture}
		\caption{Illustration to the proof of \cref{thm:n:unary:pnph}.
		Rectangular
		and circular node shapes correspond to agents~$\as$
		and~$\av$,
		respectively,
		each enclosing the corresponding agent weight.}
		\label{fig:n:unary:pnph}
	\end{figure}
 These two trajectories are parallel,
 but their phases alternate and
 are separated by large gaps of order~$L$,
 ensuring that
 they do not overlap or interfere.
 In particular,
 if~$\as$ selects a candidate resource~$\r_e$,
 $\av$ cannot harvest~$\r_e$ before~$\as$ gets another chance to harvest it.
 To avoid envying~$a_0$, $\as$ must harvest in every unblocked period.
During the target phase, only previously selected candidates
are available to her, and each can be harvested at most once.
Thus, she must harvest every selected candidate a second time.
 Technically,
 $\as$ is blocked by private resources in all time steps except for~$N$ in each of the selection and target phases. The~$N$ target periods correspond to the distinct values~$z\in Z$,
so the~$N$ second harvests use every target exactly once.
 Each candidate resource~$\r_{(i,j)}$ has a translation as a renewal function,
 whose rate depends on~$i$, $d_i$, and~$b_j$.
This ensures that
 each event of a candidate resource
 (selection, target, validation, or cleanup)
 is unique and well-timed.
 In particular,
 it does not depend on when a previous harvest happened, but
 only on the amount harvested.
 This gives exactly the initial switch between the two trajectories.
 Moreover, the rate is chosen so that~$\av$ cannot be assigned
to candidate resources during the selection phase,
nor to selected candidates during the target phase.
At their cleanup and validation periods,
these resources are urgent and have quantity at least~$Q$,
so~$\av$ must harvest them.
\end{proof}

\appendixproof{thm:n:unary:pnph}
{%
\begin{construction}
 \label{constr:n:unary:pnph}
 Let~$I=(D,B,Z)$ be an instance of \prob{Separated Position Matching}
 with $D = ( d_1,\dots, d_N)$,
 $B = \{b_1,\dots, b_N\}$,
 and
 $Z = \{z_1,\dots, z_N\}$,
 where~$N$ is sufficiently large.
 We construct an instance~$I'$ of \fish{} as follows
 (see~\cref{fig:n:unary:pnph:full} for an overview).
 \begin{figure}
	\centering
		\begin{tikzpicture}[
				x=1.75cm,y=1cm,>=latex',
				selnode/.style={
					rectangle,draw=green!50!black,very thick,
					fill=white,
					inner sep=1.75pt,font=\tiny
				},
				tarnode/.style={
					diamond,draw=green!50!black,very thick,
					fill=white,
					inner sep=0.5pt,font=\tiny
				},
				vernode/.style={
					circle,draw,very thick,
					fill=white,
					inner sep=1pt,font=\tiny
				},
				novnode/.style={
					star,draw=green!50!black,very thick,
					fill=green!50!black,
					minimum size=1mm,inner sep=0pt
				}
			]

			\mainLine{1}
			\unselLine{1}
			\selLine{1}
		\end{tikzpicture}
		\caption{Illustration to the proof of \cref{thm:n:unary:pnph}.
		Rectangle
		and circle node shapes correspond to agents~$\as$
		and~$\av$,
		respectively,
		each enclosing the corresponding agent weight.}
		\label{fig:n:unary:pnph:full}
	\end{figure}
 Define
 \[ M = 4\cdot (1 + \max Z)^2,\ \
	  L = M^2,\ \
	  Q = 8\cdot (5\cdot L + 1)^2,\ \
    \text{and }
    \tau = 15 L.
 \]
 By this choice,
 we have
 $N\ll M$,
 $M\cdot \max Z\ll L\ll Q$,
 and~$L<\tau\ll L^2 < Q$.
 We rescale all input numbers by~$M$ so that the index offsets defined below
can distinguish the entries~$d_i,d_{i'}$ from~$D$ with~$i\neq i'$
even when~$d_i=d_{i'}$;
for every~$i,j\in\set{N}$ and~$z\in Z$, let
 \[ \rd_i \ceq M\cdot d_i, \qquad  \rb_j \ceq M\cdot b_j, \qquad \text{and}\quad  \rz \ceq M\cdot z .\]
 We have three agents:
 a trigger agent~$a_0$ with weight~$1$,
 a selection agent~$\as$ with weight~$3Q/4$,
 and a verifier agent~$\av$ with weight~$Q$.
 Let $\calE\ceq \{(i,j)\in\set{N}\times\set{N}\mid b_j\leq d_i\}$ be the set of all feasible candidate pairs among~$D$ and~$B$.
 We will construct a resource~$\r_e$ for each of these candidates,
 resulting in the set~$\rSet_\calE\ceq\{\r_e\mid e\in \calE\}$ of such resources.
The number~$L$ is large enough to separate the periods
into non-overlapping phases for handling the candidates.
 We will choose the renewal translations so that after roughly~$4L$ periods,
 a quantity of~$Q$ is regained.
 We have two parallel trajectories for the candidates:
 If a candidate~$(i,j)$ is not selected,
 it will be reset repeatedly roughly at~$4L$, $8L$, and~$12L$,
 each time by~$\av$.
 If a candidate~$(i,j)$ is selected roughly at~$3L$ by~$\as$,
 a feasible target~$z\in Z$ is selected roughly at~$6L$ by~$\as$,
 and then the uniqueness of~$i$ and~$j$ is verified by~$\av$ roughly at $10L$ and~$14L$,
 respectively.

 For the selection phase,
 for all $k\in\set{N}$ define
 \[ S_k \ceq 3L + k. \]
 These are the only periods in which~$\as$ is supposed to select a candidate resource from~$\rSet_\calE$.
 If a resource~$\r_e$ with~$e=(i,j)\in\calE$ is selected by~$\as$,
 it should be available to~$\as$ at~$\theta_e$ or later
 after roughly another $3L$ periods,
 corresponding to a quantity regain of roughly~$3Q/4$ which equals the weight of~$\as$.
To encode feasibility, we require~$\r_e$ to be available to~$\as$
at a target period~$T_z$ exactly when~$z\leq d_i+b_j$.
Thus, we require
$T_z\geq\theta_e$ if and only if~$z\leq d_i+b_j$.
 Thus,
 for each~$e=(i,j)\in\calE$ and~$z\in Z$,
 let
 \[ \theta_e \ceq 6L - (\rd_i+\rb_j)+4i \ \text{ and }\  T_z \ceq 6L - \rz + 4\cdot N ,\]
 resulting in~$T_z - \theta_e = \rd_i+\rb_j - \rz + 4(N-i)$,
 which yields the requirement by the choice and rescaling of~$M$.
 Note that since~$d_i=d_{i'}$ is possible for~$i\neq i'$,
 $\theta_e$ carries the unique index in form of~$4i$ to differentiate,
 and~$T_z$ compensates for it with~$4N$.
 Now we build all periods in their phases around~$\theta_e$.
 We define~$\Gamma_e$ depending on~$i$, $d_i$, and~$b_j$ of roughly~$2L$ such that
 candidate~$e=(i,j)$ is validated at~$\mathbb{D}_i \ceq\theta_e+2\Gamma_e$ for~$i$ and~$d_i$,
 and at~$\mathbb{B}_j \ceq\theta_e+4\Gamma_e$ for~$b_j$.
 Thus,
 to ensure a unique dependence in either case,
 let
 \[ \Gamma_e \ceq 2L+\frac{1}{4}\rd_i + \frac{1}{2}\rb_j-i. \]
 Then,
 \[ \mathbb{D}_i = 10L-\frac{1}{2}\rd_i+2i
  \ \text{ and }\
  \mathbb{B}_j = 14L+\rb_j
 \]
 denote designated periods for which~$d_i$ and~$b_j$ will be verified,
 respectively.
 Note that for every~$e=(i,j)$,
 it holds that~$\mathbb{D}_i = \theta_e + 2\Gamma_e$ and~$\mathbb{B}_j = \theta_e + 4\Gamma_e$.
 By the choice of~$M$ against~$L$,
 we have that~$\mathbb{B}_j<\tau$ for every~$j\in\set{N}$.
 For the unselected trajectory,
 let
 \[ c_e\ceq \theta_e-\Gamma_e -1,\quad
 c_e^{(1)} \ceq c_e + 2\Gamma_e +1
 ,\quad \text{ and } \quad
 c_e^{(2)} \ceq c_e^{(1)} + 2\Gamma_e,
 \]

 Note that~$c_e^{(1)}= \theta_e+\Gamma_e$ and $c_e^{(2)} = \theta_e+3\Gamma_e$.
 Now we are set to provide all resource details.
 Construct a resource~$\r_0$ with~$\iq_0=2$, $\lb_0=1$,
 $\ub_0=2$, and~$\rf_0(x)=x+1$.
 For each~$e\in\calE$,
 let
 \[
	\beta_e\ceq \left\lfloor \frac{Q}{2\Gamma_e + 1/4} \right\rfloor
 \]
 and construct a resource~$\r_e$ with
 \[ \iq_e = Q - (c_e-1)\cdot \beta_e,\ \
		\lb_e = 1,\ \
		\ub_e = (2\Gamma_e+1)\cdot \beta_e,\ \
		\rf_e(x) = x + \beta_e.
 \]
 The rate~$\beta_e$ is chosen carefully such that~$Q\in(2\Gamma_e\cdot\beta_e,\ub_e)$,
 ensuring that,
 intuitively,
 $\r_e$ repeatedly enters a single-period window in which it is urgent and can be harvested by~$\av$.
 A \emph{blocker} for an agent~$a$ at period~$i$ is a resource~$\r_{a,i}$
 with~$\iq = w_a - i + 1$, $\lb = 1$, $\ub = w_a$ and~$\rf(x)=x+1$
 only approved by~$a$.
 Construct a blocker for~$\as$ for each period~$t\in\set{\tau}\setminus (\{S_k\mid k\in \set{N}\} \cup \{T_z\mid z\in Z\})$.
Since the weight of~$\as$ is larger than~$\tau$,
a blocker does not become urgent again within the time horizon
after being harvested in its designated period.
 Finally,
 only~$a_0$ and~$\as$ approve~$\r_0$,
 and only~$\as$ and~$\av$ approve~$\r_e$ for all~$e\in\calE$.
\end{construction}

\begin{observation}
 \label{obs:n:unary:pnph:ST}
 For every~$e\in \calE$,
 $k\in\set{N}$,
 and~$z\in Z$,
 it holds true that
 \[ c_e-S_k < \Gamma_e/2 \qquad \text{ and }\qquad T_z-c_e<3\Gamma_e/2. \]
\end{observation}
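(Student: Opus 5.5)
The observation only compares explicitly defined periods, so I will prove it by direct computation: substitute the definitions from \cref{constr:n:unary:pnph}, reduce each difference to a quantity of order $L$ plus lower-order terms, and compare with the corresponding fraction of $\Gamma_e$. The needed facts are the scale relations from the construction: $N\ll M$, $M\cdot\max Z\ll L$, and $d_i,b_j,z\le \max Z$. These give $0<\rd_i,\rb_j,\rz\le M\max Z\ll L$ and $i,k,N\le N\ll M$.

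\emph{First inequality.} Write $c_e=\theta_e-\Gamma_e-1$ and substitute $\theta_e=6L-(\rd_i+\rb_j)+4i$. Then
\[
c_e-S_k = 6L-(\rd_i+\rb_j)+4i-\Gamma_e-1-3L-k = 3L-(\rd_i+\rb_j)+4i-k-1-\Gamma_e .
\]
So $c_e-S_k<\Gamma_e/2$ is equivalent to $3L-(\rd_i+\rb_j)+4i-k-1<\tfrac32\Gamma_e$. Substituting $\Gamma_e=2L+\tfrac14\rd_i+\tfrac12\rb_j-i$ turns the right-hand side into $3L+\tfrac38\rd_i+\tfrac34\rb_j-\tfrac32 i$. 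After cancelling $3L$, it remains to show
\[
4i-k-1+\tfrac32 i<\tfrac{11}{8}\rd_i+\tfrac74\rb_j .
\]
Since $d_i\ge1$, we have $\rd_i\ge M$. The left-hand side is at most $6N$, and $6N<M\le\tfrac{11}{8}\rd_i$ because $M=4(1+\max Z)^2$ dominates $N$ (recall $|Z|=N$ distinct positive integers, so $\max Z\ge N$).

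\emph{Second inequality.} First compute
\[
T_z-c_e = 6L-\rz+4N-\theta_e+\Gamma_e+1 = (\rd_i+\rb_j)-\rz+4N-4i+1+\Gamma_e .
\]
Then $T_z-c_e<\tfrac32\Gamma_e$ is equivalent to $(\rd_i+\rb_j)-\rz+4N-4i+1<\tfrac12\Gamma_e=L+\tfrac18\rd_i+\tfrac14\rb_j-\tfrac i2$. The left-hand side is at most $2M\max Z+4N+1$. The right-hand side is at least $L-N/2$. Since $L=M^2$, $M\ge 4\max Z^2$, and $N\le\max Z$, we have $L\gg 2M\max Z+5N$, which closes the argument.

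The only delicate point is the first inequality. There the leading $3L$ terms cancel exactly, and the lower-order terms must be compared. This relies on the rescaling by $M$ making $\rd_i\ge M$ large relative to the index terms $4i$ and $k$. I will state the bound $N\le\max Z<M$ explicitly to justify that step.
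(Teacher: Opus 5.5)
Your proof is correct, and it is essentially the paper's argument (substitute the definitions and compare magnitudes); only the bookkeeping differs.

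The paper first computes $c_e = 4L-\frac54\rd_i-\frac32\rb_j+5i-1$ and places it in $(3L,4L)$. It then combines this with $S_k>3L$, $T_z<6L$ and $\Gamma_e>2L$ to obtain the chains $c_e-S_k<L<\Gamma_e/2$ and $T_z-c_e<3L<\frac32\Gamma_e$. You instead compute both differences exactly and compare the remaining lower-order terms.

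The paper's version is shorter and yields interval facts about $c_e$ that are handy later. Yours shows explicitly that the first inequality has no slack at order $L$ and rests entirely on $\rd_i\ge M>6N$. In the paper this dependence is hidden inside the claim $c_e<4L$, which itself needs $\frac54\rd_i+\frac32\rb_j>5i-1$. Your justification of $M>6N$ via $\max Z\ge N$ makes explicit a step the paper leaves implicit.

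The only informal spot is the final ``$\gg$'' in the second inequality. It is made rigorous at once by $M^2\ge M(2\max Z+5)>2M\max Z+5N$, which holds since $M=4(1+\max Z)^2>2\max Z+5$.
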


\begin{proof}
 Note that~$c_e = 4L - \frac{5}{4} \rd_i - \frac{3}{2}\rb_j + 5i - 1 \in (3L,4L)$
 and $\Gamma_e > 2L$.
 Hence,
 since~$S_k>3L$,
 we have~$c_e - S_k < L < \Gamma_e/2$.
 Moreover,
 since~$T_z<6L$,
 we have
 $T_z-c_e < 3L < 3\Gamma_e/2$.
\end{proof}

Next,
we show that for a resource~$\r_e$ the number of periods to add a quantity of~$Q$
lies in a small window between $2\Gamma_e$ and~$2\Gamma_e+1$.

\begin{observation}
 \label{obs:n:unary:pnph}
 For every~$e\in \calE$,
 we have
 \[ 2\Gamma_e+\frac{1}{4} \leq \frac{Q}{\beta_e} < 2\Gamma_e + \frac{2}{5} .  \]
 Moreover,
 it holds that $\left\lceil (x\cdot Q)/(4\cdot \beta_e) \right\rceil = x\cdot\Gamma_e/2+1$
 for all~$x\in\{2,4,6,8,10\}$.
\end{observation}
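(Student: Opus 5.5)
The plan is to derive both claims directly from the definition $\beta_e=\lfloor Q/(2\Gamma_e+1/4)\rfloor$, using only the size relations between $M$, $L$ and $Q$ fixed in \cref{constr:n:unary:pnph}. Write $\gamma\ceq 2\Gamma_e+\frac14$ for brevity.

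\emph{Preliminary facts.} First I will check that $\Gamma_e$ is a positive integer. Since $M=4(1+\max Z)^2$ is divisible by~$4$, both $\frac14\rd_i$ and $\frac12\rb_j$ are integers, so $\Gamma_e\in\N$. Next I bound $\gamma$ from above. All input numbers are at most $\max Z$, so $\rd_i,\rb_j\le M\max Z$. Together with $M\max Z\ll L$ this gives $\frac14\rd_i+\frac12\rb_j-i\le \frac12 L$. Hence $\gamma\le 5L+\frac14<5L+1$, and therefore $\gamma^2<(5L+1)^2=Q/8$. This is exactly why $Q$ was chosen this way. In particular $\gamma<Q$, so $\beta_e\ge1$.

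\emph{First claim.} The lower bound is immediate: $\beta_e\le Q/\gamma$ gives $Q/\beta_e\ge\gamma=2\Gamma_e+\frac14$. For the upper bound I use $\beta_e>Q/\gamma-1=(Q-\gamma)/\gamma$, which yields
\[
\frac{Q}{\beta_e}<\frac{\gamma Q}{Q-\gamma}=\gamma+\frac{\gamma^2}{Q-\gamma}.
\]
It then suffices to show $\gamma^2/(Q-\gamma)<\frac{3}{20}$. Using $\gamma^2<Q/8$ and $\gamma<5L+1\ll Q$, we get $\gamma^2/(Q-\gamma)<\frac18\cdot Q/(Q-\gamma)$. The factor $Q/(Q-\gamma)$ is barely above~$1$, so the whole term is below $\frac{3}{20}$, for instance once $\gamma<Q/6$. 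This gives $Q/\beta_e<2\Gamma_e+\frac25$. This is the only step with real arithmetic content: the margin between $\frac18$ and $\frac{3}{20}$ is small, and I expect it to be the main (if mild) obstacle. Everything hinges on $\gamma^2<Q/8$, so I will make sure the bound $\Gamma_e\le\frac52L$ is stated cleanly from $M\max Z\ll L$.

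\emph{Second claim.} I multiply the first claim by $x/4$. This gives
\[
\frac{x\Gamma_e}{2}+\frac{x}{16}\le\frac{xQ}{4\beta_e}<\frac{x\Gamma_e}{2}+\frac{x}{10}.
\]
For $x\in\{2,4,6,8,10\}$ the quantity $x\Gamma_e/2$ is an integer, because $x$ is even and $\Gamma_e\in\N$. The lower offset $x/16$ is strictly positive, and the upper offset $x/10$ is at most~$1$. The upper inequality is strict, so even for $x=10$ the value stays strictly below $x\Gamma_e/2+1$. Hence $xQ/(4\beta_e)$ lies strictly between the consecutive integers $x\Gamma_e/2$ and $x\Gamma_e/2+1$, and its ceiling equals $x\Gamma_e/2+1$.
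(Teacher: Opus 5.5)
Your proposal is correct and follows the paper's proof essentially step by step. It uses the same floor bound $\beta_e > Q/\gamma - 1$ to get $Q/\beta_e < \gamma + \gamma^2/(Q-\gamma)$, the same facts $\gamma < 5L+1$ and $Q = 8(5L+1)^2$, and the same sandwiching of $xQ/(4\beta_e)$ strictly between the consecutive integers $x\Gamma_e/2$ and $x\Gamma_e/2+1$.

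The only difference is cosmetic. The paper bounds the error term $\gamma^2/(Q-\gamma)$ directly by $1/7$, whereas you bound it by $\tfrac18\cdot Q/(Q-\gamma) < 3/20$; either way the total stays below $2\Gamma_e + 2/5$.

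Your explicit checks that $\Gamma_e \in \N$ (because $4 \mid M$) and that $\beta_e \ge 1$ fill in details the paper leaves implicit.
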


\begin{proof}
 Let~$Y\ceq 2\Gamma_e+1/4$.
 By the definition of~$\beta_e$,
 we have~$Y\leq Q/\beta_e$.
 Since~$\beta_e>Q/Y-1$,
 we have
 \begin{align}
		Y \leq \frac{Q}{\beta_e} &< \frac{Q}{Q/Y-1} =  \frac{YQ}{Q-Y}
		=  \frac{YQ-Y^2+Y^2}{Q-Y} =  Y + \frac{Y^2}{Q-Y}.
		\label{eq:obs:n:unary:pnph:1}
 \end{align}
 With~$Y<5L+1$,
 we have
 \begin{align}
		\frac{Y^2}{Q-Y} < \frac{(5L+1)^2}{Q-(5L+1)} = \frac{1}{8-(5L+1)^{-1}} < \frac{1}{7}.
		\label{eq:obs:n:unary:pnph:2}
 \end{align}
 Combining \eqref{eq:obs:n:unary:pnph:1} with \eqref{eq:obs:n:unary:pnph:2},
 we get
 \[ \frac{Q}{\beta_e} < Y + \frac{1}{7} < 2\Gamma_e+\frac{2}{5}. \]
For the second statement,
 the first one yields~$\frac{xQ}{4\beta_e}\in\left[\frac{x\Gamma_e}{2}+\frac{x}{16},\frac{x\Gamma_e}{2}+\frac{x}{10}\right)$,
 where~$\frac{x\Gamma_e}{2}\in\N$ for even~$x$ and~$0<\frac{x}{16}<\frac{x}{10}\leq 1$ for~$x\leq 10$.
\end{proof}

\begin{lemma}
 \label{lem:n:unary:pnph:unsel}
 Let~$I'$ be a \yes-instance.
 If resource~$\r_e$ for some~$e\in\calE$ is not harvested by~$\as$
 at any~$S_k$,
 then it is harvested exactly in the periods
 $c_e$,
 $c_e^{(1)}$,
 and $c_e^{(2)}$,
 each time by agent~$\av$.
\end{lemma}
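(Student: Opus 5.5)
The plan is to follow the trajectory of $\q_e$ period by period. Two features make it rigid. First, $\rf_e$ is a translation, so the quantity depends only on the total amount harvested so far. Second, only $\as$ (weight $3Q/4$) and $\av$ (weight $Q$) approve $\r_e$, and harvesting is all or nothing.

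\textbf{Step 1: when each agent can touch $\r_e$.} Each blocker $\r_{\as,t}$ has quantity exactly $w_{\as}=\ub$ in period~$t$. It is therefore urgent, and it can only be brought back below its bound by a harvest of weight $w_{\as}$. Since $\as$ is the only agent approving it, $\as$ must harvest the blocker in period~$t$. Hence $\as$ can harvest a candidate resource only in periods $S_k$ or $T_z$. By assumption she never harvests $\r_e$ at any $S_k$. Every $T_z<6L$, whereas $c_e\in(3L,4L)$ (see the proof of \cref{obs:n:unary:pnph:ST}) and $c_e^{(1)}=\theta_e+\Gamma_e>6L$. So the only possible harvests of $\r_e$ by $\as$ fall in the interval $(c_e,c_e^{(1)})$.

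\textbf{Step 2: the first harvest is at $c_e$.} If $\r_e$ is not harvested, then $\q_e(t)=Q+(t-c_e)\beta_e$. For $t<c_e$ this is below~$Q$, so $\av$ cannot harvest without over-extraction, and $\as$ is excluded by Step~1. At $c_e$ the quantity is exactly~$Q$. By \cref{obs:n:unary:pnph} we have $Q/\beta_e>2\Gamma_e+\tfrac14$, so
\[
Q+\beta_e>(2\Gamma_e+1)\beta_e=\ub_e,
\]
and $\r_e$ is urgent at $c_e$. It must be harvested there, and since $\as$ is blocked, only $\av$ can do it. The remaining quantity is $0$, so $\q_e(c_e+1+k)=(k+1)\beta_e$.

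\textbf{Step 3: no harvest by $\as$ between $c_e$ and $c_e^{(1)}$.} By \cref{obs:n:unary:pnph:ST}, $T_z-c_e<3\Gamma_e/2$. Hence at any $T_z$ the quantity is below $\tfrac32\Gamma_e\beta_e<3Q/4$, using $Q/\beta_e\ge 2\Gamma_e$. So $\as$ cannot harvest $\r_e$. Now $\av$ needs $(k+1)\beta_e\ge Q$, which by \cref{obs:n:unary:pnph} first holds at $k+1=2\Gamma_e+1$, that is, at period $c_e^{(1)}$. There the quantity equals $\ub_e$, so the resource is urgent and $\av$ must harvest it. In particular $\av$ cannot harvest earlier, and harvesting later is impossible.

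\textbf{Step 4: the second and third harvests.} After the harvest at $c_e^{(1)}$ the leftover is $\ub_e-Q\in(0.6\beta_e,0.75\beta_e)$, using $Q/\beta_e<2\Gamma_e+\tfrac25$. The quantity in period $c_e^{(1)}+k$ is $\ub_e-Q+k\beta_e$. Using both bounds of \cref{obs:n:unary:pnph}:
\begin{itemize}
\item this first reaches~$Q$ exactly at $k=2\Gamma_e$, i.e.\ at $c_e^{(2)}$;
\item at that period it is at most $\ub_e$ but would exceed $\ub_e$ one period later.
\end{itemize}
So $\av$ must harvest at $c_e^{(2)}$ and cannot harvest earlier. $\as$ is blocked throughout, since $T_z<6L$.

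After $c_e^{(2)}\approx 12L$, the quantity again needs about $2\Gamma_e>4L$ periods to reach~$Q$, which is beyond $\tau=15L$. So no further harvest is possible, the capacity bound is not reached, and the quantity stays at least $\lb_e=1$ for the remaining horizon.

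The main obstacle is the careful interval bookkeeping: every comparison must use the sharp bounds $2\Gamma_e+\tfrac14\le Q/\beta_e<2\Gamma_e+\tfrac25$. I expect the hardest point to be justifying that $\as$ never harvests $\r_e$ at a $T_z$, which rests on \cref{obs:n:unary:pnph:ST} and on $c_e^{(1)}>6L$.
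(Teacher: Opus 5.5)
Your proposal is correct and follows essentially the same period-by-period trajectory argument as the paper. The steps match: $a_{\rm v}$ is forced to harvest at $c_e$, where the quantity is exactly $Q$ and the resource is urgent; the quantity is below $3Q/4$ at every $T_z$; the window $2\Gamma_e+\frac14\le Q/\beta_e<2\Gamma_e+\frac25$ forces harvests at $c_e^{(1)}$ and $c_e^{(2)}$; and no further harvest is possible before $\tau$. Your Step~1 only makes explicit the blocker argument that the paper takes for granted.

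The one slip is cosmetic. The observation gives $Q/\beta_e\ge 2\Gamma_e+\frac14$, not a strict inequality. This still yields $Q+\beta_e>u_e$, and the leftover is $u_e-Q\in(0.6\beta_e,0.75\beta_e]$, half-open rather than open.
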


\begin{proof}
 If $\r_e$ is not harvested by~$\as$ at any~$S_k$,
 it is also not harvested by~$\av$ before period~$c_e$.
 At~$c_e$,
 its quantity is~$\q_e(c_e) = Q - (c_e-1)\beta_e + (c_e-1)\beta_e = Q$.
 Note that~$Q < u_e < Q+\beta_e$,
 due to \cref{obs:n:unary:pnph} and
 $\frac{u_e}{\beta_e} = 2\Gamma_e + 1 < \frac{Q}{\beta_e} + 1$.
 Hence,
 $\r_e$ must be harvested at period~$c_e$
 and only~$\av$ is free to do so.
 Note after this period~$c_e$,
 we have~$\q_e(c_e+t)=t\cdot \beta_e$
 until the next harvest happens.
 Combining~\cref{obs:n:unary:pnph:ST} and \cref{obs:n:unary:pnph},
we have
$T_z-c_e<3\Gamma_e/2<3Q/(4\beta_e)$
for every~$z\in Z$.
Hence, $\q_e(T_z)<3Q/4$,
and~$\as$ cannot harvest~$\r_e$ in this period.
 By~\cref{obs:n:unary:pnph},
 we know that an amount of~$Q$ becomes available after~$2\Gamma_e+1$ periods.
 Thus,
 the quantity then is~$\q_e(c_e^{(1)}) = (2\Gamma_e+1)\cdot \beta_e = \ub_e$.
 So,
 $\av$ must harvest~$\r_e$.
 We find the next period in which~$\r_e$ must and can be harvested
 (recall that for the remainder of periods,
 $\as$ is blocked).
 After the harvest,
 we have~$\q_e(c_e^{(1)}+t) = (2\Gamma_e+1+t)\cdot\beta_e - Q$.
 So,
 after~$2\Gamma_e-1$ more periods,
 we have~$\q_e(c_e^{(1)}+2\Gamma_e-1) = 4\cdot \Gamma_e\cdot \beta_e  - Q < Q$,
 where the latter follows from~$2\Gamma_e < Q/\beta_e$ due to~\cref{obs:n:unary:pnph}.
 Hence,
 no harvest by~$\av$ can happen until this period.
 Yet,
 after $2\Gamma_e$ more periods,
 we have~$\q_e(c_e^{(2)}) = \q_e(c_e^{(1)}+2\Gamma_e) =2\cdot (2\Gamma_e+1/2)\cdot \beta_e - Q \in (Q,\ub_e)$,
 where the latter follows from~$2\Gamma_e+1/2> 2\Gamma_e+2/5 > Q/\beta_e$.
 Thus,
 $\r_e$ can be harvested in this period.
 In fact,
 it has to since
 $\q_e(c_e^{(2)}+1) = 2(2\Gamma_e+1)\cdot \beta_e - Q > (2\Gamma_e+1)\cdot \beta_e = \ub_e$,
 where the inequality follows from~$Q/\beta_e<2\Gamma_e+1$.
After the third harvest, the resource becomes available to~$\av$
again no earlier than~$c_e+6\Gamma_e+1>15L=\tau$.
\end{proof}

\begin{lemma}
 \label{lem:n:unary:pnph:sel}
 Let~$I'$ be a \yes-instance.
 Then,
 in every solution,
 we have the following.
 Agent~$\as$ harvests exactly~$N$ different resources from~$\rSet_\calE$,
 and each of them twice.
 If resource~$\r_e$ for some~$e=(i,j)\in\calE$ is harvested by~$\as$ in some period~$S_k$,
 then it is harvested by~$\as$ in some period~$T_z$ with~$z\leq b_j+d_i$,
 and then harvested by~$\av$ exactly in the periods~$\mathbb{D}_i$ and~$\mathbb{B}_j$.
\end{lemma}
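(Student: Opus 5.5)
The proof follows a single resource~$\r_e$ through the three agents' constraints. Throughout, we use that quantities under a translation depend only on cumulative harvest. The argument has two phases: first, forcing $\as$ to harvest in every unblocked period; then, tracking a selected resource through the remaining horizon.

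\textbf{Forcing $\as$ to harvest.} Resource~$\r_0$ has $\iq_0=\ub_0=2$ and $\rf_0(x)=x+1$, so it is urgent in every period. Only $a_0$ (weight~$1$) and $\as$ (weight $3Q/4>2$) approve it, so no over-extraction forces $a_0$ to harvest~$\r_0$ in every period. Hence $a_0$ obtains utility~$\tau$. The sequence of $a_0$ is worth~$\tau$ to~$\as$, so envy-freeness forces $\as$ to be assigned to an approved resource in every period. At blocked periods this is her blocker; the blockers are urgent exactly then and only she approves them. At the periods $S_k$ and $T_z$, her only options are candidate resources, since $\r_0$ is fully harvested by~$a_0$. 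So $\as$ harvests exactly $2N$ candidate resources, counted with multiplicity.

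\textbf{Tracking a selected resource.} Suppose $\as$ harvests $\r_e$ at some~$S_k$. Before $c_e$ the quantity is below~$Q$, so $\av$ cannot harvest it earlier. After $\as$ removes $3Q/4$ at~$S_k$, the quantity at $c_e$ is $Q/4$, which is less than~$Q$, so $\r_e$ is not urgent there. The quantity returns to $3Q/4$ only at period $S_k+\lceil 3Q/(4\beta_e)\rceil\approx S_k+3\Gamma_e/2$ (by \cref{obs:n:unary:pnph}). Since the quantity before harvest at~$S_k$ was $Q-(c_e-S_k)\beta_e$, the cumulative-harvest view tells us exactly when the stock reaches $3Q/4$: it is $\ge 3Q/4$ from $\theta_e$ onward, using the definition of $\theta_e$ and $\lceil 6Q/(4\beta_e)\rceil=3\Gamma_e+1$. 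The selection periods are then blocked, and the next non-blocked periods for $\as$ are the $T_z$. We compute when $\r_e$ would first exceed~$\ub_e$ without a second harvest by~$\as$. $\av$ can only harvest at quantity at least~$Q$, which occurs after roughly $2\Gamma_e$ more periods, past the target phase. This shows that $\r_e$ cannot be harvested by anyone between $S_k$ and the target window except by $\as$ at some $T_z\ge\theta_e$.

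Here the requirement $T_z\ge\theta_e$ is equivalent to $\rz\le \rd_i+\rb_j+4(N-i)$, and thus to $z\le d_i+b_j$ by the rescaling by~$M\gg 4N$. The resource can be harvested by~$\as$ at most once more there, because after a second harvest of $3Q/4$ it cannot recover $3Q/4$ before the target phase ends.

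\textbf{Counting.} A selected resource is harvested by~$\as$ at most once in the selection phase, because a second harvest would require regaining $3Q/4$ within $N$ periods. Each $T_z$ must be used on a previously selected resource, each such resource can absorb at most one target harvest, and there are $N$ selection periods and $N$ target periods. By pigeonhole, exactly $N$ distinct resources are selected, and each receives exactly one target harvest.

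\textbf{Verification harvests by~$\av$.} After the target harvest at $T_z$, the quantity evolves; the cumulative harvest so far is $3Q/2$. Since the renewal is a translation, $q_e(t)=\iq_e+(t-1)\beta_e-3Q/2$, independent of $k$ and~$z$. Using \cref{obs:n:unary:pnph} with $x=10$ (i.e.\ $\lceil 10Q/(4\beta_e)\rceil$), as in the proof of \cref{lem:n:unary:pnph:unsel}, this reaches~$Q$ for the first time exactly at $\mathbb D_i=\theta_e+2\Gamma_e$. At that period the resource is urgent, because the next period would exceed $\ub_e$. Agent $\as$ is blocked, so $\av$ must harvest at~$\mathbb D_i$. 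By the same computation with $x$ one step larger, the quantity next reaches the window $[Q,\ub_e]$ at $\mathbb B_j=\theta_e+4\Gamma_e$ and is again urgent there, forcing~$\av$. After that, it does not reach~$Q$ again before~$\tau$.

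\textbf{Main obstacle.} The delicate part is the window arithmetic: showing that "first reach~$Q$" and "urgent" happen in exactly the same single period. This needs both bounds of \cref{obs:n:unary:pnph}, applied with the correct cumulative harvest. It also needs a separate check that none of these periods is preempted by an earlier feasible harvest.
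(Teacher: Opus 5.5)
You follow the paper's route: $\r_0$ and envy-freeness force $\as$ to harvest a candidate in every unblocked period, translations let you read quantities off the cumulative harvest, a counting argument yields the $N$ selected resources, and \cref{obs:n:unary:pnph} pins down $\mathbb{D}_i$ and $\mathbb{B}_j$. The gap is in your counting step. There you assert that ``each $T_z$ must be used on a previously selected resource'', but you never prove it. Your forcing paragraph only shows that $\as$ harvests \emph{some} candidate at each $T_z$, and your tracking paragraph only follows resources she actually selected. Nothing rules out that she spends target periods on candidates $\r_{e'}$ she never harvested at any $S_k$. The selection phase alone still gives $N$ distinct selected resources. But then $\as$ could harvest $2N$ distinct resources once each, and neither ``each of them twice'' nor ``if harvested at some $S_k$, then harvested at some $T_z$ with $z\le b_j+d_i$'' would follow. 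The second harvest of a selected resource is obtained only through this count, so the claim carries the whole pigeonhole argument.

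The missing argument is the one the paper gives via \cref{lem:n:unary:pnph:unsel}. An unselected $\r_{e'}$ has quantity exactly $Q$ and is urgent at $c_{e'}$, and $\as$ is blocked there, so $\av$ must empty it in that period. Afterwards its quantity is $(t-c_{e'})\beta_{e'}$, and for every $z\in Z$ you get $(T_z-c_{e'})\beta_{e'}<\frac{3}{2}\Gamma_{e'}\beta_{e'}<3Q/4$ by \cref{obs:n:unary:pnph:ST,obs:n:unary:pnph}. Hence $\as$ cannot harvest it anywhere in the target phase, and with this your pigeonhole goes through.

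Separately, fix the arithmetic. Your estimate $S_k+\lceil 3Q/(4\beta_e)\rceil$ is wrong, because after the harvest at $S_k$ the quantity is $Q/4-(c_e-S_k)\beta_e$, not $0$. The correct threshold $\theta_e=c_e+\Gamma_e+1$ comes from $x=2$ in \cref{obs:n:unary:pnph}. Likewise $\mathbb{D}_i=c_e+3\Gamma_e+1$ comes from $x=6$ and $\mathbb{B}_j=c_e+5\Gamma_e+1$ from $x=10$. The instance ``one step larger'' than $x=10$ that you invoke for $\mathbb{B}_j$ lies outside the range where the observation holds.
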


\begin{proof}
 When~$\as$ harvests~$\r_e$ at some period~$S_k$ for the first time,
 we have for~$S_k<t\leq c_e$
 that~$\q_e(t)=Q/4 + (t-c_e)\cdot\beta_e$.
 In particular,
 $\q_e(c_e)=Q/4$,
 and hence~$\av$ cannot harvest~$\r_e$ at this period
 and~$\as$ cannot harvest~$\r_e$ again in another period~$S_{k'}$.
 Together with envy-freeness,
 it follows that~$\as$ harvests exactly~$N$ distinct resources from~$\rSet_\calE$.
 Moreover, after the first harvest by~$\as$,
even without further harvests, the quantity remains below~$Q$
until the end of the target phase:
$Q/4+(T_z-c_e)\beta_e
<Q/4+3\Gamma_e\beta_e/2<Q$
for every~$z\in Z$,
by~\cref{obs:n:unary:pnph:ST,obs:n:unary:pnph}.
Thus, $\av$ cannot harvest~$\r_e$ before the target phase ends.
 When~$\q_e(t)\geq 3Q/4$,
 agent~$\as$ could harvest it.
 This happens not yet at period~$\theta_e - 1 = c_e+ \Gamma_e$,
 since~$\Gamma_e\cdot \beta_e< Q/2$,
 but at period~$\theta_e= c_e+\Gamma_e+1$,
 since~$(\Gamma_e+1)\cdot \beta_e> Q/2$.
 Since~$\as$ is blocked after~$S_N$ at all periods except for~$\{T_z\mid z\in Z\}$,
 if~$\as$ harvests~$\r_e$,
 then we have that~$T_z\geq \theta_e$,
 which is the case if and only if~$z\leq d_i+b_j$.

Note that~$\as$ cannot harvest during the target phase
a resource~$\r_{e'}$ that she did not harvest during the selection phase.
Due to~\cref{lem:n:unary:pnph:unsel},
we know that~$\r_{e'}$ was harvested by~$\av$ in period~$c_{e'}$.
Hence, until its next harvest at~$c_{e'}^{(1)}$,
its quantity is~$\q_{e'}(c_{e'}+t)=t\cdot\beta_{e'}$.
So, harvesting by~$\as$ requires
$t\geq 3Q/(4\beta_{e'})\geq \frac{3}{2}\Gamma_{e'}$.
However,
$c_{e'}+\frac{3}{2}\Gamma_{e'}
=\theta_{e'}+\frac{1}{2}\Gamma_{e'}-1>6L>T_z$
for every~$z\in Z$.
Since~$\as$ does not envy~$a_0$,
she must harvest a previously selected resource
in each of the~$N$ target periods.

 Thus,
 let~$\r_e$ be harvested at time step~$T_z$ with $z\leq d_i+b_j$
 by agent~$\as$.
 So,
 at any period~$t>T_z$ (before being harvested),
 its quantity is~$q_e(t)=Q-(c_e-t)\beta_e - \frac{3Q}{2} = (t-c_e)\beta_e-\frac{Q}{2}$.
 Since~$T_{z'}-c_e\leq 3\Gamma_e/2$,
 we have that~$q_e(T_{z'}) < Q-Q/2 < 3Q/4$,
 and hence~$\as$ cannot harvest~$\r_e$ again in another period~$T_{z'}$.
 Together with envy-freeness,
 it follows that~$\as$ harvests exactly~$N$ distinct resources from~$\rSet_\calE$,
 and that these are exactly those harvested by~$\as$ before.
 Note that
 \[ \mathbb{D}_i = c_e + 3\Gamma_e + 1 \text{ and } \mathbb{B}_j = c_e + 5\Gamma_e + 1 . \]
 So,
 at~$\mathbb{D}_i-1$,
 we have~$(3\Gamma_e)\beta_e - Q/2 < Q$,
 since~$2\Gamma_e < Q/\beta_e$.
 Yet,
 at $\mathbb{D}_i$,
 we have~$(3\Gamma_e+1)\beta_e - Q/2 \geq Q$,
 since~$2\Gamma_e + 2/3 \geq Q/\beta_e$.
 Finally,
 at $\mathbb{D}_i+1$,
 we have~$(3\Gamma_e+2)\beta_e - Q/2 > \ub_e$,
 since~$3\Gamma_e + 2 - Q/(2\beta_e) > 3\Gamma_e + 2 - (\Gamma_e+1/2) = 2\Gamma_e+3/2$.
 It follows that~$\av$ must harvest~$\r_e$ exactly in period~$\mathbb{D}_i$.

 Similarly,
 at any period~$t>\mathbb{D}_i$ (before being harvested),
 its quantity is~$q_e(t)=Q-(c_e-t)\beta_e - \frac{3Q}{2} - Q = (t-c_e)\beta_e-\frac{3Q}{2}$.
 Along a similar argument line,
 exactly at~$\mathbb{B}_j$ agent~$\av$ must harvest~$\r_e$.
 So,
 at~$\mathbb{B}_j-1$,
 we have~$(5\Gamma_e)\beta_e - 3Q/2 < Q$,
 since~$2\Gamma_e < Q/\beta_e$.
 Yet,
 at~$\mathbb{B}_j$,
 we have~$(5\Gamma_e+1)\beta_e - 3Q/2 > Q$,
 since~$2\Gamma_e + 2/5 > Q/\beta_e$.
 Finally,
 at~$\mathbb{B}_j+1$,
 we have~$(5\Gamma_e+2)\beta_e - 3Q/2 > \ub_e$,
 since~$5\Gamma_e + 2 - 3Q/2\beta_e > 5\Gamma_e + 2 - (3\Gamma_e + 1) = 2\Gamma_e + 1$.
\end{proof}

\begin{proof}[Proof of \cref{thm:n:unary:pnph}]
 Let~$I'$ be the instance of \fish{}
 obtained from an instance $I=(D,B,Z)$ of \prob{Separated Position Matching}
 with $D = ( d_1,\dots, d_N)$,
 $B = \{b_1,\dots, b_N\}$,
 and
 $Z = \{z_1,\dots, z_N\}$.
Note that all numbers in~$I'$ are in~$O(Q)=O(N^8)$,
since all numbers in~$I$ are in~$O(N)$.
Hence, their unary encodings have polynomial size.
Moreover,
the reduction runs in~$\poly(|I|)$ time.
We prove that~$I$ is a \yes-instance if and only if~$I'$ is a \yes-instance.

 \RD{}
 Let~$\calM$ be a solution to~$I$.
 We construct a \xassignment[$\tau$]~$\pi$ as follows.
 Let~$a_0$ harvest~$\r_0$ in every period.
 For each triple~$(i,b_j,z)\in\calM$,
 let~$e=(i,j)$
 and
 assign the agents as follows.

 Assign $\as$ to~$\r_e$ in period~$S_i$.
 This is possible since
 by \cref{obs:n:unary:pnph:ST},
 we have~$c_e - S_i <\Gamma_e/2 < Q/(4\beta_e)$,
 and thus,
 $Q-(c_e-S_i)\beta_e > 3Q/4$.
 Conversely,
 since~$c_e>S_i$,
 we have that~$Q-(c_e-S_i)\beta_e<Q$.
 Assign $\as$ to~$\r_e$ in period~$T_z$.
 Since~$\calM$ is a solution,
 we have that~$\q_e(T_z)=Q/4 + (T_z-c_e)\beta_e$ and $z\leq d_i+b_j$.
 Thus,
 on the one hand,
 we have that~$T_z\geq \theta_e=c_e+\Gamma_e + 1 = c_e + \lceil Q/(2\beta_e)\rceil$
 and hence~$\q_e(T_z)\geq 3Q/4$.
 On the other hand,
 we have by~\cref{obs:n:unary:pnph:ST} that~$T_z-c_e < 3\Gamma_e/2$
 and by~\cref{obs:n:unary:pnph} that~$Q/4 \leq (\Gamma_e/2+1)\beta_e$.
 Hence,
 $\q_e(T_z) < (\Gamma_e/2+1)\beta_e + 3\Gamma_e \beta_e/2 = (2\Gamma_e+1)\beta_e = \ub_e$.
 Finally,
 assign~$\as$ to all its blockers.
 Since the weight of~$\as$ is larger than~$\tau$,
 all blocker resources sustain.

 Assign~$\av$ to~$\r_e$ in periods~$\mathbb{D}_i$ and~$\mathbb{B}_j$.
 For each~$e=(i,j)\in \calE\setminus \calE^*$,
 with~$\calE^*$ containing all~$(i,j)\in\calE$ such that~$(i,b_j,z)\in\calM$,
 assign~$\av$ to~$\r_e$ in~$c_e$,
 $c_e^{(1)}$,
 and~$c_e^{(2)}$.
 Note that by choice of~$c_e$ and~$\Gamma_e$,
 these are disjoint for any two distinct~$e,e'\in\calE$.
 Within each phase, the residues modulo~$M/4$ distinguish
different indices~$i$, since~$M/4>5N$.
For fixed~$i$, distinct values~$b_j$ give distinct periods.
The verification periods are also distinct,
since the selected pairs have distinct indices~$i$ and~$j$.
The quantity calculations in the proofs of
\cref{lem:n:unary:pnph:unsel,lem:n:unary:pnph:sel}
show that the prescribed harvests are feasible
and that the candidate resources respect capacity
and terminal viability.
Together with~$\r_0$ and the blockers,
this proves sustainability.

 Note that there is no envy from~$a_0$,
 and also not from~$\as$ since she is assigned in each period.
 Consider~$\av$.
 Recall that~$\av$ does not approve~$\r_0$,
 and hence does not envy~$a_0$.
 She values the assignment of~$\as$ by~$2N$,
 yet since she harvests the same resources as~$\as$ twice,
 already from these resources she gets a utility of~$2N$.
 Hence,
 $\av$ is also not envious.

 \LD{}
 Let~$\pi$ be a sustainable envy-free \xassignment[$\tau$].
 Due to \cref{lem:n:unary:pnph:sel},
 we know that agent~$\as$ harvests exactly~$N$ different resources from~$\rSet_\calE$,
 and each of these twice.
 If resource~$\r_e$ for some~$e=(i,j)\in\calE$ is harvested by~$\as$ in some period~$S_k$,
 then it is harvested by~$\as$ in some period~$T_z$ with~$z\leq b_j+d_i$,
 and then harvested by~$\av$ exactly in the periods~$\mathbb{D}_i$ and~$\mathbb{B}_j$.
 Let~$\calM$ be the set of triples~$(i,b_j,z)$ such that~$\as$ harvests resource~$\r_{i,j}$
 in some period~$S_k$ and again in period~$T_z$.
 By construction of~$\calE$,
 we know that~$b_j\leq d_i$,
 and by the lemma,
 we know that $z\leq b_j+d_i$.
 It remains to show that each~$i\in\set{N}$,
 each~$b_j\in B$,
 and each~$z\in Z$ is represented in~$\calM$ exactly once.
 For the latter,
 note that~$\as$ harvests exactly~$N$ distinct resources from~$\rSet_\calE$,
 each twice,
 where the second harvest is only possible in the $N$ periods~$T_z$, $z\in Z$.
 Now,
 note that we have exactly~$N$ pairs~$(i,j)\in\calE$ being represented in~$\calM$.
 Suppose towards a contradiction that there are two distinct pairs~$(i,j)$ and~$(i',j')$ such that
 $\as$ harvests~$\r_{i,j}$ and~$\r_{i',j'}$ and~$i=i'$ or~$j=j'$.
 Due to the lemma,
 if~$i=i'$,
 then we know that~$\av$ harvests~$\r_{i,j}$ and~$\r_{i',j'}$ in period~$\mathbb{D}_i$;
 a contradiction to~$\pi$ being an assignment;
 if~$j=j'$,
 then we know that~$\av$ harvests~$\r_{i,j}$ and~$\r_{i',j'}$ in period~$\mathbb{B}_j$;
 a contradiction to~$\pi$ being an assignment;
 Thus,
 we have that for every two distinct~$(i,b_j,z),(i',b_{j'},z')\in\calM$ that~$i\neq i'$,
 $b_j\neq b_{j'}$,
 and~$z\neq z'$.
 It follows that~$\calM$ is a solution to~$I$.
\end{proof}
}%

\section{The Size of an Assignment: Tractability but Incompressibility}

We investigate the upper bound~$n\cdot\tau$
on the number of agent assignments over all periods.
For~$n+\tau$, our problem is \FPT{} yet admits no polynomial kernel.

\begin{theorem}
	\label{thm:ntau-fpt}
	\fish{} is FPT \wpb{} $n+\tau$.
\end{theorem}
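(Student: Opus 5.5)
The plan is to exploit that a solution touches only a bounded number of resources. Even though $m$ is unbounded, in each period at most $n$ resources are harvested, so at most $n\tau$ distinct resources receive any agent. Every resource that receives no agent must be uncritical. Whether a resource is uncritical can be checked in polynomial time by simulating its renewal without harvest. Hence we first compute the set $\rSet_{\rm crit}$ of critical resources and reject if $|\rSet_{\rm crit}|>n\tau$. All remaining resources may stay unused; the only reason to use them is that envy-freeness or the agents' utilities may require it.

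We then guess the \emph{shape} of a solution. Consider the set of \emph{slots} $A\times\set{\tau}$, of size $n\tau$. We guess a subset of slots that are assigned (the others are $\emptyset$) and a partition $\mathcal C$ of the assigned slots into classes. The intended meaning is that two slots are in the same class if and only if they are assigned the same resource. We require that slots $(a,t)$ and $(a',t)$ in one class may share a period, but each slot obviously contains one agent. For each class $C$ we also guess an approval vector $\gamma_C\in\{0,1\}^A$, intended to be the vector $(v_a(\r))_{a\in A}$ of the resource mapped to $C$. There are at most $B_{n\tau+1}\cdot 2^{n\cdot n\tau}$ guesses, a function of $n+\tau$ only.

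Given a guess, envy-freeness and approval-restrictedness can be checked directly. For agent $a$, her own utility is the number of slots $(a,t)$ in classes $C$ with $\gamma_C(a)=1$. Her value for $a'$'s sequence is the same count over slots $(a',t)$ with $\gamma_C(a)=1$. Approval-restrictedness requires $\gamma_C(a)=1$ for every slot $(a,t)\in C$. Moreover, each class $C$ induces a harvest vector $h_C(t)=\sum_{(a,t)\in C}w_a$.

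It remains to realise the guess, that is, to find an injective map from $\mathcal C$ to $\rSet$ such that every critical resource lies in the image. We build a bipartite graph with an edge $\{C,\r\}$ if and only if three conditions hold: $(v_a(\r))_a=\gamma_C$; simulating $\r$ with harvest vector $h_C$ satisfies (I)--(IV); and if some harvest is positive, viability protection holds in those periods. Each edge test is a polynomial-time simulation, and the quantities have polynomial encoding length since $\tau$ is unary. Unmatched resources receive no harvest and are therefore fine exactly when they are uncritical.

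So we need a matching that saturates all of $\mathcal C$ and all of $\rSet_{\rm crit}$. By the Mendelsohn--Dulmage theorem, such a matching exists if and only if there is one saturating $\mathcal C$ and one saturating $\rSet_{\rm crit}$. Alternatively, it can be decided directly by a flow with lower bounds $1$ on the arcs of $\mathcal C$ and $\rSet_{\rm crit}$. Either way this takes polynomial time.

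For correctness, any solution $\pi$ yields a guess (its slot partition and the approval vectors of the used resources) together with such a matching, namely the used resources. Conversely, any guess with such a matching yields a sustainable, envy-free $\tau$-period assignment.

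The main obstacle I expect is making sure that the guessed data suffices for envy-freeness and that the matching condition captures sustainability exactly. In particular, the forced use of critical resources is handled by the saturation requirement on $\rSet_{\rm crit}$, and the bound $|\rSet_{\rm crit}|\le n\tau$ is what keeps this from blowing up. The total running time is $2^{O(n^2\tau\log(n\tau))}\cdot\poly(|I|)$.
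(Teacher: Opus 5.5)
Your proof is correct, and it takes a genuinely different route from the paper.

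\textbf{The paper's route.} The paper argues in a kernel-like fashion. It rejects instances with more than $n\tau$ critical resources (Reduction Rule~\ref{rrule:critical}). It then groups the uncritical resources by their fingerprint and their \emph{marking set}, i.e., the set of all agent-set sequences $(A_1,\dots,A_\tau)$ that the resource can sustain. Via an exchange argument, it deletes a resource from any group with more than $n\tau$ members (Reduction Rule~\ref{rrule:uncritical}). This bounds $m$ by $n\tau(1+2^n\cdot 2^{2^{n\tau}})$, after which all $(m+1)^{n\tau}$ assignments are brute-forced (Observation~\ref{obs:ntau:xp}).

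\textbf{Your route.} You guess the abstract structure of a solution: which slots are used, which slots share a resource, and the fingerprint of each shared resource. You check envy-freeness and approval-restrictedness on the guess alone. You then realise the guess by a bipartite matching whose edges encode fingerprint equality and per-resource sustainability of the induced harvest vector. The matching must saturate all classes and all critical resources, which you decide via Mendelsohn--Dulmage or a flow with lower bounds.

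\textbf{Comparison.} Your matching replaces the paper's pigeonhole-and-exchange argument over marking-set types. You never compute marking sets, which cost $2^{n\tau}\cdot\poly(|I|)$ per resource in the paper. You also obtain a single-exponential running time of $2^{O(n^2\tau+n\tau\log(n\tau))}\cdot\poly(|I|)$, instead of the paper's doubly exponential bound. What the paper's route buys is an explicit bound on the number of resources in terms of $n+\tau$ alone, which frames its subsequent discussion of (the absence of polynomial) kernels.

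\textbf{A small correction.} Your closing remark overstates the role of the initial rejection of instances with more than $n\tau$ critical resources. In your algorithm this step is harmless but redundant: with that many critical resources, no matching can saturate $\rSet_{\rm crit}$ anyway. It is therefore not what keeps your running time bounded.
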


First,
we show that XP-time is possible where the base only depends on~$m$.
Then
we show that~$m$ can be bounded by a function depending only on~$n+\tau$,
yielding \FPT{} time.

\begin{observation}
 \label{obs:ntau:xp}
 Every instance~$I$ of \fish{}
 with~$n$ agents,
 $m$ resources,
 and $\tau$ periods,
 can be solved in $(m+1)^{n\cdot\tau}\cdot \poly(|I|)$~time.
\end{observation}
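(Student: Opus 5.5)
The plan is plain brute force over all \xassignment[$\tau$]s, followed by a polynomial-time check of each candidate. A \xassignment[$\tau$]~$\pi=(\pi_1,\dots,\pi_\tau)$ is determined by the values~$\pi_t(a)\in\rSet\cup\{\emptyset\}$ for every agent~$a\in A$ and period~$t\in\set{\tau}$. There are exactly~$n\cdot\tau$ such pairs, each with~$m+1$ possible values, so there are exactly~$(m+1)^{n\cdot\tau}$ candidate assignments. We enumerate them one by one, for instance by counting in base~$m+1$ with~$n\tau$ digits, which costs only polynomial overhead per candidate.

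For each candidate~$\pi$ we proceed in three steps.
\begin{compactenum}
\item \textbf{Approval restriction.} Check that~$\pi$ is \nonwasteful{}. This takes~$O(n\tau)$ lookups of~$v_a$.
\item \textbf{Sustainability.} For every resource~$\r_i$, compute the harvests~$h_i^\pi(t)$ for all~$t\in\set{\tau}$. Then compute the quantities~$\q_i^\pi(1),\dots,\q_i^\pi(\tau+1)$ via the recursion~$\q_i^\pi(t)=\rf_i(\q_i^\pi(t-1)-h_i^\pi(t-1))$. Along the way, check constraints (\ref{model:noe})--(\ref{model:tv}) directly.
\item \textbf{Envy-freeness.} For every ordered pair~$a,a'$, compare~$\sum_t v_a(\pi_t(a))$ with~$\sum_t v_a(\pi_t(a'))$. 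This takes~$O(n^2\tau)$ time.
\end{compactenum}
We answer \yes{} if and only if some candidate passes all checks. Correctness is immediate, because we examine every \xassignment[$\tau$] and the checks are literally the definitions of sustainability and envy-freeness.

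The only point requiring care is that step~2 runs in~$\poly(|I|)$ time even though the numbers are rationals in binary. As already noted in \cref{sec:prelims:model} (where membership in \NP{} is argued), with affine~$\rf_i(x)=\alpha_ix+\beta_i$ and a unary horizon, each~$\q_i^\pi(t)$ is a polynomial expression of degree one in the data, iterated at most~$\tau$ times. Its numerator and denominator therefore have bit-length~$O(\tau\cdot|I|)$, which is polynomial. So each of the~$O(m\tau)$ arithmetic operations and comparisons is polynomial.

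Multiplying the number of candidates by the per-candidate cost gives the bound~$(m+1)^{n\cdot\tau}\cdot\poly(|I|)$ claimed in \cref{obs:ntau:xp}. No step is a real obstacle. The mild subtlety is the bit-length growth of the quantities, and the earlier \NP-membership remark already handles it.
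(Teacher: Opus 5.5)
Your proposal is correct and matches the paper's proof: enumerate all $(m+1)^{n\cdot\tau}$ $\tau$-period assignments and verify each one in polynomial time. The paper justifies the polynomial-time verification by the unary encoding of~$\tau$. Your bit-length argument for the iterated affine quantities is a correct expansion of that same point.
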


\begin{proof}
	Guess the \xassignment[$\tau$]~$\pi \in ([m]\cup\{\emptyset\})^n \times\dots\times([m]\cup\{\emptyset\})^n$.
	There are at most~$(m+1)^{n\cdot\tau}$ many.
	Since~$\tau$ is unary-encoded,
	each can be verified in polynomial time.
\end{proof}

Too many critical resources reveal a \no-instance as follows.

\begin{rrule}[\appref{rrule:critical}]
  \label{rrule:critical}
	If there are more than~$n\cdot \tau$ many critical resources,
	then return~\no.
\end{rrule}
\appendixproof{rrule:critical}
{%
\begin{proof}
 If a resource is critical because $\lb_i> \q_{i}^{\pi}(\tau+1)$ or~$\q_i^{\pi}(t)<0$ for some~$t\in\set{\tau}$ without
 harvesting it,
 then the instance is trivially \no.
 Assume that each of the more than~$n\cdot\tau$ resources is critical because of $\q_{i}^{\pi}(t) > \ub_i$ for some~$t\in\set{\tau+1}$.
 Every such critical resource requires at least one agent to be assigned at least once to it.
 Since there are at most~$n\cdot\tau$ possible assignments of agents,
 at least one critical resource remains unassigned.
 Hence,
 no assignment can be sustainable.
\end{proof}
}%

The \emph{fingerprint} of a resource~$\r_i$ is the subset of agents who give positive value to~$\r_i$,
i.e.,
$\r_i$'s neighborhood in the approval graph.
Note that there are at most~$2^n$ many fingerprints.
The \emph{marking set} of $\r_i$ is the set of all sequences~$\calA=(A_1,\dots,A_\tau)$
with~$A_i\in 2^A$
of assigned agents to~$\r_i$ which are sustainable for~$\r_i$.
Note that computing the marking set of a resource can take $2^{n\cdot \tau}\poly(|I|)$ time.
Thus,
the following reduction rule runs in polynomial time only if we assume that the marking sets are given in the input.

\begin{rrule}[\appref{rrule:uncritical}]
  \label{rrule:uncritical}
	If for a fingerprint and a marking set there are more than~$n\cdot\tau$ many uncritical resources,
	then delete one of them.
\end{rrule}
\appendixproof{rrule:uncritical}
{%
\begin{proof}
 Let~$I'$ be the instance obtained from an input instance~$I$
 where resource~$\r'$
 is deleted
 and let~$\r_{i_1},\dots,\r_{i_N}$ be~$N\ceq n\cdot \tau$ resources with the same fingerprint and marking set as~$\r'$.
 Since~$\r'$ is uncritical,
 if~$I'$ is a \yes-instance,
 so is~$I$.
 Conversely,
 consider a sustainable envy-free \xassignment[$\tau$]~$\pi$ for~$I$.
 If~$\pi$ assigns no agent to~$\r'$,
 then~$\pi$ is a solution to~$I'$.
 Hence,
 we consider the case where~$\pi$ assigns an agent to~$\r'$.
 Let~$\calA=(A_1,\dots,A_\tau)$ be the sequence of agent sets with at least one non-empty set
 that~$\pi$ assigns to~$\r'$.
 Since~$\pi$ uses at most $N$ distinct resources,
 no agent is assigned to at least one of $\r_{i_1},\dots,\r_{i_N}$,
 say~$\r_{i_j}$.
 Let~$\pi'$ be the assignment obtained from~$\pi$ by
 reassigning in each period
 all agents assigned to~$\r'$ to $\r_{i_j}$,
 i.e.,
 for all~$t\in\set{\tau}$ and every agent~$a$ it holds
 \[ \pi_t'(a)  = \begin{cases}
                  \pi_t(a), &\cif{\pi_t(a)\neq \r'},
                  \\
                  \r_{i_j}, &\cif{\pi_t(a)= \r'}.
                 \end{cases}
 \]
 Note that~$(\pi_t')^{-1}(\r')=\emptyset$ for all~$t\in\set{\tau}$.
 Since~$\r_{i_j}$ has the same fingerprint as~$\r'$,
 we know that each agent in each~$A_k$ approves~$\r_{i_j}$.
 Thus,
 $\pi'$ is \nonwasteful{} and envy-free.
 Since~$\r_{i_j}$ has the same marking set as~$\r'$,
 we know that~$\calA$ is also in the marking set of~$\r_{i_j}$.
 Thus,
 $\calA$ is sustainable for~$\r_{i_j}$,
 and since
 $\r'$ is uncritical and no other assignment changed from~$\pi$,
 it follows that~$\pi'$ is sustainable.
 Thus,
 $\pi'$ is a solution to~$I$ that assigns no agent to~$\r'$,
 and hence,
 $\pi'$ is a solution to~$I'$.
\end{proof}
}%

This implies a doubly exponential upper bound on the number of resources that only depends on~$n+\tau$.

\begin{proof}[Proof of \cref{thm:ntau-fpt}]
 Let~$I$ be an instance of \fish{}
 with~$n$ agents,
 $m$ resources,
 and $\tau$ periods.
 First,
 determine for each resource whether it is critical or uncritical.
 This can be done in polynomial time,
 as we only need to emulate the empty \xassignment[$\tau$]
 and~$\tau$ is unary-encoded.
 Then,
 apply \cref{rrule:critical}:
 if it returns a trivial \no-instance,
 output~$\no$.
 Now,
 assume we have at most~$n\cdot \tau$ many critical resources.
 Compute,
 in polynomial time,
 the fingerprint of each resource.
 Also,
 in~$2^{n\cdot \tau}\poly(|I|)$ time,
 compute the marking set of each.
 There are at most~$2^n\cdot 2^{2^{n\cdot \tau}}$ different fingerprint-marking-set combinations.
 After exhaustively applying~\cref{rrule:uncritical},
 there are at most~$n\cdot \tau$ resources for each such combination.
 In total,
 there are at most~$n\cdot \tau\cdot ( 1 + 2^n\cdot 2^{2^{n\cdot \tau}})$ many resources.
 Hence,
 \cref{obs:ntau:xp} gives the final part of the \FPT-time algorithm.
\end{proof}

\begin{theorem}[\appref{thm:ntau:nopk}]
 \label{thm:ntau:nopk}
 \UnlessPK,
 \fish{} admits no problem kernel of size polynomial in~$n+\tau$,
 even for unary encoding and weights in~$\{1,2,3\}$.
\end{theorem}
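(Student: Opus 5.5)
We plan to prove the kernel lower bound via an OR-cross-composition (Bodlaender, Jansen, Kratsch) from the single-period problem with weights in~$\{1,3\}$. By \cref{thm:t1:hardness}(i), this problem is \NP-hard already for~$\tau=1$. Its hard instances, produced by the reduction from \prob{Independent Set} on cubic graphs, have all numbers bounded polynomially. They therefore remain \NP-hard under unary encoding.

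We take~$t$ such instances~$I_1,\dots,I_t$ and use a polynomial equivalence relation that groups instances by equal numbers of agents, resources, and weight profile. Hence we may assume each instance has the same number~$n_0$ of agents, and we may assume~$t=2^s$ by padding with copies. The composed instance~$I^*$ will have~$\tau=t$ periods, but we must keep~$\tau$ polynomial in~$n_0+\log t$. We therefore intend instead to use~$\tau=O(s)$ periods, or~$\tau=1$ combined with~$O(\log t)$ extra selector agents. The parameter must stay in~$\poly(n_0+\log t)$, so the number of agents cannot grow with~$t$ beyond~$O(\log t)$.

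The concrete plan is as follows. We share one set of~$n_0$ \enquote{instance agents} across all~$I_\ell$, and we place the resources of all instances disjointly in one period. Resources may be numerous, since~$m$ is not a parameter. To make a solution activate exactly one~$I_\ell$, we plan the following two steps.
\begin{compactitem}
\item The critical resource~$\r_0^{(\ell)}$ of every instance is made uncritical unless \enquote{enabled}.
\item We add~$s=\log t$ selector agents of weight~$2$ and a binary-encoding gadget over a few extra periods, say~$\tau=s+1$. In periods~$1,\dots,s$, each selector must choose, via urgent private resources and envy toward a trigger agent as in \cref{thm:n:unary:pnph}, one of two bit resources. The harvests then reduce quantities of resources attached to instance~$\ell$, so that exactly the instance whose index matches the chosen bit string keeps its critical resource urgent in the final period.
\end{compactitem}
The weight~$2$ is where the extra value in~$\{1,2,3\}$ comes in, keeping selector harvests distinguishable from instance agents.

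In the final period, the instance agents play~$I_\ell$. Envy-freeness forces them into the selected instance's structure, since their approvals for unselected instances' critical resources must be made harmless. To handle this, we plan to make the instance-agent copies instance-specific in their approvals: agent~$a_v$ approves~$\r_v^{(\ell)}$ for all~$\ell$ with~$v$ at the same position. Unselected instances then impose no obligations, because their resources are uncritical and unharvested, and so they cause no envy.

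The main obstacle is ensuring that, in unselected instances, nothing forces harvesting. We must also ensure that agents cannot profit from or be envious about resources of unselected instances. An agent assigned in the selected instance may approve the same-position resources of other instances; since she is assigned, she gets utility~$1$ and envies no one. We must check that unassigned agents in the selected copy envy only those agents that the original instance also allowed them to envy. Once this \enquote{exactly one active instance} gadget is verified, correctness gives OR-equivalence, and the parameter is~$n_0+2s+O(1)+\tau=\poly(n_0+\log t)$. The cross-composition framework then yields the claim \UnlessPK{}.
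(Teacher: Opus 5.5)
Your proposal never actually constructs the ``exactly one active instance'' gadget, and the mechanism you sketch cannot work in this model. The quantity of a resource depends only on the harvests of that very resource, so harvesting a bit resource cannot change anything about the resources of instance~$\ell$. Moreover, all slopes are positive, so a harvest only ever lowers later quantities. Hence an uncritical resource can never be forced to be harvested; there is no way to ``enable'' it. Conversely, a critical resource must be harvested in \emph{every} sustainable assignment, independently of all choices elsewhere.

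This leaves you with two bad options. If every $r_0^{(\ell)}$ is critical, all $t$ of them need their own harvests. Yet your composed instance offers only $(n_0+O(\log t))\cdot\tau=\mathrm{poly}(n_0+\log t)$ agent--period slots, far fewer than $t$ when $t$ is exponential in the instance size. If instead they are uncritical, the selected instance loses the capacity constraint forcing exactly $N-k$ vertex agents onto $r_0$, which is the whole source of hardness in \cref{thm:t1:hardness}(i). Any selection must therefore be driven by envy and by the one-resource-per-period constraint on agents, not by quantities, and you give no such mechanism.

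Second, even a working selector would not rescue the backward direction, because the instance agents are shared. Resources of unselected instances are uncritical and have spare quantity, so nothing stops an agent from harvesting one she approves. That harvest is sustainable and gives her positive utility, so she envies nobody. In the reduction of \cref{thm:t1:hardness}(i), the $k$ vertex agents not on $r_0^{(\ell^*)}$ can take $r_v^{(\ell')}$ for some $\ell'\neq\ell^*$. This leaves every vertex resource of $I_{\ell^*}$ free for the edge agents, so the composed instance is a yes-instance even if $G_{\ell^*}$ has no independent set of size~$k$. Your remark that unselected instances ``impose no obligations'' covers only the forward direction; they also provide escape routes.

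The paper avoids composition entirely. It gives a polynomial-parameter transformation from \prob{Set Cover} parameterized by $|X|+k$, which has no polynomial kernel unless $\mathrm{NP}\subseteq\mathrm{coNP}/\mathrm{poly}$:
\begin{itemize}
\item Set $\tau=k$. A weight-$1$ agent $a_0$ is forced onto a private resource with $\rho(z)=2z$ in every period.
\item Envy then forces a weight-$2$ agent $a^*$, who also approves that resource, to harvest a set resource $r_F$ in every period. Each $r_F$ has quantity $3$ and lower bound $1$, so it can be harvested at most once, and only by $a^*$.
\item An element agent $a_x$ of weight $3$ approves $r_F$ if and only if $x\notin F$. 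Her private resource permits only $k-1$ harvests, so envy-freeness forces some chosen set to contain $x$.
\end{itemize}
This yields $n+\tau=|X|+k+2$, with unary numbers and weights in $\{1,2,3\}$.
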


We give a polynomial-parameter transformation~\cite{BodlaenderThomasseYeo2011} from \prob{Set Cover},
where,
given a universe~$X$,
a set~$\calF$ of subsets of~$X$,
and an integer~$k\in\N$ with~$k\leq |\calF|$,
the question is whether there are at most~$k$ subsets in~$\calF$ such that their union equals~$X$.
\UnlessPK,
\prob{Set Cover} admits no problem kernel of size polynomial in~$|X|+k$
\cite{DomLokshtanovSaurabh2014}.

\begin{proof}[Proof idea of \cref{thm:ntau:nopk}]
 We have~$k$ periods.
 For each element~$x\in X$,
 there is an agent~$a_x$ and a resource~$\r_x$.
 Resource~$\r_x$ is only approved by~$a_x$,
 and can only be harvested at most~$k-1$ times by her.
 For each set~$F\in\calF$,
 there is a resource~$\r_F$, that, in a sustainable assignment,
 can be harvested only by a selection agent~$a^*$,
 and at most once.
Agent~$a^*$ approves all these resources and is forced
 by an additional agent-resource pair to harvest in each period.
 Now,
 $a_x$ approves~$\r_F$ if and only if~$x$ is \emph{not} an element of~$F$.
Hence,
 if~$a^*$ selects~$k$ sets none of which contains~$x$,
 $a_x$ values~$a^*$'s assignment at~$k$.
 Since~$a_x$ can obtain utility at most~$k-1$
 from%
 ~$\r_x$,
 she envies~$a^*$.
 Thus,
 any solution must ensure that at least one set is selected that contains~$x$.
 As this holds for all elements,
 the selected sets form a set cover.
\end{proof}

\appendixproof{thm:ntau:nopk}
{%
\begin{construction}
 \label{constr:ntau:nopk}
 Let~$I=(X,\calF,k)$ be an instance of~\prob{Set Cover}.
 Construct an instance~$I'$ of \fish{} as follows.
 Let~$\tau=k$.
 Set~$A=\{a_0\}\cup\{a^*\}\cup A_X$
 with~$A_X=\{a_x\mid x\in X\}$,
 such that~$a_0$ has weight 1,
 $a^*$ has weight 2,
 and~$a_x$ has weight 3 for every~$x\in X$.
 We have a resource~$\r_0$ with~$\iq_0=\ub_0=2$,
 $\lb_0=1$, and~$\rf_0(z)=2z$.
 Only~$a_0$ and~$a^*$ approve~$\r_0$.
 We have a set~$\rSet_\calF=\{\r_F\mid F\in\calF\}$,
 where every resource~$\r_F$ has~$\iq_F=\ub_F=3$,
 $\lb_F=1$, and~$\rf_F(z)=z$.
 Resource~$\r_F$ is approved by $a^*$ and all~$a_x$ with~$x\not\in F$.
 Finally,
 for each~$x\in X$,
 there is a resource~$\r_x$ with $\iq_x=\ub_x=3(k-1)+1$,
 $\lb_x=1$, and~$\rf_x(z)=z$.
 Resource~$\r_x$ is only approved by~$a_x$.
\end{construction}

\cref{constr:ntau:nopk} is polynomial-time executable and
since~$n+\tau=|X|+2+k$,
it is also parameter-preserving from~$|X|+k$ to~$n+\tau$.

\begin{observation}
 \label{obs:ntau:nopk:unavoidables}
 Let~$I'$ be a \yes-instance.
 In every solution,
 in every period,
 $a_0$ is assigned to~$\r_0$
 and
 $a^*$ is assigned to a resource from~$\rSet_\calF$.
 Moreover,
 if agent~$a_x$ is assigned,
 then only to~$\r_x$,
 and only in at most~$k-1$ periods.
\end{observation}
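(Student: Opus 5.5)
We verify each claim of \cref{obs:ntau:nopk:unavoidables} directly from \cref{constr:ntau:nopk}. We go resource by resource, using no over-extraction and upper capacity, and then use envy-freeness to pin down the agents.

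\textbf{Resource~$\r_0$.} Since~$\rf_0(z)=2z$ and~$\iq_0=\ub_0=2$, consider any period~$t$ with~$\q_0(t)=2$. If~$\r_0$ is not harvested in~$t$, then~$\q_0(t+1)=4>\ub_0$. If it is harvested by weight~$h$, then~$\q_0(t+1)=2(2-h)$. So~$h=1$ keeps the quantity at~$2$, $h=2$ drops it to~$0$, and~$h=3$ over-extracts. Only~$a_0$ (weight~$1$) and~$a^*$ (weight~$2$) approve~$\r_0$.

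If~$a^*$ harvests~$\r_0$ alone at some point, the quantity becomes~$0$ and stays~$0$. Terminal viability then fails, since~$\lb_0=1$. Assigning both agents (weight~$3$) over-extracts.

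Hence, by induction over the periods, the quantity is~$2$ in every period and exactly~$a_0$ harvests~$\r_0$ in every period~$t\in\set{k}$. This is the first claim.

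\textbf{Agent~$a^*$.} Since~$a_0$ receives~$\r_0$ in all~$k$ periods and~$a^*$ approves~$\r_0$, agent~$a^*$ values~$a_0$'s bundle at~$k$. Envy-freeness therefore forces~$a^*$ to obtain utility~$k$, i.e., to be assigned to an approved resource in every period.

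She cannot use~$\r_0$, because that resource is already fully used by~$a_0$. Adding her would give weight~$3>2$. Since the assignments are approval-restricted, her resource in each period lies in~$\rSet_\calF$. This is the second claim.

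\textbf{Agents~$a_x$.} Each~$\r_F$ has~$\iq_F=3$ and identity renewal, so its total harvest over the horizon is at most~$3$. Suppose some~$a_x$ (weight~$3$) were assigned to some~$\r_F$ in period~$t$.

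\begin{itemize}
\item If~$a^*$ harvests~$\r_F$ in any period, the combined harvest of~$\r_F$ is at least~$5>3$. This is an over-extraction in whichever period comes later, or in the same period.
\item Otherwise, the quantity of~$\r_F$ drops to~$0$ after period~$t$ and stays there.
\end{itemize}

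The second case is the delicate step, and we resolve it via terminal viability: the final quantity is~$0<\lb_F=1$. So any harvest of~$\r_F$ by~$a_x$ violates sustainability.

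We also note in passing that~$a^*$ harvests each~$\r_F$ at most once. A second harvest would leave quantity~$3-4<0$, which is an over-extraction. Terminal viability holds after a single harvest, since~$3-2=1\ge\lb_F$.

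Agent~$a_x$ does not approve~$\r_0$ or~$\r_{x'}$ for~$x'\neq x$. Since she cannot use any~$\r_F$, she can only be assigned to~$\r_x$. That resource has~$\iq_x=3(k-1)+1$, identity renewal and~$\lb_x=1$. Terminal viability requires its total harvest to be at most~$3(k-1)$. Each harvest by~$a_x$ removes~$3$, so she is assigned in at most~$k-1$ periods. This completes the observation.
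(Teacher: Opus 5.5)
Your proof is correct and follows essentially the same route as the paper. Upper capacity plus terminal viability pin $a_0$ to $r_0$ in every period, and envy-freeness toward $a_0$ then forces $a^*$ onto some $r_F$ in every period; terminal viability rules out any weight-$3$ harvest of an $r_F$ and limits $a_x$ to at most $k-1$ harvests of $r_x$. Your case split for $a_x$ on $r_F$ is correct but unnecessary: any weight-$3$ harvest already leaves a final quantity of at most $0<1$, which is exactly how the paper argues.
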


\begin{proof}
 The upper capacity and terminal viability constraints together with the fact that~$a_0$ is the only agent of weight one force $a_0$ to be assigned to~$\r_0$ every period.
 Since $a^*$ may envy~$a_0$ for each assignment to~$\r_0$,
 this forces $a^*$ to be assigned to a resource from~$\rSet_\calF$ in every period.
 Since the weight of each~$a_x$ is three,
 being assigned to any~$\r_F$ would violate terminal viability.
 Thus,
 if $a_x$ is assigned,
 then only to~$\r_x$,
 which permits at most $k-1$ harvests of weight three
 before violating terminal viability.
\end{proof}

\begin{proof}[Proof of \cref{thm:ntau:nopk}]
 Let~$I'$ be the instance of \fish{}
 obtained from an instance~$I=(X,\calF,k)$ of~\prob{Set Cover}
 using \cref{constr:ntau:nopk}.
 We prove that~$I$ is a \yes-instance if and only if~$I'$ is a \yes-instance.

 \RD{}
 Let~$\calF'\subseteq \calF$ be a solution to~$I$ of size exactly~$k$
 (which we can assume to exist by a standard padding argument).
 We construct an~\xassignment[$\tau$]~$\pi$ as follows.
 Let~$a_0$ be assigned to~$\r_0$ in every period.
Moreover, for each~$F\in\calF'$,
assign~$a^*$ to~$\r_F$ in a distinct period
 (hence, $a^*$ is assigned to a resource~$\r_F$ in every period).
 Let~$\gamma_x$ be the number of sets~$F\in \calF'$ such that~$x\not\in F$.
 Since~$\calF'$ is a set cover,
 we know that~$\gamma_x<k$.
 Finally,
 for every~$x\in X$,
 assign~$a_x$ to~$\r_x$ in exactly~$\gamma_x$ periods.
 We show that~$\pi$ is sustainable and envy-free,
 starting with the latter.
 Observe that~$a_0$ envies no one,
 and~$a^*$ can only envy~$a_0$,
 but both are assigned to an approved resource in each period.
 Consider~$a_x$.
 Agent~$a_x$ can only envy~$a^*$.
 Agent~$a^*$ is assigned in exactly~$\gamma_x$ periods to a resource they both approve,
 and~$a_x$ is assigned to~$\r_x$ in exactly~$\gamma_x$ periods,
 so no envy here either.
 Note that~$a_x$ and~$a_{x'}$ do not envy each other.
 Finally,
 we show that~$\pi$ is sustainable.
 Resource~$\r_0$ is harvested in each period by a weight of one,
 hence drops down to one and renews to two in the next period.
 Note that the quantities of all other resources are non-increasing over time,
 and hence only viability and no-over-extraction remain to be proved.
 Since only~$a^*$ with weight two is assigned to resources from~$\rSet_\calF$,
 and is assigned to no resource twice,
 all resources from~$\rSet_\calF$ have quantity between one and three in every period.
 Finally,
 since~$a_x$ is assigned only~$\gamma_x<k$ times to~$\r_x$,
 the quantity at $\tau+1$ is $3(k-1)+1-3\gamma_x\geq 3(k-1)+1-3(k-1)=1=\lb_x$.

 \LD{}
 Let~$\pi$ be a sustainable envy-free \xassignment[$\tau$].
 By~\cref{obs:ntau:nopk:unavoidables},
 we have that~$\sum_{t=1}^\tau v_{a_x}(\pi_t(a_x)) \leq k-1$.
 Moreover,
 $a^*$ is assigned in every period,
 and let~$\calF'\subseteq \calF$ be the set of all~$F\in \calF$ such that~$a^*$ is assigned to~$\r_F$ in some period.
 Since~$\pi$ is envy-free,
 $a^*$ must be assigned in at least one period to a resource~$\r_F$
 that~$a_x$ does not approve,
 that is,
 with~$x\in F$.
 Thus,
 $x$ is covered by~$F\in\calF'$.
Since this holds for every~$x\in X$ and~$|\calF'|\leq k$,
$\calF'$ is a solution to~$I$.
\end{proof}
}

\section{Few Agents and Resources: The Role of the Renewal Functions}

We next consider the combined parameter~$n+m$ and show that the complexity of \fish{} depends on the renewal functions and the encoding.
For arbitrary translations, \fish{} is in~\XP{} with respect to~$n+m$, even under binary encoding.

\begin{theorem}[\appref{thm:nm:translations}]
 \label{thm:nm:translations}
 If all renewal functions are translations,
 then \fish{} is \XP{} \wpb{} $n+m$.
\end{theorem}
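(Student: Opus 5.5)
Our plan is a dynamic program over the periods whose state records, for every agent and every resource, how often that agent has harvested that resource so far. The key fact about translations $\rf_i(x)=x+\beta_i$ is that unrolling the recursion gives
\[
\q_i^\pi(t)=\iq_i+(t-1)\beta_i-\sum_{t'=1}^{t-1}h_i^\pi(t') \qquad\text{for all } t\in\set{\tau+1}.
\]
So the quantity of $\r_i$ in period~$t$ depends only on the cumulative harvest before~$t$, and not on when those harvests happened.

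For an assignment $\pi$ and a period~$t$, let $c^{(t)}_{a,i}\in\{0,\dots,\tau\}$ be the number of periods $t'<t$ with $\pi_{t'}(a)=\r_i$. The cumulative harvest of $\r_i$ before~$t$ is then $\sum_{a\in A}w_a\,c^{(t)}_{a,i}$. Hence $\q_i^\pi(t)$ is determined by $t$ and the count matrix $C^{(t)}=(c^{(t)}_{a,i})\in\{0,\dots,\tau\}^{n\times m}$. This holds even though the weights and the $\beta_i$ are encoded in binary, since arithmetic on these numbers takes polynomial time. Envy-freeness is also a function of the final matrix $C=C^{(\tau+1)}$ alone: agent~$a$ obtains utility $\sum_i c_{a,i}\val_a(\r_i)$ and values the bundle of $a'$ at $\sum_i c_{a',i}\val_a(\r_i)$.

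We then define a table $D[t,C]\in\{\true,\false\}$ for $t\in\set{\tau+1}$ and $C\in\{0,\dots,\tau\}^{n\times m}$. It is $\true$ if and only if there exist \nonwasteful{} assignments $\pi_1,\dots,\pi_{t-1}$ with count matrix $C$ that satisfy constraints (I)--(III) in all periods up to $t-1$ and constraint (II) in period~$t$. We initialise $D[1,\mathbf{0}]=\true$, provided $\iq_i\le\ub_i$ for all~$i$. For the transition, given $D[t,C]=\true$ with $t\le\tau$, we try each of the at most $(m+1)^n$ \nonwasteful{} single-period assignments $\pi_t\colon A\to\rSet\cup\{\emptyset\}$. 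We compute the current quantities $\q_i(t)$ from $C$ and check no over-extraction $h_i(t)\le\q_i(t)$ and viability protection. We then form $C'=C+\Delta(\pi_t)$, compute $\q_i(t+1)=\q_i(t)-h_i(t)+\beta_i$, check $\q_i(t+1)\le\ub_i$, and set $D[t+1,C']=\true$.

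Finally we accept if and only if some $C$ with $D[\tau+1,C]=\true$ satisfies terminal viability and the envy-freeness inequalities above. Correctness is an induction on~$t$. The key step is that two partial assignments with the same count matrix at period~$t$ have identical quantities from~$t$ on and identical final envy comparisons for identical continuations. Thus they are interchangeable, and keeping only $C$ loses no solution. The running time is $\tau\cdot(\tau+1)^{nm}\cdot(m+1)^n\cdot\poly(|I|)$. Since $\tau$ is encoded in unary, this is $|I|^{O(nm)}$, that is, \XP{} \wpb{} $n+m$.

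The main obstacle is justifying rigorously that $C$ is a sufficient state, i.e., the order-independence. Binary-encoded weights prevent us from tracking the cumulative harvest itself, since it can take exponentially many values. Tracking per-agent, per-resource counts avoids this, because the harvest is a fixed linear function of the counts. This is exactly where translations are needed: for a slope $\alpha\neq1$, the contribution of a harvest depends on its timing through a factor $\alpha^{t-t'}$, and the counts no longer determine the quantities.
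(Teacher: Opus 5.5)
Your proposal is correct and follows the paper's proof essentially step for step. The paper also runs a dynamic program over periods whose state is the $n\times m$ access matrix of per-agent, per-resource harvest counts, and uses that for translations the quantity $\iq_j+t\beta_j-\sum_i X_{i,j}w_{a_i}$ is determined by this matrix. Its transitions range over all $(m+1)^n$ approval-restricted single-period assignments and check (I)--(III), and at the end it checks terminal viability and envy-freeness on the final matrices, giving the same $(\tau+1)^{O(nm)}$ bound. The only differences are cosmetic: you index states by counts before period~$t$ rather than after it, and you check $\iq_i\le\ub_i$ at initialization, whereas the paper rejects such instances up front.
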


\begin{proof}[Proof idea for \cref{thm:nm:translations}]
 Every sustainable envy-free \xassignment[$\tau$]~$\pi$ induces an \emph{access matrix}~$X\in\{0,1,\dots,\tau\}^{n\times m}$,
 where~$X_{i,j}$ represents how often agent~$a_i$ harvested resource~$\r_j$.
 Our goal is to guess~$X$ induced by a sustainable envy-free \xassignment[$\tau$].
 If all renewal functions are translations,
 then
 any resource's quantity at period~$t$ only depends on the cumulative harvest up to period~$t$,
 and not on the exact periods in which it is harvested.
 Thus,
 $X$ and a period~$t$ suffice to determine the current resource quantities,
 assuming that no constraints were violated so far.
We ensure the latter through dynamic programming:
a feasible access matrix for the previous period determines
the quantity of each resource,
so we can check all sustainability constraints for the next period.
Finally, $X$ also suffices to check envy-freeness,
since it records how often each agent harvested each resource.
\end{proof}

\appendixproof{thm:nm:translations}
{%
Let~$D[t,X]\in\{\true,\false\}$ with period~$t\in\{0,1,\dots,\tau\}$ and access matrix~$X\in\{0,1,\dots,\tau\}^{n\times m}$,
where~$X_{i,j}$ represents how often agent~$a_i$ harvested resource~$\r_j$.
Note that every~\xassignment[$t$]~$\pi=(\pi_1,\dots,\pi_t)$ \emph{induces} an access matrix~$X^\pi$ with~$X^{\pi}_{i,j}=|\{t'\in\set{t}\mid \pi_{t'}(a_i)=\r_j\}|$.
For translations,
$X$ and~$t$ suffice to derive a resource's quantity
assuming that no constraints were violated so far:
\[ q_j^X(t+1) = \iq_j + t\cdot \beta_j - \sum_{i=1}^n X_{i,j}\cdot w_{a_i} . \]
We call an access matrix~$X$ feasible at period~$t$ if and only if there is a \xassignment[$t$] inducing~$X$ that respects the constraints \eqref{model:noe}, \eqref{model:uc}, and \eqref{model:vp}.
Note that an access matrix can be induced by several such assignments,
yet the crucial insight is that for each~$\pi$ of those it holds that~$q_j^X(t+1)=q_j^{\pi}(t+1)$ for every~$j\in\set{m}$.
We set only~$D[0,\mathbf{0}]=\true$ and all other entries for~$t=0$ to~$\false$.
Set~$D[t,X]=\true$ if and only if there is an approval-restricted assignment~$\pi_t\colon A\to \rSet\cup\{\emptyset\}$ and an entry~$D[t-1,X']=\true$ such that
$\sum_{a\in \pi_t^{-1}(\r_j)} w_a \leq q_j^{X'}(t)$,
$q_j^{X'}(t)< \lb_j \implies \pi_t^{-1}(\r_j) = \emptyset$,
$q_j^{X}(t+1)\leq \ub_j$,
and
$X_{i,j} = X_{i,j}' + \Indic_{\pi_t(a_i)=\r_j}$,
where~$\Indic_{A}\in \{0,1\}$ denotes the indicator function such that~$\Indic_{A}=1$ if and only if boolean statement~$A$ is true.

\begin{algorithm}[t]
	\caption{Input is an instance~$I$ with an $n$-sized agent set~$A$, $m$-sized resource set~$\rSet$, and~$\tau$ periods.}
	\label{alg:nm:translations}
	Let~$T\gets\{0,1,\dots,\tau\}$\;
	initialize $D[0,(0)_{n\times m}]\gets\true$\;
	initialize all other DP entries to $\false$\;

	\For{$t\gets 1$ \KwTo $\tau$}{
			\ForEach{\nonwasteful{} $\pi_t\colon A \to \rSet\cup\{\emptyset\}$ \textbf{and} $X'\in T^{n\times m}$ with $D[t-1,X']=\true$}{
				Let~$X\in T^{n\times m}$ such that $X_{i,j} = X_{i,j}' + \Indic_{\pi_t(a_i)=\r_j}$ for all~$i\in\set{n},j\in\set{m}$\;
				\If{for all~$j\in\set{m}$ it holds that $\sum_{a\in \pi_t^{-1}(\r_j)} w_a \leq q_j^{X'}(t)$
						\textbf{and}
						$q_j^{X'}(t)< \lb_j \implies \pi_t^{-1}(\r_j) = \emptyset$
						\textbf{and}
						$q_j^{X}(t+1)\leq \ub_j$}
				{$D[t,X]\gets\true$}
		 }
	}

	\ForEach{$X\in T^{n\times m}$ with $D[\tau,X]=\true$}{
		\If{$q^X_j(\tau+1)\geq \lb_j \forall j\in\set{m}$ \textbf{and} $\forall\, i,i'\in\set{n}: \sum_{j=1}^m (X_{i,j}-X_{i',j})\cdot \val_{a_i}(\r_j) \geq 0$}{
			\Return{\yes}\;
		}
	}
	\Return{\no}\;
\end{algorithm}

\begin{lemma}
	\label{lem:nm:translations}
	For every~$t\in\{0,1,\dots,\tau\}$,
	it holds true that $D[t,X]=\true$ if and only if~$X$ is a feasible matrix at period~$t$.
\end{lemma}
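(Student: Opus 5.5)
We prove \cref{lem:nm:translations} by induction on~$t$. The one structural fact we need first: for translations, every $t$-period assignment~$\pi$ with induced access matrix~$X^\pi$ satisfies $q_j^\pi(t+1)=q_j^{X^\pi}(t+1)$ for every $j\in\set{m}$. We show this by induction on~$t$ directly from the recursive definition of the quantities. The recursion gives $q_j^\pi(t+1)=q_j^\pi(t)-h_j^\pi(t)+\beta_j$. Moreover, $h_j^\pi(t)=\sum_{i}\Indic_{\pi_t(a_i)=\r_j}w_{a_i}$, and the column sums of~$X^\pi$ accumulate exactly these indicators. Telescoping then yields $q_j^\pi(t+1)=\iq_j+t\beta_j-\sum_i X^\pi_{i,j}w_{a_i}$. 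The formula holds regardless of whether constraints are respected, since it is pure arithmetic.

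The base case $t=0$ is immediate. The only $0$-period assignment is empty and induces the zero matrix, and $D[0,\cdot]$ is true exactly at~$\mathbf{0}$.

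For the forward direction of the inductive step, suppose $D[t,X]=\true$. It was set from some $D[t-1,X']=\true$ and an approval-restricted~$\pi_t$ passing the three checks. By induction, some $(t-1)$-period assignment $\pi'$ induces~$X'$ and respects \eqref{model:noe}, \eqref{model:uc} and \eqref{model:vp} up to period $t-1$ (capacity up to quantity index~$t$). Extend $\pi'$ by~$\pi_t$. The extension induces~$X$ by the update rule. By the structural fact, $q_j^{\pi'}(t)=q_j^{X'}(t)$, so the algorithm's checks against $q^{X'}_j(t)$ are exactly no over-extraction and viability protection in period~$t$. Likewise $q_j^{\pi}(t+1)=q_j^X(t+1)$, so the check $q^X_j(t+1)\le\ub_j$ is exactly upper capacity at~$t+1$. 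Hence $X$ is feasible at~$t$.

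For the backward direction, take a $t$-period assignment~$\pi$ inducing~$X$ that respects the constraints. Its restriction to the first $t-1$ periods respects them as well and induces $X'$ with $X'_{i,j}=X_{i,j}-\Indic_{\pi_t(a_i)=\r_j}$. Each entry of $X'$ lies in $\{0,\dots,\tau\}$, so $X'$ is in the table. By induction, $D[t-1,X']=\true$. The algorithm enumerates all approval-restricted~$\pi_t$, and $\pi_t$ is one since~$\pi$ is approval-restricted. Via the structural fact, the three checks coincide with the constraints that~$\pi$ satisfies at period~$t$, so $D[t,X]$ gets set to~$\true$.

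The main obstacle I expect is bookkeeping rather than mathematics. We must make sure that ``feasible at period~$t$'' includes the capacity condition at quantity index~$t+1$, matching the check $q_j^X(t+1)\le\ub_j$. We must also note that capacity at index~$1$ is handled by the initial rejection of instances with $\iq_j>\ub_j$. Finally, we must check that the structural fact is used for both~$X'$ (index~$t$) and~$X$ (index~$t+1$); this is the one place where translations are essential.
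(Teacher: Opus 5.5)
Your proof is correct and follows the paper's argument essentially step for step: induction on $t$, extending a witness for $X'$ by $\pi_t$ in the forward direction, restricting a witness for $X$ to its first $t-1$ periods in the backward direction, and matching the three checks to constraints (I)--(III) via the identity $q_j^\pi(t+1)=q_j^{X^\pi}(t+1)$. The only difference is that you prove this identity explicitly by telescoping $q_j^\pi(t+1)=q_j^\pi(t)-h_j^\pi(t)+\beta_j$, whereas the paper states it as a ``crucial insight'' before the lemma. You also spell out bookkeeping the paper leaves implicit: capacity at index~$1$ via the initial rejection, capacity at index~$t+1$ being part of feasibility, and the entries of $X'$ lying in range.
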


\begin{proof}
 We prove the statement via induction on~$t$.
 By convention,
 the statement is true for~$t=0$:
 The empty $0$-period assignment respects all constraints
 and induces the all-zero matrix.
 Conversely,
 only the all-zero matrix is feasible for period~$0$.
 Now assume the statement holds true for~$t-1\geq 0$.

 \RD{}
 Let~$D[t,X]=\true$.
 Then,
 there is an approval-restricted assignment~$\pi_t\colon A\to \rSet\cup\{\emptyset\}$ and an entry~$D[t-1,X']=\true$ such that
 $\sum_{a\in \pi_t^{-1}(\r_j)} w_a \leq q_j^{X'}(t)$,
 $q_j^{X'}(t)< \lb_j \implies \pi_t^{-1}(\r_j) = \emptyset$,
 $q_j^{X}(t+1)\leq \ub_j$,
 and
 $X_{i,j} = X_{i,j}' + \Indic_{\pi_t(a_i)=\r_j}$.
 By induction,
 $X'$ is feasible at period~$t-1$ and hence,
 there is a \xassignment[$(t-1)$] $\pi'=(\pi_1,\dots,\pi_{t-1})$ that respects the constraints \eqref{model:noe}--\eqref{model:vp}
 and induces~$X'$.
 We claim that $\pi=(\pi_1,\dots,\pi_{t-1},\pi_t)$ is a \xassignment[$t$]
 that respects the constraints \eqref{model:noe}--\eqref{model:vp}
 and induces~$X$.
 Due to $\sum_{a\in \pi_t^{-1}(\r_j)} w_a \leq q_j^{X'}(t)$,
 there is no over-extraction.
 Due to $q_j^{X}(t+1)\leq \ub_j$,
 upper capacities are respected.
 Due to $q_j^{X'}(t) < \lb_j \implies \pi_t^{-1}(\r_j) = \emptyset$,
 viability protection is respected.
 Finally,
 due to $X_{i,j} = X_{i,j}' + \Indic_{\pi_t(a_i)=\r_j}$,
 $\pi$ induces~$X$.

 \LD{}
 Let~$X$ be a feasible matrix at period~$t$.
 By definition,
 there is a \xassignment[$t$]~$\pi=(\pi_1,\dots,\pi_t)$ inducing~$X$ that respects the constraints \eqref{model:noe}--\eqref{model:vp}.
 In particular,
 we have that~$q_j^{X}(t+1) = q_j^{\pi}(t+1)\leq \ub_j$.
 Note that the \xassignment[$(t-1)$]~$\pi'=(\pi_1,\dots,\pi_{t-1})$ also respects the constraints \eqref{model:noe}--\eqref{model:vp}.
 Let~$X'$ be the access matrix induced by~$\pi'$,
 and note that~$X_{i,j}' = X_{i,j} - \Indic_{\pi_t(a_i)=\r_j}$ for all~$i\in\set{n}$, $j\in\set{m}$.
 By induction,
 we have that~$D[t-1,X']=\true$.
 By the definition of~$\pi$, $X$, $\pi'$, and~$X'$,
 we have that
 $\sum_{a\in \pi_t^{-1}(\r_j)} w_a \leq q_j^{\pi}(t) = q_j^{\pi'}(t) = q_j^{X'}(t)$,
 as well as
 $q_j^{X'}(t) = q_j^{\pi'}(t) = q_j^{\pi}(t) < \lb_j \implies \pi_t^{-1}(\r_j) = \emptyset$.
 By convention,
 it follows that~$D[t,X]=\true$
 due to~$\pi_t$ and~$D[t-1,X']$.
\end{proof}

Finally,
we only need to check whether there is an access matrix~$X$ such that~$D[\tau,X]=\true$,
$\sum_{j=1}^m X_{i,j}\cdot \val_{a_i}(\r_j) \geq \sum_{j=1}^m X_{i',j}\cdot \val_{a_i}(\r_j)$ for all~$i,i'\in\set{n}$,
and $q_j^{X}(\tau+1)\geq \lb_j$ for all~$j\in\set{m}$.
The algorithm outputs \yes{} only in this case.
\cref{alg:nm:translations} outlines the complete algorithm.

\begin{proof}[Proof of \cref{thm:nm:translations}]
 We prove soundness,
 completeness,
 and \XP{} running time of \cref{alg:nm:translations}.

 \emph{Soundness}:
 Assume that the algorithm outputs~\yes.
 Then,
 $D[\tau,X]=\true$ for some~$X$ with
 $\sum_{j=1}^m X_{i,j}\cdot \val_{a_i}(\r_j) \geq \sum_{j=1}^m X_{i',j}\cdot \val_{a_i}(\r_j)$ for all~$i,i'\in\set{n}$,
 and $q_j^{X}(\tau+1)\geq \lb_j$ for all~$j\in\set{m}$.
 By \cref{lem:nm:translations},
 $X$ is a feasible matrix,
 and hence induces a \xassignment[$\tau$]~$\pi$
 that respects the constraints \eqref{model:noe}--\eqref{model:vp}.
 Since for~$\pi$ it holds that~$q_j^{\pi}(\tau+1)=q_j^{X}(\tau+1)\geq \lb_j$ for all~$j\in\set{m}$,
 also terminal viability is respected.
 It follows that~$\pi$ is sustainable.
 Finally,
 since
 $\sum_{t=1}^\tau \val_{a_i}(\pi_t(a_i)) = \sum_{j=1}^m X_{i,j}\cdot \val_{a_i}(\r_j) \geq \sum_{j=1}^m X_{i',j}\cdot \val_{a_i}(\r_j) = \sum_{t=1}^\tau \val_{a_i}(\pi_t(a_{i'}))$,
 it follows that~$\pi$ is envy-free.

 \emph{Completeness}:
 Let~$\pi=(\pi_1,\dots,\pi_\tau)$ be a sustainable envy-free \xassignment[$\tau$].
 Let~$X$ be the matrix induced by~$\pi$.
 By definition,
 we know that~$X$ is feasible at period~$\tau$.
 By \cref{lem:nm:translations},
 we know that~$D[\tau,X]=\true$.
 Since~$\pi$ is sustainable,
 we have that
 $q_j^{X}(\tau+1)=q_j^{\pi}(\tau+1)\geq \lb_j$ for all~$j\in\set{m}$.
 Finally,
 since $\pi$ is envy-free,
 we have that
 $\sum_{j=1}^m X_{i,j}\cdot \val_{a_i}(\r_j) = \sum_{t=1}^\tau \val_{a_i}(\pi_t(a_i)) \geq \sum_{t=1}^\tau \val_{a_i}(\pi_t(a_{i'})) = \sum_{j=1}^m X_{i',j}\cdot \val_{a_i}(\r_j)$.
 Thus,
 the algorithm outputs~\yes.

 \emph{Running time}:
 Table~$D$ has~$(\tau+1)\cdot (\tau+1)^{n\cdot m}$ entries.
Updating an entry requires computing
all possible~$(m+1)^n$ assignments
as well as corresponding
quantities,
which can be done in time polynomial in~$|I|$
even with binary encoding,
and compare it with at most~$(\tau+1)^{n\cdot m}$ entries.
The final check for each entry~$D[\tau,X]$ can also be done in time polynomial in~$|I|$.
It follows that the algorithm runs in \XP-time regarding~$n+m$.
\end{proof}
}%

However, for binary encoding, allowing multiplicative renewal already yields \NP-hardness for five agents and three resources, even when all agents approve all resources.

\begin{theorem}[\appref{thm:nm:pnph}]
\label{thm:nm:pnph}
\fish{} is \NP-hard even for five agents,
three resources,
and all renewal functions having integral~slope.
\end{theorem}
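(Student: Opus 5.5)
We reduce from \prob{Partition} in its binary-encoded form, or more conveniently from \prob{Subset Sum}: given positive integers $x_1,\dots,x_N$ and a target $T$, decide whether some $J\subseteq\set{N}$ has $\sum_{i\in J}x_i=T$. The guiding idea comes from the discussion after \cref{thm:nm:translations}. With slope $\alpha\neq1$, the effect of a harvest on a later quantity depends on when it took place. We therefore encode the choice ``$i\in J$'' as the \emph{timing} of a fixed-weight harvest, not as a choice of agent. We use $\tau=N+O(1)$ periods and one \emph{counter} resource $\r_c$ with $\rf_c(x)=2x+\beta_c$.

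In period $i$, a unit-weight \emph{bit agent} either harvests $\r_c$ or not. Since the slope is $2$, a harvest of weight $w$ in period $t$ lowers $\q_c(\tau+1)$ by exactly $w\cdot 2^{\tau+1-t}$. This alone only writes a binary number. To realise arbitrary $x_i$ we need period-dependent weights. So instead, in phase $i$ we let the quantity be large and let a heavy agent of weight $x_i\cdot 2^{-(\tau+1-t_i)}$ harvest. This fails, because weights are fixed integers.

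The better plan is to reverse the roles and use the slope to scale. In every period we offer a harvest of the \emph{same} weight $W$ on $\r_c$ and take $\tau$ large: we give every number $x_i$ its own block of periods, and write $x_i$ in binary with bits $\xi_{i,k}$. Within block $i$, the agent either performs the ``yes'' pattern (harvest exactly in the periods $k$ with $\xi_{i,k}=1$) or harvests nothing. Doing this on a slope-$2$ resource read at the end yields $\sum_{i\in J}x_i\cdot 2^{(\text{offset of block } i)}$. To align the offsets, we make each block a fixed length and interleave the blocks bit by bit: period $(k,i)$ carries bit $k$ of every number. In that period, an aggregator resource is harvested by an amount equal to $\sum_{i\in J}\xi_{i,k}$. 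That count is not bounded by the number of agents, so we move the counting into quantities instead.

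Concretely, the three resources play these roles:
\begin{compactitem}
 \item a \emph{pacer} $\r_0$ (translation, private to one trigger agent $a_0$) forces, via envy-freeness as in \cref{thm:ntau:nopk,thm:n:unary:pnph}, a \emph{writer} agent $a_w$ to harvest in every period;
 \item a \emph{selector} resource $\r_s$ with slope $1$ and upper bound chosen so that the writer's per-block decision is binary and consistent within the block;
 \item the counter $\r_c$ with slope $2$ and an exact terminal window $[\lb_c,\ub_c]$. Combining the exact window with the upper-capacity constraint during the run (as in \cref{thm:mtau:weak}) pins the final value to $q_c(1)2^{\tau}+\beta_c(2^\tau-1)-\sum_t h_c(t)2^{\tau+1-t}$. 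This forces $\sum_{i\in J}x_i=T$.
\end{compactitem}
The remaining two agents are blockers/verifiers that absorb forced harvests. They are weighted so that no-over-extraction excludes unintended assignments, and so that all five agents approve all three resources with envy-freeness holding automatically: every agent is assigned in every period, and each agent gets utility $\tau$.

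The main obstacle is carrying sums across blocks. Writing $x_i$'s bits in block $i$ gives the wrong powers of $2$ unless every block ends at the same time. The fix I would try first is to process the numbers \emph{bitwise in parallel*} differently: choose one period per item $i$, with the harvest in period $t_i$ of weight $W$, and pick the slope-$2$ resource's parameters so that the final contribution $W2^{\tau+1-t_i}$ is dominated. Then encode $x_i$ instead in a second resource with slope $1$ (a translation). There the harvest timing is irrelevant, and the agent's weight, which must be constant, is replaced by the \emph{duration} of a harvest run of length $x_i$. This conflicts with unary $\tau$, however, since $x_i$ is binary. So the genuine difficulty is simulating binary-large $x_i$ with polynomially many periods and fixed weights, and I expect the slope-$2$ doubling to be the only tool for it: a run of $\lceil\log x_i\rceil$ periods whose harvest/no-harvest pattern is forced to spell $x_i$ by a per-item pattern check on the slope-$2$ resource. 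Making that check local, by resetting the resource to a fixed value at each block boundary through urgency, is the step I would work out in detail first.
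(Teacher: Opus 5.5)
\textbf{There is a genuine gap: the proposal never reaches a reduction.} After discarding several plans, you leave the item encoding open (``the step I would work out in detail first''). That step is the whole difficulty: representing an NP-hard instance with five fixed weights, three affine resources and polynomially many periods.

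The repair you sketch also fails as stated. A harvest in one period removes one of at most $2^5$ subset sums of the fixed weights. So an urgency-forced harvest cannot map the many history-dependent quantities of $r_c$ back to a single value. Even if it could, it would erase exactly the weighted sum $\sum_t h_c(t)2^{\tau+1-t}$ that the terminal window of $r_c$ is meant to test, so the global check $\sum_{i\in J}x_i=T$ would need another resource. Once $r_0$ is spent as a pacer, only the slope-$1$ resource $r_s$ remains, and its quantity depends only on cumulative harvest. You give no mechanism by which thresholds on cumulative harvest that are linear in $t$ could force ``exactly the bits of $x_i$ or nothing'' differently in $N$ blocks. Separately, making $r_0$ private to $a_0$ contradicts your claim that all agents approve all resources.

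\textbf{The missing idea} is to stop making harvest patterns spell numbers. Instead, put the instance, pre-shifted, into the base-$B$ digits of one exponentially large weight, and use timing only as multiplication by $B^p$. This removes your offset and counting obstacles, because aggregation then happens across periods, on a common digit of the final quantity.

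The paper reduces from \textsc{Vertex Cover}, and $a_+$ has (scaled) weight $d_+=1+bP$. Here $P$ has a $1$ at position $q_0-p_i$ and at $q_j-p_i$ for each edge $e_j$ incident to $v_i$. Assigning $a_+$ to $r_{\mathrm{tar}}$ in period $\tau-p_i$ therefore moves exactly these entries onto the size check $q_0$ and the edge checks $q_j$. The remaining details are as follows.
\begin{itemize}
\item The positions $p_i$ have distinct pairwise differences, so stray shifted entries hit any other position at most once.
\item $B=b^3$ with $b=N+2M+3$ excludes carries, so equality with $T$ is checked digit by digit.
\item Every shifted contribution is $0$ or $1$, so two padding agents ($d_0=b^2$, $d_1=b^2+b$) can complete all stray positions. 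This is why a problem with unit contributions, rather than \textsc{Subset Sum}, is the natural source.
\end{itemize}

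Exactness comes from sustainability. A weight-$1$ agent $a_*$ must harvest $r_*$ ($\rho(x)=2x$, $s=u=2$) in every period. All other agents are too heavy for $r_*$, so approvals can be complete. Envy-freeness then forces $a_+,a_-,a_0,a_1$ (weights $B^{\tau-1}d_\alpha$) to be assigned in every period, each to $r_{\mathrm{tar}}$ or $r_{\mathrm{rest}}$ (both $\rho(x)=Bx$, lower bound $0$). The complementary initial quantities $T$ and $D\sum_pB^p-T$ turn the two terminal-viability inequalities into $\sum_pB^p\sum_{\alpha\in S_p}d_\alpha=T$.
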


\begin{proof}[Proof idea for \cref{thm:nm:pnph}]
We reduce from an instance~$(G,k)$ of \prob{Vertex Cover}~\cite{DBLP:books/fm/GareyJ79}.
The idea is to encode the graph in the agent weights and use their assignments over time to select vertices. The same encoding agents are used repeatedly, with a designated period for each vertex.
The key is that renewal makes the effect of a harvest depend on its timing. The two resources used to encode have renewal function~$\rf(x)=Bx$ for an integer~$B>1$.
Hence, a unit harvested in period~$t$ reduces the final quantity by~$B^{\tau+1-t}$:
harvesting one period earlier costs a factor~$B$ more.
Thus, assignment times shift fixed digit patterns in base~$B$.

There are four encoding agents, labelled~$+,-,0,1$,
whose weights are scaled versions of encoding numbers~$d_+,d_-,d_0,d_1$. The agents~$+$ and~$-$ encode which vertices are selected,
while~$0$ and~$1$ provide padding to satisfy the remaining digit constraints.
Assigning agent~$a_\alpha$ to a target resource~$\r_{\mathrm{tar}}$ in period~$\tau-p$
contributes~$B^p d_\alpha$, up to a common scaling factor, to the encoded sum.
The goal is to make the sum of all contributions equal a target number~$T$,
whose digits encode the vertex-cover constraints.
The digit pattern~$P$ contained in~$d_+$
encodes the contributions of selected vertices to the corresponding checks.
The digit pattern~$T$ is illustrated in~\cref{fig:nm-t}.
\begin{figure*}[t]
\centering
\begin{tikzpicture}[
    x=0.825cm,y=1cm,>=latex,font=\footnotesize,
    digit/.style={
        rectangle,draw,very thick,fill=white,
        minimum width=0.36cm,
        minimum height=0.5cm,
        inner sep=1pt,font=\small
    },
    paddigit/.style={digit,fill=black!8},
    countdigit/.style={digit,fill=blue!12},
    vardigit/.style={digit}
]

\newcommand{\targetblock}[5]{%
    \begin{scope}[shift={(#1,0.56)}]
        \node[paddigit] at (-0.37,0) {$#2$};
        \node[countdigit] at (0,0) {$#3$};
        \node[vardigit] at (0.37,0) {$#4$};
    \end{scope}
    \node[below,font=\scriptsize] at (#1,-0.16) {$#5$};
}

\foreach \x/\name in {
    0.9/{vertex},
    3.2/{other},
    5.55/{size check},
    8.2/{edge $e_1$},
    11.25/{edge $e_2$}
}{
    \node at (\x,1.29) {\name};
    \draw[gray]
        (\x-0.72,0.96) -- (\x-0.72,1.06) --
        (\x+0.72,1.06) -- (\x+0.72,0.96);
}

\draw[ultra thick] (-0.05,0) -- (12.5,0);

\foreach \x in {0.9,3.2,5.55,8.2,11.25}{
    \draw[very thick] (\x,0.12) -- (\x,-0.12);
    \draw[densely dotted,thick,gray]
        (\x,0.12) -- (\x,0.34);
}

\node[anchor=east] at (-0.18,0.56) {$T$};

\targetblock{0.9}{1}{1}{1}{p_i}
\targetblock{3.2}{1}{1}{0}{p}
\targetblock{5.55}{0}{k}{0}{q_0}
\targetblock{8.2}{1}{2}{0}{q_1}
\targetblock{11.25}{1}{2}{0}{q_2}

\foreach \x in {2.05,4.35,6.95,9.7,12.3}{
    \node[text=gray,font=\small] at (\x,0.56) {$\cdots$};
}

\node[anchor=west,align=left,font=\scriptsize] at (13.1,1.15) {Digits \\ (left to right):};

\node[paddigit,anchor=west,minimum width=0.25cm,minimum height=0.3cm,label=0:{\scriptsize padding}] at (13.4,0.65) {};

\node[countdigit,anchor=west,minimum width=0.25cm,minimum height=0.3cm,label=0:{\scriptsize counting}] at (13.4,0.2) {};

\node[vardigit,anchor=west,minimum width=0.25cm,minimum height=0.3cm,label=0:{\scriptsize vertex control}] at (13.4,-0.25) {};

\end{tikzpicture}
\caption{Schematic representation of the target number~$T$.
Each base-$B$ digit consists of three base-$b$ digits:
padding, counting, and vertex control.
Dots indicate omitted positions.}
\label{fig:nm-t}
\end{figure*}
For each vertex~$v_i$, we reserve a position~$p_i$.
We also reserve a position~$q_0$ for checking the size of the selected set
and a position~$q_j$ for each edge~$e_j$.
Control digits in~$T$ require exactly one of~$a_+$ and~$a_-$
to be assigned to~$\r_{\mathrm{tar}}$ in each vertex period~$\tau-p_i$,
and neither agent to be assigned to~$\r_{\mathrm{tar}}$
in any other period.
The digit pattern~$P$ is illustrated in~\cref{fig:nm-p}.
\begin{figure*}[t]
\centering
\begin{tikzpicture}[
    x=1cm,y=1cm,>=latex,font=\footnotesize,
    digit/.style={rectangle,draw,very thick,fill=white,
        inner sep=1.75pt,font=\small},
    zerodigit/.style={digit,draw=gray!55,text=gray}
]
    \node[circle,draw,fill=white,inner sep=2pt] (v1) at (0.20-0.2,0.20-0.1) {$v_1$};
    \node[circle,draw,fill=white,inner sep=2pt] (v2) at (0.75-0.2,1.25-0.1) {$v_2$};
    \node[circle,draw,fill=white,inner sep=2pt] (v3) at (1.30-0.2,0.20-0.1) {$v_3$};
    \draw (v1) -- node[above left] {$e_1$} (v2);
    \draw (v2) -- node[above right] {$e_2$} (v3);
    \begin{scope}[xshift=2.65cm,x=0.825cm]
    \node at (1.8,1.39) {size check};
    \node at (6.66,1.39) {edge $e_1$};
    \node at (11.52,1.39) {edge $e_2$};
    \foreach \l/\r in {0.35/3.25,5.21/8.11,10.07/12.97}{
        \draw[gray] (\l,1.06) -- (\l,1.16) -- (\r,1.16) -- (\r,1.06);
    }
    \draw[-,ultra thick] (-0.05,0) -- (13.1,0);
    \foreach \x in {0.6,1.86,5.46,6.72,10.32,11.58}
        \draw (\x,0.08) -- (\x,-0.08);
    \foreach \x in {3,7.86,12.72}
        \draw[very thick] (\x,0.14) -- (\x,-0.14);
    \foreach \x in {0.6,1.86,3,5.46,6.72,7.86,10.32,11.58,12.72}
        \draw[densely dotted,thick,gray] (\x,0.08) -- (\x,0.44);
    \node[anchor=east] at (-0.22,0.56) {$P$};
    \foreach \x in {0.12,1.23,2.43,6.09,7.29,10.95,12.15}{
        \node[text=gray,font=\small] at (\x,0.56) {$\cdots$};
    }
    \foreach \x in {4.23,9.09}
        \node[text=gray,font=\small] at (\x,0.56) {$0\cdots0$};
    \foreach \x/\v in {0.6/3,1.86/2,3/1,6.72/2,7.86/1,10.32/3,11.58/2}{
        \node[digit] at (\x,0.56) {$1$};
        \node[above] at (\x,0.77) {$v_\v$};
    }
    \foreach \x/\v in {5.46/3,12.72/1}{
        \node[zerodigit] at (\x,0.56) {$0$};
        \node[above,text=gray] at (\x,0.77) {$v_\v$};
    }
    \foreach \x/\p in {
        0.6/{q_0{-}p_3},1.86/{q_0{-}p_2},3/{q_0{-}p_1},
        5.46/{q_1{-}p_3},6.72/{q_1{-}p_2},7.86/{q_1{-}p_1},
        10.32/{q_2{-}p_3},11.58/{q_2{-}p_2},12.72/{q_2{-}p_1}}{
        \node[below,font=\scriptsize] at (\x,-0.16) {$\p$};
    }
    \end{scope}
\end{tikzpicture}
\caption{The base-$B$ encoding~$P$ for the graph on the left.
An entry at~$q_0-p_i$ counts~$v_i$,
while an entry at~$q_j-p_i$ records that~$v_i$ is incident to~$e_j$.
Dots represent runs of zero digits.}
\label{fig:nm-p}
\end{figure*}
For each vertex~$v_i$, $P$ contains a~$1$ at~$q_0-p_i$ and at~$q_j-p_i$ for each incident edge~$e_j$.
Assigning~$a_+$ to~$\r_{\mathrm{tar}}$ in period~$\tau-p_i$
selects~$v_i$ and shifts~$P$ by~$p_i$ digit positions.
Since~$(q_j-p_i)+p_i=q_j$, the entries corresponding to~$v_i$ reach their check positions,
contributing one to the size check and to each incident edge check.
For example, selecting~$v_2$ in~\cref{fig:nm-p}
moves the three corresponding entries to~$q_0,q_1,q_2$.
Assigning~$a_-$ to~$\r_{\mathrm{tar}}$ in period~$\tau-p_i$
leaves~$v_i$ unselected and makes no contribution to these checks.
The target requires the size-check digit to equal~$k$,
with no padding allowed there.
Each edge-check digit must equal two.
Assigning~$a_0$ or~$a_1$ to~$\r_{\mathrm{tar}}$
in period~$\tau-q_j$ adds zero or one, respectively,
to the edge-check counting digit.
Thus, this digit can reach two exactly when at least one endpoint is selected.
Assigning~$a_+$ to~$\r_{\mathrm{tar}}$ in period~$\tau-p_k$
also shifts entries associated with other vertices~$v_i$
to digit positions~$q_j+(p_k-p_i)$.
We choose the vertex positions so that every nonzero difference~$p_k-p_i$ is unique, and separate the regions around the different checks.
Thus, these unintended contributions never reach a check position, and at most one reaches any other position. At each such position, the target counting digit is one, so assigning~$a_0$ or~$a_1$ in the corresponding period always completes it.
A sufficiently large base prevents carries.
Thus, the padding agents can be assigned so that the encoded harvest sum equals~$T$ exactly when the selected vertices form a vertex cover of size~$k$.

To enforce this equality through sustainability, we introduce two further
resources~$\r_{\mathrm{rest}}$ and~$\r_*$, and a fifth agent.
Only this agent can harvest~$\r_*$.
The resource starts at its capacity, and this agent's harvest restores
it to that capacity after renewal.
Without harvesting, renewal would exceed the capacity.
Thus, the fifth agent must be assigned to~$\r_*$
in every period.
Since the four encoding agents also approve~$\r_*$,
envy-freeness forces each of them to be assigned in every period.
They must therefore split between~$\r_{\mathrm{tar}}$
and~$\r_{\mathrm{rest}}$.

Let~$S_p$ contain the labels of the agents assigned to~$\r_{\mathrm{tar}}$ in period~$\tau-p$.
After scaling the weights, complementary initial quantities ensure that
terminal viability of~$\r_{\mathrm{tar}}$ requires
$\sum_{p=0}^{\tau-1}B^p\sum_{\alpha\in S_p}d_\alpha\leq T$,
while terminal viability of~$\r_{\mathrm{rest}}$ requires the reverse inequality.
Thus, sustainability forces equality and hence a vertex cover.
\end{proof}

\appendixproof{thm:nm:pnph}{ We present a reduction from \textsc{Vertex Cover} \cite{DBLP:books/fm/GareyJ79}.
The idea is to encode the graph in the numerical parameters and use the multi-period allocation to represent a vertex cover. The key is that renewal makes the effect of a harvest depend on its timing. For the two encoding resources, we use renewal function~$\rf(x)=Bx$ for an integer~$B>1$. Hence a unit harvested in period~$t$ reduces the final quantity by $B^{\tau+1-t}$, i.e. harvesting one period earlier costs a factor~$B$ more. We therefore read period $\tau-p$ as digit position~$p$ of a base-$B$ number.
Each encoding agent represents a fixed digit pattern.
Assigning it to our target resource in period~$\tau-p$ shifts this pattern
by~$p$ positions in the encoded sum.
The construction fixes a target number~$T$ such that the encoded sum can equal~$T$ if and only if $G$ has a vertex cover of size~$k$. %

Informally, at each vertex position~$p_i$, the labels $+$ and $-$ determine whether~$v_i$ is selected.
The position~$q_0$ checks that exactly~$k$ vertices are selected, while each position~$q_j$ checks that edge~$e_j$ has at least one selected endpoint. The padding labels complete the remaining digits whenever these conditions hold.
The encoding is then translated into the allocation instance.
In period~$\tau-p$, the agents assigned to~$\r_{\mathrm{tar}}$ determine the set~$S_p$, while the remaining agents use~$\r_{\mathrm{rest}}$. Resource~$\r_*$ together with envy-freeness forces all agents to be assigned in every period.
The initial quantities of $\r_{\mathrm{tar}}$ and $\r_{\mathrm{rest}}$ are chosen so that sustainability forces
$\sum_{p=0}^{\tau-1} B^p\sum_{\alpha\in S_p}d_\alpha=T$.
By the encoding, this holds exactly when the selected vertices form a vertex cover of size at most~$k$.

\begin{construction}\label{constr:nm}
Let~$I = (G=(V,E),k)$ be an instance of \prob{Vertex Cover}, where
$V=\{v_1,\ldots,v_N\}$ and $E=\{e_1,\ldots,e_M\}$. We may assume that $k<N$.

For every vertex~$v_i$, define $p_i\coloneqq 2N^2(i-1)+(i-1)^2$. Let $\tau\coloneqq (M+2)(2p_N+1)$. Since~$M\ge0$ and~$p_N\ge0$, we have~$\tau\ge2$.
We refer to the $p_i$ as \emph{positions}, where a position is an integer $p\in\{0,\dots,\tau-1\}$ that is the exponent in~$B^p$ and corresponds to period~$\tau-p$.
The positions~$p_i$ will encode whether the corresponding vertices are selected for the vertex cover. They are chosen such that every nonzero difference $p_k-p_i$ uniquely determines $i$ and $k$. Indeed, $p_k-p_i=2N^2(k-i)+(k-i)(k+i-2)$ where $0<|(k-i)(k+i-2)|<2N^2$. Since the second term has absolute value less than~$2N^2$ and has the same sign as~$k-i$, the difference~$p_k-p_i$ uniquely determines~$k-i$. Subtracting~$2N^2(k-i)$ then determines~$(k-i)(k+i-2)$, and hence~$k+i$.
We additionally introduce one position~$q_0$ for checking the size of the selected set and one position~$q_j$ for every edge~$e_j$, such that $q_j\coloneqq (j+1)(2p_N+1)+p_N$ for all $j\in\{0,\ldots,M\}$. This choice ensures that the intervals
$[q_j-p_N,q_j+p_N]$ are pairwise disjoint, lie strictly above all
positions~$p_i$, and are contained in~$\{0,\dots,\tau-1\}$.
Indeed, $q_0-p_N=2p_N+1>p_N$ and $q_M+p_N=\tau-1$.

Let $b\coloneqq N+2M+3$ and $B\coloneqq b^3$. Thus, every base-$B$ digit, which we call a \emph{block}, consists of three base-$b$ digits, where we count digits starting from the least significant one.
We define
\[ P\coloneqq
\sum_{i=1}^N B^{q_0-p_i}
+
\sum_{j=1}^M
\sum_{\substack{i\in[N]\\v_i\in e_j}}
B^{q_j-p_i}.\] The exponent $q_j-p_i$ is chosen so that multiplying by $B^{p_i}$, i.e., selecting $+$ at position $p_i$, moves this term exactly to the check position $q_j$; selecting $+$ at another position $p_k$ moves it to $q_j+(p_k-p_i)\neq q_j$. The first sum thus encodes the number of selected vertices at position~$q_0$, while the second encodes whether every edge has a selected endpoint.

We call $+,-,0,1$ \emph{labels}. For each label~$\alpha$, we define an encoding number
\[
d_+\coloneqq 1+bP,
\quad
d_-\coloneqq 1,
\quad
d_0\coloneqq b^2,
\quad
d_1\coloneqq b^2+b.
\] Each label~$\alpha$ will also have a corresponding agent~$a_\alpha$, which we define below. The labels~$+$ and~$-$ correspond to selecting and not selecting a vertex, respectively, while the labels~$0$ and~$1$ provide padding. More precisely, at every position except~$q_0$, exactly one padding label is used: $d_0$ adding nothing and $d_1$ adding one to the relevant count.
We refer to the three relevant base-$b$ digits as the \emph{variable}, \emph{counting}, and \emph{padding} digits: the first encodes the vertex choice, the second records the contribution to a check, and the third is used to enforce the padding structure.

In terms of base-$b$ digits, $d_-$ sets only the variable digit, $d_0$ only the padding digit, and $d_1$ both the counting and padding digits. When $d_+$ is used at position~$p$ for vertex~$v_i$, it sets the variable digit at position~$p$ and, through~$P$, the counting digits at the positions $p+(q_j-p_i)$ corresponding to the relevant checks.

For every $p\in\{0,\ldots,\tau-1\}$, let
\[
o_p\coloneqq
\begin{cases}
1 & \text{if }p\in\{p_1,\ldots,p_N\},\\
0 & \text{otherwise},
\end{cases}
\quad
c_p\coloneqq
\begin{cases}
k & \text{if }p=q_0,\\
2 & \text{if }p\in\{q_1,\ldots,q_M\},\\
1 & \text{otherwise}.
\end{cases}
\]
Define $T\coloneqq
\sum_{p=0}^{\tau-1}(o_p+bc_p+b^2)B^p-b^2B^{q_0}$, where the subtracted term sets the padding digit at~$q_0$ to~$0$.
Thus, at~$q_0$, the padding digit is~$0$, while it is~$1$ everywhere else.
See \cref{fig:nm-t} for a schematic representation.

We next define the agents and resources.
Let
$D\coloneqq d_++d_-+d_0+d_1$ and
$\lambda\coloneqq B^{\tau-1}$.
There are five agents
$A\coloneqq\{a_+,a_-,a_0,a_1,a_*\}$
with weights
\[
w_{a_+}=\lambda d_+,\quad
w_{a_-}=\lambda d_-,\quad
w_{a_z}=\lambda d_z \ (z\in\{0,1\}),\quad
w_{a_*}=1.
\] We refer to $a_+,a_-,a_0,a_1$ as the encoding agents.
Every agent approves every resource.

There are three resources
$\rSet=\{\r_{\mathrm{tar}},\r_{\mathrm{rest}},\r_*\}$.
The resources~$\r_{\mathrm{tar}}$ and~$\r_{\mathrm{rest}}$ have
renewal function~$\rf(x)=Bx$, while~$\r_*$ has renewal function~$\rf(x)=2x$.
The initial quantities and bounds are
\begin{align*}
\begin{array}{c|ccc}
    \text{Resource} & \iq & \lb & \ub\\\hline
    \r_{\mathrm{tar}}
        & T
        & 0
        & 2\lambda D\\
    \r_{\mathrm{rest}}
        & D\sum_{p=0}^{\tau-1}B^p-T
        & 0
        & 2\lambda D\\
    \r_*
        & 2
        & 0
        & 2
\end{array}
\end{align*}
The initial quantities are nonnegative and respect the upper bounds.

The factor~$B^{\tau-1}$ accounts for the~$\tau-1$ renewals between the first and last harvest: harvesting~$\lambda d_\alpha$ in the last period requires~$d_\alpha$ units of initial stock.
The same harvest~$p$ periods earlier requires~$B^p d_\alpha$ units, corresponding to a contribution of~$B^p d_\alpha$ to~$T$.
Since~$\tau\ge2$, $B>2$, and every~$d_\alpha\ge1$, each encoding agent has weight~$\lambda d_\alpha\ge B>2$, so only~$a_*$ can be assigned to~$\r_*$.

As shown below, every sustainable envy-free assignment forces the four encoding agents
to harvest a total weight of~$\lambda D$ from~$\r_{\mathrm{tar}}$ and~$\r_{\mathrm{rest}}$ in every period. The initial quantity of~$\r_{\mathrm{rest}}$ is chosen as the complement of that of~$\r_{\mathrm{tar}}$, turning the two sustainability inequalities into an equality. The resource~$\r_*$ is urgent in every period: without harvesting, its quantity would grow from~$2$ to~$4>2$.
\end{construction}

Since $p_N=O(N^3)$ and $\tau=O((M+2)\cdot N^3)$, all exponents are polynomially bounded, and all constructed numbers have polynomial binary encoding length. In particular, $\tau$ itself is polynomially bounded and thus has polynomial unary encoding length. Only the constructed weights and resource quantities may be exponentially large in value, which is where binary encoding is required. Hence, the construction can be computed in polynomial time.

\begin{lemma}\label{lem:nm:arithmetic}
For the numbers defined in \Cref{constr:nm}, $G$ has a vertex cover of size at most~$k$
if and only if there exist label subsets
$S_0,\ldots,S_{\tau-1}\subseteq\{+,-,0,1\}$ such that
$\sum_{p=0}^{\tau-1}
B^p\sum_{\alpha\in S_p}d_\alpha=T$. For $\alpha\in S_p$ we then say that label $\alpha$ is \emph{selected} at position $p$.
\end{lemma}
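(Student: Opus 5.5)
The plan is a digit-by-digit comparison in base~$b$, after showing that no carries can occur. Recall that every base-$B$ block consists of a variable, a counting and a padding digit in base~$b=N+2M+3$.

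\emph{Forward direction.} Let $C$ be a vertex cover with $|C|\le k$. Since $k<N$, we can pad $C$ with arbitrary vertices to size exactly~$k$; it remains a cover. We set $S_{p_i}\ni +$ if $v_i\in C$ and $S_{p_i}\ni -$ otherwise, and we select no $\pm$ label at any other position. Let $c'_p$ be the counting contribution at position~$p$ coming from the shifted copies $B^{p_i}P$ over all $v_i\in C$. These copies occupy only positions $q_j+(p_k-p_i)$, which lie in the pairwise disjoint intervals $[q_j-p_N,q_j+p_N]$, strictly above all $p_i$. So they never touch a variable digit. We then determine $c'_p$ case by case.
\begin{itemize}
\item At $q_0$, the intended terms ($k=i$) give $c'_{q_0}=|C|=k$.
\item At $q_j$ with $j\ge1$, we get $c'_{q_j}\in\{1,2\}$, the number of endpoints of $e_j$ in~$C$; this is at least one because $C$ is a cover.
\item Every other position receives at most one unintended contribution. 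An unintended term sits at $q_j+(p_k-p_i)$ with $k\neq i$. The interval containing it fixes~$j$, and then the uniqueness of the nonzero differences $p_k-p_i$ (shown in the construction) fixes $i$ and~$k$, hence the term. Moreover, a nonzero offset never lands on a check position.
\end{itemize}
At every $p\neq q_0$ we add label $1$ if $c_p-c'_p=1$ and label $0$ if $c_p-c'_p=0$; these are the only two cases by the list above. At $q_0$ we add no padding label. Every base-$b$ digit of the resulting sum is at most~$k<b$, so no carries occur, and the sum agrees digitwise with~$T$.

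\emph{Reverse direction.} Suppose $\sum_p B^p\sum_{\alpha\in S_p}d_\alpha=T$ for arbitrary subsets~$S_p$; in particular, $+$ may now appear at any position. First we bound every base-$b$ digit of the left-hand side strictly below~$b$.
\begin{itemize}
\item Variable digits receive at most $2$, from $+$ and~$-$.
\item Padding digits receive at most $2$, from $0$ and~$1$.
\item Consider the counting digit at position~$r$. Each of the $N+2M$ monomials of $P$, with exponent~$e$, reaches~$r$ for at most one choice of $p$, namely $p=r-e$. Together with $d_1$, the digit is therefore at most $N+2M+1<b$.
\end{itemize}
Hence no carries occur, and we may compare base-$b$ digits with~$T$ one by one.
\begin{itemize}
\item Variable digits: $o_p$ forces exactly one of $\pm$ at each $p_i$ and neither label elsewhere. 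Consequently, all $P$-contributions come from genuine vertex positions~$p_i$.
\item Padding digits: exactly one of $0,1$ is selected at each $p\neq q_0$, and none at~$q_0$.
\item Counting digit at~$q_0$: only intended terms can reach~$q_0$ (by the interval and uniqueness arguments above), so exactly $k$ vertices carry~$+$.
\item Counting digit at~$q_j$: it equals the number of selected endpoints of $e_j$ plus at most~$1$ from padding. Since it must equal~$2$, at least one endpoint is selected.
\end{itemize}
Thus the vertices carrying $+$ form a vertex cover of size~$k$.

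The main obstacle is the carry-freeness bound in the reverse direction, together with the argument that unintended shifted terms never reach check positions and never collide. Both rest on the disjointness of the intervals around the~$q_j$ and on the uniqueness of the differences~$p_k-p_i$. I would prove these two facts first as a separate claim and then invoke it in both directions.
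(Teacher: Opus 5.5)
Your proof is correct and takes essentially the same route as the paper. It rules out carries by bounding the counting digits by $N+2M+1<b$, forces exactly one of $\pm$ at each $p_i$ and exactly one padding label at each $p\neq q_0$, and uses the disjoint intervals around the $q_j$ together with uniqueness of the differences $p_k-p_i$ to isolate the size and edge checks. One small slip: in the forward direction the digits are bounded by $\max\{k,2\}$, not by $k$, since edge-check digits equal $2$; this bound is still below $b$, so nothing breaks.
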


\begin{proof}
Selecting label~$\alpha$ at position~$p$ adds a value of~$B^pd_\alpha$. Hence, a term~$B^{q_j-p_i}$ in~$P$ contributes at position $q_j-p_i+p$. In particular, if~$+$ is selected at position~$p_i$, this term contributes exactly at~$q_j$.

We first observe that no carries occur, even for arbitrary
choices of the subsets $S_p$.
Indeed, the first and third digits receive at most two units (the first from $+$ and $-$, the third from $0$ and $1$).
Moreover, $P$ contains~$N+2M$ terms, and every such term contributes to any position at most once, since a term $B^{q_j-p_i}$ is shifted to position $q_j-p_i+p$ for $+\in S_p$, and different values of $p$ produce different positions.
Together with the padding contribution, the second digit is therefore
at most~$N+2M+1<b$.
Consequently, $\sum_{p=0}^{\tau-1} B^p \sum_{\alpha\in S_p} d_\alpha = T$ if and only if the total contribution at each base-$b$ position equals the corresponding digit of $T$.

If the sum equals $T$, the variable-control digits force exactly one of $+$ and $-$ to belong to $S_{p_i}$ for each vertex $v_i$,
and neither to belong to $S_p$ at any other position.
We interpret $+\in S_{p_i}$ as selecting $v_i$ and $-\in S_{p_i}$ as not selecting it.
The padding-control digits force exactly one label $z_p\in\{0,1\}$ to belong to each $S_p$ except for $p=q_0$,
where neither padding label can be selected. Its contribution to the counting digit is $z_p$.

We next examine where the contributions of selected vertices appear.
Recall that the encoding of~$+$ contains a term at position $q_0-p_i$ for every vertex~$v_i$, and a term at position $q_j-p_i$ whenever $v_i$ is incident to edge~$e_j$.
If $+$ is selected at position~$p_k$, then $B^{p_k}$ shifts such a term
to position
$q_j-p_i+p_k=q_j+(p_k-p_i)$.
Intuitively, $p_i$ is the position at which vertex~$v_i$ is selected,
while $q_0$ checks the number of selected vertices and $q_j$ checks
whether edge~$e_j$ is covered.
In particular, a term corresponding to~$v_i$ contributes at its
intended check position precisely when $+$ is selected at~$p_i$.

All contributions associated with a check position~$q_j$ lie in $[q_j-p_N,q_j+p_N]$.
By the choice of the positions~$q_0,\ldots,q_M$, these ranges are pairwise disjoint and contained in $[0,\tau-1]$.

Consider first the size-check position~$q_0$.
Every selected vertex contributes exactly one unit at~$q_0$.
Since no padding label can be selected at this position, the counting
digit equals~$k$ if and only if exactly~$k$ vertices
are selected.

Next, consider the position~$q_j$ corresponding to an edge~$e_j$.
A contribution occurs at~$q_j$ exactly for each selected endpoint of
$e_j$.
Thus, the contribution at~$q_j$ equals the number
$t_j\in\{0,1,2\}$ of selected endpoints of~$e_j$.
The counting digit equals~$2$ exactly when $t_j+z_{q_j}=2$.
Since $z_{q_j}\in\{0,1\}$, this equality is possible if and only if
$t_j\geq1$, that is, if and only if~$e_j$ has a selected endpoint.

It remains to consider the remaining positions in these ranges.
A contribution at position~$q_j+\delta$ with~$\delta\neq 0$ arises from a pair~$(i,k)$ satisfying $p_k-p_i=\delta$ and~$+\in S_{p_k}$.
By uniqueness of the nonzero differences~$p_k-p_i$, there is at most one such pair and hence at most one contribution.
Its value is therefore either zero or one.
As the target counting digit at such a position is one, it can always
be completed by choosing padding one or zero, respectively.

Suppose now that
$\sum_{p=0}^{\tau-1} B^p
    \sum_{\alpha\in S_p} d_\alpha = T$.
The variable-control digits define a set of exactly~$k$ selected
vertices, and the equality $t_j+z_{q_j}=2$ implies that every edge has
a selected endpoint.
Hence, the selected vertices form a vertex cover of~$G$.

Conversely, suppose that~$G$ has a vertex cover~$K$ of size at most~$k$.
Since $k<N$, we can add arbitrary vertices until $|K|=k$.
At each position~$p_i$, select~$+$ if $v_i\in K$ and~$-$ otherwise;
select neither at other positions.
At~$q_0$, select no padding label.
At each edge position~$q_j$, select the padding label
$z_{q_j}=2-t_j$,
which belongs to $\{0,1\}$ since~$K$ is a vertex cover.
At every other position, select padding zero if there is a contribution
and padding one otherwise.
All three types of digits then agree with~$T$.
Moreover, all contributions lie within $[0,\tau-1]$.
Thus,
$\sum_{p=0}^{\tau-1} B^p
    \sum_{\alpha\in S_p} d_\alpha = T$.
\end{proof}

With \cref{lem:nm:arithmetic} we can now prove \cref{thm:nm:pnph}. The main idea is that the two resources $\r_{\mathrm{tar}}$ and
$\r_{\mathrm{rest}}$ encode a partition of the four agents
$a_+,a_-,a_0,a_1$ in every period.
Resource~$\r_*$ forces agent~$a_*$ to be assigned in every period, and
envy-freeness then forces all remaining agents to be assigned as well.
The initial quantities of $\r_{\mathrm{tar}}$ and
$\r_{\mathrm{rest}}$ are chosen so that their combined quantity is
exactly sufficient for all these assignments over the whole horizon.
Hence, any sustainable assignment determines subsets~$S_p$ whose
weighted sum must equal~$T$ exactly.
By the encoding, this equality holds precisely when the
choices encoded by the agents correspond to a vertex cover of size at
most~$k$.

\begin{proof}[Proof of \cref{thm:nm:pnph}]
First, suppose that the constructed instance admits a sustainable envy-free assignment $\pi$.
Agent~$a_*$ must be assigned to~$\r_*$ in every period.
Indeed, $\r_*$ has initial quantity and upper bound~$2$ and renewal function~$\rf(x)=2x$.
Whenever its current quantity is~$2$, sustainability requires a harvest~$h$ satisfying~$2(2-h)\le2$, equivalently~$h\ge1$.
Since~$w_{a_*}=1$, while every encoding agent has weight greater than~$2$, only~$a_*$ can provide such a harvest without violating no over-extraction.
Assigning~$a_*$ leaves quantity~$1$ before renewal and hence restores the resource to~$2$ for the next period.
Starting from its initial quantity~$2$, induction therefore forces~$a_*$ to be assigned to~$\r_*$ in every period.
Thus, $a_*$ must be assigned to~$\r_*$ in every period.

Since every agent approves every resource and agent $a_*$ is assigned $\tau$ times, envy-freeness now implies that every other agent must also be assigned in every period; otherwise, that agent would receive fewer than~$\tau$ approved assignments and would envy~$a_*$.
Consequently, each of $a_+,a_-,a_0,a_1$ is assigned in every period to exactly one of $\r_{\mathrm{tar}}$ and $\r_{\mathrm{rest}}$.

For $p\in\{0,\ldots,\tau-1\}$, let $S_p\subseteq\{+,-,0,1\}$ contain exactly the labels of the agents assigned to~$\r_{\mathrm{tar}}$ in period~$\tau-p$. The total assigned weight of~$\r_{\mathrm{tar}}$ in this period is therefore $\lambda\sum_{\alpha\in S_p}d_\alpha$. Since $\q^\pi_{\mathrm{tar}}(t+1)=B(\q^\pi_{\mathrm{tar}}(t)-h^\pi_{\mathrm{tar}}(t))$ this gives \[\q^\pi_{\mathrm{tar}}(\tau+1)=B^\tau\iq_{\mathrm{tar}}-\sum_{t=1}^\tau B^{\tau+1-t}h^\pi_{\mathrm{tar}}(t)\] Substituting $h^\pi_{\mathrm{tar}}(\tau-p)=\lambda\sum_{\alpha\in S_p}d_\alpha$, $\lambda=B^{\tau-1}$ and $\iq_{\mathrm{tar}}=T$ we get \[ \q^\pi_{\mathrm{tar}}(\tau+1)=B^\tau\left(T-\sum_{p=0}^{\tau-1}B^p\sum_{\alpha\in S_p}d_\alpha\right).\] Hence, terminal viability of~$\r_{\mathrm{tar}}$, that is, $\q^\pi_{\mathrm{tar}}(\tau+1)\geq 0$, implies $\sum_{p=0}^{\tau-1}B^p\sum_{\alpha\in S_p}d_\alpha\leq T$.
Similarly, since all agents in $\{a_+,a_-,a_0,a_1\}$ not assigned to~$\r_{\mathrm{tar}}$ are assigned to~$\r_{\mathrm{rest}}$, the total assigned weight of~$\r_{\mathrm{rest}}$ in period~$\tau-p$ is $\lambda\left(D-\sum_{\alpha\in S_p}d_\alpha\right)$. Applying the same calculation to~$\r_{\mathrm{rest}}$ gives $\sum_{p=0}^{\tau-1}B^p\left(D-\sum_{\alpha\in S_p}d_\alpha\right)\leq D\sum_{p=0}^{\tau-1}B^p-T$.
The two left-hand sides sum to $D\sum_{p=0}^{\tau-1}B^p$, as do the two right-hand sides. Hence, both inequalities hold with equality, and therefore $\sum_{p=0}^{\tau-1}B^p\sum_{\alpha\in S_p}d_\alpha=T$. By \Cref{lem:nm:arithmetic}, $G$ has a vertex cover of size at most~$k$.

Conversely, suppose that $G$ has a vertex cover of size at most~$k$.
By \Cref{lem:nm:arithmetic}, there exist subsets
$S_0,\ldots,S_{\tau-1}\subseteq\{+,-,0,1\}$ such that
$\sum_{p=0}^{\tau-1}
B^p\sum_{\alpha\in S_p}d_\alpha=T$.

In period~$\tau-p$, assign to~$\r_{\mathrm{tar}}$ exactly the agents
whose labels belong to~$S_p$, and assign the remaining agents in
$\{a_+,a_-,a_0,a_1\}$ to~$\r_{\mathrm{rest}}$.
Assign~$a_*$ to~$\r_*$ in every period.

The equality above and the same calculation as in the forward direction give $\q^\pi_{\mathrm{tar}}(\tau+1)=0$.
Moreover,
$\sum_{p=0}^{\tau-1}
B^p\left(D-\sum_{\alpha\in S_p}d_\alpha\right)
=
D\sum_{p=0}^{\tau-1}B^p-T$, so~$\q^\pi_{\mathrm{rest}}(\tau+1)=0$ as well.

For either index~$i\in\{\mathrm{tar},\mathrm{rest}\}$, rewriting the renewal recurrence gives
$\q_i^\pi(t)=h_i^\pi(t)+\q_i^\pi(t+1)/B$.
Since~$\q_i^\pi(\tau+1)=0$, backward induction gives
$\q_i^\pi(t)
=
\sum_{s=t}^{\tau}B^{t-s}h_i^\pi(s)$
for every~$t\in\{1,\ldots,\tau+1\}$, where an empty sum equals zero.
Thus, the quantity available in each period covers that period's harvest and all later harvests, discounted by their intervening renewals.
In particular, for every~$t\in\{1,\ldots,\tau\}$,
$\q_i^\pi(t)-h_i^\pi(t)
=
\sum_{s=t+1}^{\tau}B^{t-s}h_i^\pi(s)
\ge 0$.
Hence, no over-extraction occurs in any period.
Furthermore, since~$h_i^\pi(s)\le\lambda D$, the same formula gives
\[
\begin{aligned}
\q_i^\pi(t)
\le\lambda D\sum_{j=0}^{\tau-1}B^{-j}
<\lambda D\frac{B}{B-1}<2\lambda D=\ub_i
\end{aligned}
\]
for every~$t\in\{1,\ldots,\tau+1\}$, including the initial quantity.
Both lower bounds are zero, so viability protection and terminal viability also hold.
For~$\r_*$, assigning~$a_*$ leaves quantity~$1$, which renews to~$2$
for the next period.
Thus, the assignment is sustainable.

Finally, every agent is assigned an approved resource in every period,
so the assignment is envy-free.
\end{proof} }

For unary encoding and integral slopes, we obtain an~\XP{} algorithm for~$n+m$.

\begin{theorem}[\appref{thm:nm:integralslope}]
 \label{thm:nm:integralslope}
 With unary encoding and all renewal functions having integral slope,
 \fish{} is \XP{} \wpb{} $n+m$.
\end{theorem}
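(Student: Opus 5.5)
The plan is to adapt the dynamic program from the proof of \cref{thm:nm:translations}. For translations, the access matrix~$X$ together with the period~$t$ determines every quantity. For general slopes this fails, because the quantity depends on \emph{when} the harvests happened. We therefore store the current quantity vector explicitly in the DP state. The task is then to show that, under unary encoding and integral slopes, only polynomially many quantity values can occur for each resource. The state becomes $(t,X,\mathbf{q})$ with $X\in\{0,\dots,\tau\}^{n\times m}$ and $\mathbf{q}=(q_1,\dots,q_m)$, where $q_j$ ranges over a set~$Q_j$ of size polynomial in~$|I|$. This gives at most $(\tau+1)\cdot(\tau+1)^{nm}\cdot\prod_j|Q_j|=|I|^{O(nm)}$ states, which is \XP{} in~$n+m$.

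\textbf{Denominators.} Let $L$ be the least common multiple of the denominators of all $\iq_j$ and~$\beta_j$. Since weights are integers and every $\alpha_j\in\N$, induction on~$t$ shows that every reachable quantity lies in $\frac1L\Z$: the map $x\mapsto\alpha_j(x-h)+\beta_j$ preserves this lattice. Under unary encoding, each denominator is bounded by~$|I|$. There are at most $|I|$ rational numbers (roughly $2m+2$ of them), so $L\le |I|^{O(m)}$. I would need to check this bound carefully; if it is too weak, I would instead use a separate lattice $\frac1{L_j}\Z$ per resource with $L_j$ the lcm of the denominators of~$\iq_j$ and~$\beta_j$, which is polynomial.

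\textbf{Range.} By the upper-capacity constraint~(\ref{model:uc}), every quantity is at most~$\ub_j$, which is polynomially bounded. For the lower end, observe that after any harvest, no over-extraction gives $q-h\ge0$, so the next quantity is at least $\beta_j\ge-|\beta_j|$. A quantity below~$0$ can arise only from an unharvested renewal. While a resource is negative, it cannot be harvested, since weights are positive and over-extraction is forbidden, so its evolution is deterministic. If the quantity ever drops below $\min\{0,-|\beta_j|\}$ (or whatever threshold makes $\alpha_j x+\beta_j<x$ for all further~$x$), then for $\alpha_j\ge1$ it decreases monotonically. It can then never return above $\lb_j\ge0$, so terminal viability fails and the state can be discarded. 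Hence $Q_j=\frac1{L_j}\Z\cap[-c_j,\ub_j]$ for a suitable polynomially bounded constant~$c_j$. Pinning down this threshold cleanly, covering both $\alpha_j=1$ with $\beta_j<0$ and $\alpha_j\ge2$, is the step I expect to need the most care.

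\textbf{DP and correctness.} Set $D[0,\mathbf 0,\mathbf{\iq}]=\true$. For each true entry $D[t-1,X',\mathbf q']$, enumerate all $(m+1)^n$ \nonwasteful{} assignments~$\pi_t$ and check no over-extraction and viability protection against~$\mathbf q'$. Then compute $q_j=\rf_j(q'_j-h_j)$ exactly and check $q_j\le\ub_j$ and $q_j\in Q_j$ (discarding dead states as argued above). Finally set $D[t,X,\mathbf q]=\true$ with $X=X'+$ the indicator of~$\pi_t$. At the end, accept if some $D[\tau,X,\mathbf q]$ is true with $q_j\ge\lb_j$ for all~$j$ and $X$ envy-free, using the same linear check on~$X$ as in \cref{alg:nm:translations}. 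Soundness and completeness follow by the same induction as \cref{lem:nm:translations}, now with the invariant that $\mathbf q$ equals the true quantity vector of some prefix assignment that induces~$X$ and respects constraints \eqref{model:noe}--\eqref{model:vp}. Envy-freeness depends only on~$X$, and sustainability only on the per-period local checks, so the result follows.
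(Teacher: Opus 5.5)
Your proposal is correct and is essentially the paper's proof. The paper also runs a dynamic program over exact quantity vectors and keeps them finitely many by rescaling all numbers with the lcm of their denominators. This is equivalent to your lattice $\frac{1}{L}\mathbb{Z}$, and your per-resource $L_j$ even gives a polynomial range per resource instead of $|I|^{O(n+m)}$. Where you record the access matrix, the paper records an envy-difference matrix in $\{-\tau,\dots,\tau\}^{n^2}$; either serves equally well.

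The step you flagged as delicate is immediate. The no-over-extraction constraint with zero harvest already forces $q_j\ge 0$ in every period $t\le\tau$, and terminal viability forces $q_j(\tau+1)\ge\ell_j\ge 0$. So each quantity simply ranges over $\frac{1}{L_j}\mathbb{Z}\cap[0,u_j]$, just as the paper uses $\{0,\dots,u_j\}$ after rescaling.
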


\begin{proof}[Proof idea of \cref{thm:nm:integralslope}]
 Consider an instance~$I$ with unary encoding and all renewal functions having integral slope.
 First,
 we prove that~$I$ can be transformed in \XP-time regarding~$n+m$ into an equivalent instance~$I'$ with all numbers being integral and bounded by~$|I|^{O(n+m)}$.
 This transformation multiplies all agent weights, initial quantities,
bounds, and renewal intercepts by the least common multiple
of their denominators, leaving the slopes unchanged.
Since there are~$O(n+m)$ such denominators,
each at most~$|I|$ under unary encoding,
the scaled numbers have absolute values bounded by~$|I|^{O(n+m)}$.

We then show that \fish{} is \XP{} \wpb{} $n+m$
when all input numbers are integral and encoded in unary.
 With all input numbers being integral,
 the quantity of each resource~$\r_j$ in any sustainable assignment is integral and in~$\{0,1,\dots,\ub_j\}$ in each period.
 Due to the unary encoding,
 we can hence store all possible resource quantities for each period.
 In particular,
 a state at a period consists of all resources' quantities and
 an \emph{envy matrix} from $\{-\tau,\dots,\tau\}^{n^2}$,
 where for each pair of agents~$(a,a')$, we store the difference between how~$a$ values her own assignments and those of~$a'$ so far.
We use dynamic programming over all periods and states.
 For each period and valid state,
 we can go through all~$(m+1)^n$ possible next assignments.
 Whenever the assignment respects the sustainability conditions,
 we can
 update all quantities
 and valuation differences
 for the next period.
 Finally, after~$\tau$ periods, we accept if a reachable state
satisfies terminal viability and all entries of its envy matrix
are nonnegative.
 Thus,
 $I'$ is solvable in~\XP-time regarding~$n+m$,
 and so is~$I$.
\end{proof}

\appendixproof{thm:nm:integralslope}
{%
The proof of \cref{thm:nm:integralslope} uses two main ingredients.
First, we give an~\XP{} algorithm parameterized by~$n+m$
when all numbers in the input instance are integral and encoded in unary.
Second, we transform a unary-encoded instance with integral slopes
into an equivalent integral instance of unary encoding length
$|I|^{O(n+m)}$.

\begin{proposition}
 \label{thm:nm:integral}
 If all input numbers are integral and encoded in unary,
 then \fish{} is solvable in~$|I|^{f(n,m)}$ time
 for some function~$f$ only depending on~$n$ and~$m$.
\end{proposition}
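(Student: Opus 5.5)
Our plan is a forward dynamic program over the periods, following the proof idea of \cref{thm:nm:integralslope}. The state space is the product of all resource quantities and an envy matrix.

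\textbf{Integrality and bounded range.} First we make sure quantities stay integral and bounded. All weights are natural numbers and all initial quantities, bounds and intercepts~$\beta_j$ are integers. If, in addition, each slope~$\alpha_j$ is a positive integer (the setting of \cref{thm:nm:integralslope}; if the proposition is meant for arbitrary integral inputs, we read ``integral'' as including the slopes), then by induction on~$t$ every quantity~$\q_j^\pi(t)$ is an integer. For any assignment that respects no over-extraction and upper capacity we have~$\q_j^\pi(t)\ge h_j^\pi(t)\ge 0$ and~$\q_j^\pi(t)\le\ub_j$. For~$t=1$ this holds because instances with~$\iq_j>\ub_j$ are rejected up front, and for~$t\ge2$ it is the capacity constraint. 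Hence every reachable quantity lies in~$\{0,\dots,\ub_j\}$, and~$\ub_j\le|I|$ under unary encoding.

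\textbf{The table.} We define~$D[t,\mathbf{q},E]\in\{\true,\false\}$ for~$t\in\{0,\dots,\tau\}$, a quantity vector~$\mathbf{q}\in\prod_j\{0,\dots,\ub_j\}$ (the quantities at period~$t+1$), and an envy matrix~$E\in\{-\tau,\dots,\tau\}^{n\times n}$. The entry~$E_{a,a'}$ records~$\sum_{t'\le t}\bigl(\val_a(\pi_{t'}(a))-\val_a(\pi_{t'}(a'))\bigr)$. The table has at most~$(\tau+1)(|I|+1)^m(2\tau+1)^{n^2}\le|I|^{O(m+n^2)}$ entries.

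Initially only~$D[0,(\iq_j)_j,\mathbf{0}]$ is true. For the transition from a true entry~$D[t-1,\mathbf{q}',E']$, we enumerate all~$(m+1)^n$ approval-restricted assignments~$\pi_t$ and keep those that satisfy three checks:
\begin{itemize}
\item no over-extraction, $h_j\le q'_j$;
\item viability protection, $q'_j<\lb_j\Rightarrow\pi_t^{-1}(\r_j)=\emptyset$;
\item upper capacity, $q_j\coloneqq\rf_j(q'_j-h_j)\le\ub_j$.
\end{itemize}
For each surviving~$\pi_t$ we set~$D[t,\mathbf{q},E'+\Delta^{\pi_t}]=\true$, where~$\Delta^{\pi_t}_{a,a'}=\val_a(\pi_t(a))-\val_a(\pi_t(a'))$. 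We accept if and only if some true entry~$D[\tau,\mathbf{q},E]$ satisfies~$q_j\ge\lb_j$ for all~$j$ and~$E\ge0$ entrywise.

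\textbf{Correctness.} We prove by induction on~$t$ that~$D[t,\mathbf{q},E]$ is true if and only if some~$t$-period assignment respects constraints \eqref{model:noe}--\eqref{model:vp}, induces quantities~$\mathbf{q}$ at period~$t+1$, and has envy matrix~$E$. The argument mirrors \cref{lem:nm:translations}. The key point is that the state~$(\mathbf{q},E)$ carries all the information the future needs: later constraints depend only on current quantities, because the renewal is Markovian, and the final envy check depends only on the accumulated differences, because valuations are additive. Terminal viability and envy-freeness are then exactly the acceptance condition.

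\textbf{Running time and main obstacle.} The total time is~$|I|^{O(m+n^2)}\cdot(m+1)^n\cdot\poly(|I|)$, which is of the required form~$|I|^{f(n,m)}$. The only delicate step is the boundedness argument, that is, making sure every state the DP needs to store is in range. Negative quantities cannot arise, since over-extraction is forbidden, and values above~$\ub_j$ are pruned immediately. So the state space really is polynomial in~$|I|$ for each fixed~$n+m$, and nothing that could lead to a solution is ever discarded.
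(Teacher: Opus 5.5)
Your proposal is correct and matches the paper's proof essentially step for step. It uses a forward dynamic program over periods whose states are the integral quantities in $\{0,\dots,\ub_j\}$ plus an $n\times n$ envy-difference matrix with entries in $\{-\tau,\dots,\tau\}$, tries all $(m+1)^n$ approval-restricted assignments per period, and accepts on terminal viability and a nonnegative envy matrix. (One small precision: your remark that negative quantities cannot arise is slightly too strong, because with a negative intercept $\beta_j$ a renewed quantity can become negative even without over-extraction; such a state, however, necessarily violates no over-extraction in the next period or terminal viability, so discarding it as out of range, as both you and the paper implicitly do, is sound.)
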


Let~$D[t, \q_1,\dots,\q_m,(\eps_{a,a'})_{a,a'\in A}] \in\{\true,\false\}$
with entries from $\{0,\dots,\tau\}\times
\{0,\dots,u_1\}\times\dots\times\{0,\dots,u_m\}\times
\{-\tau,\dots,\tau\}^{n^2}$. We show that it is true if and only if
there is a \xassignment[$t$]~$\pi$ that results in envy-differences~$\eps_{a,a'}=\sum_{t'=1}^t (v_a(\pi_{t'}(a)) - v_a(\pi_{t'}(a')))$
and quantities~$\q_1,\dots,\q_m$ \emph{after} period~$t$.
We set~$D[t,\q_1,\dots,\q_m,(\eps_{a,a'})_{a,a'\in A}]=\true$
if and only if
there is an approval-restricted assignment~$\pi\colon A\to \rSet\cup\{\emptyset\}$
and an entry
$D[t-1,\q_1',\dots,\q_m',(\eps_{a,a'}')_{a,a'\in A}]=\true$
such that
for all~$i\in\set{m}$,
$q_i'\geq \sum_{a\in \pi^{-1}(\r_i)} w_a$,
$\q_i=\rf_i(q_i'-\sum_{a\in \pi^{-1}(\r_i)} w_a)$,
and
$\sum_{a\in \pi^{-1}(\r_i)} w_a>0$ only if~$q_i'\geq \lb_i$,
and for all~$a,a'\in A$,
$\eps_{a,a'}=\eps_{a,a'}' + v_a(\pi(a)) - v_a(\pi(a'))$.
Initially,
we set only~$D[0,\iq_1,\dots,\iq_m,\vec{0}]=\true$ and all remaining entries with~$t=0$ to~$\false$.

\begin{lemma}
 \label{lem:nm:integral}
 $D[t, \q_1,\dots,\q_m,(\eps_{a,a'})_{a,a'\in A}]=\true$ if and only if
 there is a \xassignment[$t$]~$\pi$ that respects \eqref{model:noe}--\eqref{model:vp},
 results in envy-differences~$\eps_{a,a'}=\sum_{t'=1}^t (v_a(\pi_{t'}(a)) - v_a(\pi_{t'}(a')))$,
 and yields quantities~$\q_1,\dots,\q_m$ after period~$t$.
\end{lemma}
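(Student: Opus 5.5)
We prove \cref{lem:nm:integral} by induction on~$t$, in the same way as \cref{lem:nm:translations}. The invariant is that~$D$ records exactly the reachable pairs of (quantities after period~$t$, envy differences) among all \xassignment[$t$]s that respect \eqref{model:noe}--\eqref{model:vp}.

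For~$t=0$, the only $0$-period assignment is the empty one. It yields quantities~$\iq_1,\dots,\iq_m$ and all envy differences zero, which is precisely the unique true entry at~$t=0$. Here we use that the instance was not rejected, so~$\iq_j\le\ub_j$ and the initial state lies in the table's index range.

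For the forward direction at~$t>0$, suppose~$D[t,\dots]=\true$ is witnessed by~$\pi$ and~$D[t-1,\q',\eps']=\true$. The induction hypothesis gives a \xassignment[$(t-1)$]~$\pi'$ respecting \eqref{model:noe}--\eqref{model:vp} that ends in quantities~$\q'$ and envy differences~$\eps'$. We append~$\pi$ as~$\pi_t$ and check the constraints for period~$t$:
\begin{compactitem}
\item $q_i'\ge\sum w_a$ is exactly no over-extraction at period~$t$;
\item "harvest only if~$q_i'\ge\lb_i$" is viability protection;
\item the new quantity~$\q_i=\rf_i(\cdot)$ lies in~$\{0,\dots,\ub_i\}$ because it indexes the table, which gives the upper-capacity constraint at~$t+1$;
\item the envy differences update additively, since utilities are sums over periods.
\end{compactitem}

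For the backward direction, take a valid \xassignment[$t$]~$\pi$ and its restriction~$\pi'$ to the first~$t-1$ periods. The restriction~$\pi'$ also respects \eqref{model:noe}--\eqref{model:vp}. We need its final state to lie in the table range. Integrality holds because~$\alpha_j\in\N$, $\beta_j\in\Z$, the weights are integers, and~$\iq_j$ is integral. Nonnegativity holds because no over-extraction gives~$q'-h\ge 0$. One step needs care here: we must also know~$\rf_j(q'-h)\ge 0$. Upper capacity only bounds the quantity from above, so I would store the quantity~$\rf_j$ produces directly and widen the index range to~$\{\min(0,\beta_j),\dots,\ub_j\}$; this range is still polynomial under unary encoding. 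Alternatively, one can note that a negative quantity forbids any further harvest, so the state remains valid. The induction hypothesis then gives~$D[t-1,\dots]=\true$, and~$\pi_t$ is a valid transition, so the entry for~$t$ is set to true.

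The only real obstacle is this bookkeeping: the table's index range must contain every reachable quantity, including possibly negative ones, for the equivalence to hold. With the range fixed, everything else mirrors \cref{lem:nm:translations}.
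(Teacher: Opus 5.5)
Your induction is the paper's proof. It has the same base case and the same per-period checks in the forward direction. In the backward direction it uses the same argument that the predecessor state $\q_i^\pi(t)$ lies in $\{0,\dots,\ub_i\}$: no over-extraction at period~$t$ gives $\q_i^\pi(t)\ge h_i^\pi(t)\ge 0$, and capacity and integrality do the rest.

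Your worry about $\rf_j(q'-h)\ge 0$ is unnecessary, and so is widening the index range. In the backward direction the tuple $(t,\q_1,\dots,\q_m,\eps)$ is already a table index, so $\q_i^\pi(t+1)=\q_i$ is in range by hypothesis. An assignment ending in a negative quantity simply corresponds to no entry and can never be completed to a solution: no over-extraction fails at period~$t+1$ even with empty harvest, and terminal viability needs $\q\ge\lb\ge 0$. Such a state is therefore a dead end, not one that ``remains valid'', which is exactly why the paper's table $\{0,\dots,\ub_i\}$ loses nothing.
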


\begin{proof}
 We prove the statement via induction on~$t$.
 For~$t=0$,
 the statement is trivially clear.
 So assume that for~$t-1\geq 0$, the statement holds.
 Consider~$D[t, \q_1,\dots,\q_m,(\eps_{a,a'})_{a,a'\in A}]$.

 \RD{}
 Let~$D[t,\q_1,\dots,\q_m,(\eps_{a,a'})_{a,a'\in A}]=\true$.
 Then, there is an \nonwasteful{} assignment~$\pi_t\colon A\to\rSet\cup\{\emptyset\}$
 and an entry
 $D[t-1,\q_1',\dots,\q_m',(\eps_{a,a'}')_{a,a'\in A}]=\true$
 such that
 for all~$i\in\set{m}$,
 $\q_i'\geq \sum_{a\in \pi^{-1}_t(\r_i)} w_a$,
 $\q_i=\rf_i(q_i'-\sum_{a\in \pi^{-1}_t(\r_i)} w_a)$,
 and
 $\sum_{a\in \pi^{-1}_t(\r_i)} w_a>0$ only if~$q_i'\geq \lb_i$,
 and for all~$a,a'\in A$,
 $\eps_{a,a'}=\eps_{a,a'}' + v_a(\pi_t(a)) - v_a(\pi_t(a'))$.
 By induction hypothesis,
 $D[t-1,\q_1',\dots,\q_m',(\eps_{a,a'}')_{a,a'\in A}]=\true$
 implies that
 there is a \xassignment[$(t-1)$]~$\pi'=(\pi_1,\dots,\pi_{t-1})$ that
 respects \eqref{model:noe}--\eqref{model:vp},
 results in envy-differences~$\eps_{a,a'}'=\sum_{t'=1}^{t-1} (v_a(\pi_{t'}(a)) - v_a(\pi_{t'}(a')))$,
 and yields quantities~$\q_1',\dots,\q_m'$ after period~$t-1$.
 Thus,
 $\pi=(\pi_1,\dots,\pi_{t-1},\pi_t)$ is a \xassignment[$t$]
 that results in envy-differences~$\eps_{a,a'}=\sum_{t'=1}^t (v_a(\pi_{t'}(a)) - v_a(\pi_{t'}(a')))$
 and quantities~$\q_1,\dots,\q_m$ after period~$t$.
 It remains to show that~$\pi$ respects \eqref{model:noe}--\eqref{model:vp}.
 Due to
 $\q_i'\geq \sum_{a\in \pi^{-1}_t(\r_i)} w_a$,
 $\pi$ respects \eqref{model:noe}.
 By the domain restrictions in~$\{0,1,\dots,\ub_i\}$ for each~$\q_i$,
 $\pi$ respects \eqref{model:uc}.
 Finally,
 due to
 $\sum_{a\in \pi^{-1}_t(\r_i)} w_a>0$ only if~$q_i'\geq \lb_i$,
 $\pi$ respects \eqref{model:vp}.

 \LD{}
 Let $\pi=(\pi_1,\dots,\pi_t)$ be a \xassignment[$t$] that respects \eqref{model:noe}--\eqref{model:vp},
 results in envy-differences~$\eps_{a,a'}=\sum_{t'=1}^t (v_a(\pi_{t'}(a)) - v_a(\pi_{t'}(a')))$,
 and yields quantities~$\q_i=q_i^{\pi}(t+1)$ for all~$i\in\set{m}$ after period~$t$.
 Then,
 $\pi'=(\pi_1,\dots,\pi_{t-1})$ is a \xassignment[$(t-1)$] that respects \eqref{model:noe}--\eqref{model:vp},
 results in envy-differences~$\eps_{a,a'}'=\eps_{a,a'}-(v_a(\pi_{t}(a)) - v_a(\pi_{t}(a')))$,
 and quantities~$q_i'=q_i^{\pi'}(t)$.
 By no-over-extraction,
$q_i'\geq \sum_{a\in\pi_t^{-1}(\r_i)}w_a\geq0$.
Together with integrality and capacity,
this gives~$q_i'\in\{0,1,\dots,\ub_i\}$.
 $q_i'\geq \sum_{a\in \pi^{-1}_t(\r_i)} w_a$,
 $\sum_{a\in \pi^{-1}_t(\r_i)} w_a > 0$ only if~$q_i'\geq \lb_i$,
 and~$q_i = \rf_i(q_i'-\sum_{a\in \pi^{-1}_t(\r_i)} w_a)$.
 By induction,
 we know that~$D[t-1,\q_1',\dots,\q_m',(\eps_{a,a'}')_{a,a'\in A}]=\true$.
 Together with~$\pi_t$,
 it follows that~$D[t,\q_1,\dots,\q_m,(\eps_{a,a'})_{a,a'\in A}]=\true$.
\end{proof}

Finally,
we output~\yes{}
if there is an entry~$D[\tau,\q_1^*,\dots,\q_m^*,(\eps_{a,a'}^*)_{a,a'\in A}]=\true$
such that
$\lb_i\leq \q_i^*\leq \ub_i$ for all~$i\in\set{m}$
and~$\eps_{a,a'}^*\geq 0$ for all~$a,a'\in A$.

\begin{proof}[Proof of~\cref{thm:nm:integral}]
 We prove soundness,
 completeness,
 and \XP{} running time of the algorithm presented above.

 \emph{Soundness}:
 Let the algorithm output~$\yes$.
 Let $D[\tau,\q_1^*,\dots,\q_m^*,(\eps_{a,a'}^*)_{a,a'\in A}]=\true$
 such that
 $\lb_i\leq \q_i^*\leq \ub_i$ for all~$i\in\set{m}$
 and~$\eps_{a,a'}^*\geq 0$ for all~$a,a'\in A$.
 Due to~\cref{lem:nm:integral},
 we know that
 there is a \xassignment[$\tau$]~$\pi=(\pi_1,\dots,\pi_\tau)$ that respects \eqref{model:noe}--\eqref{model:vp},
 results in envy-differences~$\sum_{t=1}^\tau (v_a(\pi_{t}(a)) - v_a(\pi_{t}(a')))=\eps_{a,a'}^*$,
 and yields quantities~$\q_1^*,\dots,\q_m^*$ after period~$\tau$.
 Since $\lb_i\leq \q_i^*\leq \ub_i$ for all~$i\in\set{m}$,
 we know that also \eqref{model:tv} is respected,
 and thus,
 $\pi$ is sustainable.
 Since $\eps_{a,a'}^*\geq 0$ for all~$a,a'\in A$,
 $\pi$ is also envy-free.

 \emph{Completeness}:
 Let~$\pi=(\pi_1,\dots,\pi_\tau)$ be a sustainable envy-free \xassignment[$\tau$].
 Let~$\eps_{a,a'}^*\ceq \sum_{t=1}^\tau (v_a(\pi_{t}(a)) - v_a(\pi_{t}(a')))$.
 Note that due to envy-freeness,
 $\eps_{a,a'}^*\geq 0$ for all~$a,a'\in A$.
 Let~$\q_j^*\ceq q_j^{\pi}(\tau+1)$.
 Due to sustainability,
 we know that $\lb_i\leq \q_i^*\leq \ub_i$ for all~$i\in\set{m}$.
 Due to~\cref{lem:nm:integral},
 we know that $D[\tau,\q_1^*,\dots,\q_m^*,(\eps_{a,a'}^*)_{a,a'\in A}]=\true$.
 It follows that
 the algorithm outputs~\yes.

 \emph{Running time}:
 Table~$D$ has~$(\tau+1)\cdot (2\tau+1)^{n^2}\cdot \prod_{j=1}^m (1+\ub_j)$ entries.
For each entry,
we consider at most~$(m+1)^n$ possible assignments
$\pi\colon A\to\rSet\cup\{\emptyset\}$.
Each assignment determines a successor state,
which can be computed and checked in~$\poly(|I|)$ time.
The final requirement can also be checked in~$\poly(|I|)$ time.
Altogether,
this results in a running time of
$(\tau+1)\cdot (2\tau+1)^{n^2}\cdot (m+1)^n
\cdot (1+\max_{j\in\set{m}}\ub_j)^m
\poly(|I|)$.
\end{proof}

\begin{lemma}%
 \label{lem:nm:rescaling}
 Let~$I$ be an instance of \fish{} with unary encoding,
 $n$ agents,
 $m$ resources,
 and
 renewal functions~$\rf_i(x)=\alpha_i\cdot x+\beta_i$
 such that
 $\alpha_i\in\N$ for all~$i\in\set{m}$.
 Then,
 in~$|I|^{O(n+m)}$ time,
 we can construct an equivalent instance~$I'$
 such that
 (i) all agent weights and resource parameters are integral,
 and
 (ii) every numerical parameter of $I'$ has absolute value at most~$|I|^{O(n+m)}$.
\end{lemma}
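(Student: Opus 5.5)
The approach is to scale every non-slope number by one common positive integer $L$ and keep the slopes $\alpha_i$ unchanged. Write every agent weight $w_a$ (already in $\N$), and every $\iq_i,\lb_i,\ub_i,\beta_i$, as a reduced fraction $p/q$. Under unary encoding each denominator $q$ is at most $|I|$. Let $L$ be the least common multiple of the denominators of $\iq_i,\lb_i,\ub_i,\beta_i$ over all $i\in\set{m}$. There are at most $4m$ of these denominators, each bounded by $|I|$, so $L\le |I|^{4m}$. We also have $L\ge 1$, and $L$ can be computed in polynomial time.

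The instance $I'$ is defined as follows:
\[ w'_a\ceq L w_a,\quad \iq'_i\ceq L\iq_i,\quad \lb'_i\ceq L\lb_i,\quad \ub'_i\ceq L\ub_i,\quad \rf'_i(x)\ceq \alpha_i x+L\beta_i. \]
Agents, approvals, resources and $\tau$ stay the same. All new numbers are integers, because $w_a\in\N$, $\alpha_i\in\N$, and $L$ clears every denominator. In absolute value they are at most $L\cdot|I|\le |I|^{O(m)}\subseteq |I|^{O(n+m)}$. Writing them in unary therefore takes $|I|^{O(n+m)}$ time.

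Equivalence comes from a scaling invariant: for every \xassignment[$\tau$] $\pi$, all $i$ and all $t\in\set{\tau+1}$,
\[ \q'^{\pi}_i(t)=L\,\q^{\pi}_i(t)\quad\text{and}\quad h'^{\pi}_i(t)=L\,h^{\pi}_i(t). \]
The harvest identity holds directly because harvests are sums of weights. The quantity identity follows by induction on $t$. The base case is $\iq'_i=L\iq_i$. For the step,
\[ \rf'_i(Lx-Ly)=\alpha_i L(x-y)+L\beta_i=L\rf_i(x-y). \]
This step uses only the affine form, and it is the one point where the argument needs care. It works because the slope multiplies $x$ and we scale the intercept exactly, not the slope. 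Constraints (I)--(IV) are all linear inequalities between quantities, harvests and bounds, each of which scales by the same $L>0$. So $\pi$ is sustainable in $I$ if and only if it is sustainable in $I'$.

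Envy-freeness and the approval restriction depend only on $\pi$ and the unchanged valuations, so they transfer directly. Hence $I$ and $I'$ have the same set of solutions.

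I expect no real obstacle. The only points to check are bookkeeping ones: the bound $L\le|I|^{4m}$ must use the unary bound on each denominator, and the statement's $|I|^{O(n+m)}$ allowance absorbs it. The integral-slope assumption is exactly what keeps $\alpha_i$ integral without rescaling it; with a rational slope, its denominator would compound over periods. Together with the \XP{} algorithm for integral unary instances, this gives the theorem.
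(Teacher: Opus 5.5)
Your proposal is correct and follows the paper's proof essentially verbatim. Both scale all weights, quantities, bounds and intercepts by the least common multiple of the denominators and keep the integral slopes. Both then prove $q'^{\pi}_i(t)=L\,q^{\pi}_i(t)$ by induction, so sustainability and envy-freeness carry over. The only difference is that you leave the already integral agent weights out of the least common multiple, which gives $L\le|I|^{4m}$ instead of the paper's $|I|^{n+4m}$; this is a harmless sharpening.
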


\begin{proof}
 Let~$D$ denote the least common multiple of the denominators of
 $\{w_a\mid a\in A\} \cup \{\iq_i,\lb_i,\ub_i,\beta_i\mid i\in\set{m}\}$.
 Note that~$D\leq |I|^{n+4m}$,
 since due to the unary encoding,
 each of the $n+4m$ denominators is at most~$|I|$.
 So,
 obtain the new instance~$I'$ from~$I$
 by rescaling,
 i.e.,
 set~$w_a'\ceq D\cdot w_a$ for all~$a\in A$,
 and for all~$i\in\set{m}$,
 set
 $\iq_i'=D\cdot\iq_i$,
 $\lb_i'=D\cdot\lb_i$,
 $\ub_i'=D\cdot\ub_i$,
 and~$\beta_i'=D\cdot \beta_i$.
 Leaving the~$\alpha_i$'s untouched,
 we have~$\rf_i'(x)=\alpha_i\cdot x + \beta_i'$.
 Now, all numbers are integral,
 and the maximum absolute value is at most~$|I|^{n+4m+1}$.
 We claim that~$I$ is a \yes-instance if and only if~$I'$ is a \yes-instance.
 Note that either way,
 since the approval graph is the same,
 any assignment is envy-free for~$I$ if and only if it is for~$I'$.
 Hence,
 it remains to discuss sustainability.
 Let~$\pi$ be a \xassignment[$\tau$].
 We have that
 $D\cdot \q^{\pi}_{i}(1) = \q^{\prime\pi}_{i}(1)$.
 For~$t>1$,
 we have
 \begin{align*}
  D\cdot \q^{\pi}_{i}(t) &=
	\alpha_i\cdot\left( D\cdot \q^{\pi}_{i}(t - 1) - D\cdot h_i^\pi(t-1) \right) +D\cdot \beta_i
	\\
	&= \alpha_i\cdot\left( \q^{\prime\pi}_{i}(t - 1) - h_i^{\prime\pi}(t-1) \right) +\beta_i'
	= \q^{\prime\pi}_{i}(t),
 \end{align*}
 where in the second equality we used the fact that
 all weights are rescaled by~$D$.
 Thus,
 due to the rescaled resource-parameters,
 we have
 for every $t\in\set{\tau}$ that
 $h_i^\pi(t)\le \q_i^\pi(t)$ if and only if $h_i^{\prime\pi}(t)\le \q_i^{\prime\pi}(t)$;
 for every $t \in \set{\tau+1}$,
 that $\q_i^{\pi}(t)\leq \ub_i$ if and only if $\q_i^{\prime\pi}(t)\leq \ub_i'$,
 and
 $\q_i^{\pi}(t)\geq \lb_i$ if and only if $\q_i^{\prime\pi}(t)\geq \lb_i'$.
\end{proof}

\begin{proof}[Proof of \cref{thm:nm:integralslope}]
 Let~$I$ be an input instance of \fish{}
 with~$n$ agents,
 $m$ resources,
 and~$\tau$ periods such that all renewal functions have integral slope.
 Apply \cref{lem:nm:rescaling}
 to obtain in~$|I|^{O(n+m)}$ time
 an equivalent instance~$I'$
 such that
 all agent weights and resource parameters are integral,
 and
 every numerical parameter of $I'$ has absolute value at most~$|I|^{O(n+m)}$.
 Now,
 we solve~$I'$ using \cref{thm:nm:integral}.
 Since~$|I'|\in |I|^{O(n+m)}$,
 the algorithm behind \cref{thm:nm:integral} runs in~$|I'|^{f(n,m)}=|I|^{g(n,m)}$ time for some function~$g$ only depending on~$n$ and~$m$.
 Altogether,
 we solved~$I$ in \XP-time regarding~$n+m$.
\end{proof}
}%

\section{Outlook}
Our results raise the question of how much of the complexity
stems from sustainability itself and how much from
the additional requirement of envy-freeness.
One direction is to study sustainability alone or variants with
individual utility guarantees, inspired by satisfaction requirements in multistage voting~\cite{BredereckFK22,ijcai} or maximin objectives in repeated matching~\cite{LimNeohTeh26}.
Such variants could require a minimum number of approved assignments per agent or to maximize the minimum total utility.
Comparing these with our problem would help identify which tractability boundaries depend on the fairness criterion.

\section{Use of generative AI}
We used generative AI (ChatGPT/Claude) for literature research,
proofreading,
checking proofs/references,
and improving the presentation. %
All AI output was checked by the authors.

\bibliography{ref}

\clearpage
\newpage
\appendix
\section{Supplementary Material}
\appendixProofText

\end{document}